\def \beq{\begin{equation}}
\def \eeq{\end{equation}}
\def\supp{\mathop{\rm supp} \nolimits} 
\def\and {{\rm \; and \;}}
\renewcommand{\Im}{{\rm Im\,}}
\renewcommand{\Re}{{\rm Re\,}}
\newcommand{\Green}{{\mathcal G}}
\newcommand{\R}{{\mathbb R}}
\newcommand{\y}{{\bf y}}
\newcommand{\C}{\mathbb{C}}
\newcommand{\x}{{\bf x}}
\newcommand{\A}{{\bf A}}
\DeclareMathOperator{\curl}{curl}
\DeclareMathOperator{\Div}{div}
\DeclareMathOperator{\Tr}{Tr}
\newtheorem{theorem}{Theorem}[section]
\newtheorem{proposition}[theorem]{Proposition}
\newtheorem{corollary}[theorem]{Corollary}
\newtheorem{lemma}[theorem]{Lemma}
\theoremstyle{definition}
\newtheorem{remark}[theorem]{Remark}
\numberwithin{equation}{section}
\begin{document}

\noindent 
\begin{center}
\textbf{\large Sharp trace asymptotics for a class of $2D$-magnetic operators}
\end{center}

\begin{center}
July 29, 2011
\end{center}

\vspace{0.5cm}

\begin{center}
Horia D. Cornean\footnote{Department of Mathematical Sciences, 
    Aalborg
    University, Fredrik Bajers Vej 7G, 9220 Aalborg, Denmark; e-mail:
    cornean@math.aau.dk},
S{\o}ren Fournais\footnote{Department of Mathematical Sciences, University of Aarhus, Ny Munke\-gade,
Building 1530,
DK-8000 Aarhus C, Denmark}${}^,$\footnote{on leave from CNRS and Laboratoire de Math\'{e}matiques
  d'Orsay, Univ Paris-Sud, Orsay CEDEX, F-91405; e-mail: fournais@imf.au.dk},
Rupert L. Frank\footnote{Department of Mathematics, Fine Hall, 
Princeton University
Princeton, NJ 08544 USA; \\ e-mail: rlfrank@math.princeton.edu}, 
Bernard Helffer\footnote{Laboratoire de Math{\'e}matiques, 
Univ Paris Sud et CNRS, B{\^a}timent 425,
91405 Orsay Cedex - France ; e-mail: Bernard.Helffer@math.u-psud.fr}
     
\end{center}

\noindent

\begin{abstract}
In this paper we prove a two-term asymptotic formula for for the spectral counting function for a $2$D magnetic Schr\"{o}dinger operator on a domain (with Dirichlet boundary conditions) in a semiclassical limit and with strong magnetic field. By scaling, this is equivalent to a thermodynamic limit of a $2$D Fermi gas submitted to a constant external magnetic field.

The original motivation comes from a paper by H. Kunz in which he studied, among other things, the boundary correction for the grand-canonical pressure and density of such a Fermi gas. Our main theorem yields a rigorous proof of the formulas 
announced by Kunz. Moreover, the same theorem provides several other results on the integrated density of states for operators of the type $(-ih\nabla- \mu {\bf A})^2$ in $L^2({\Omega})$ with Dirichlet boundary conditions.  
\end{abstract}

\vspace{0.5cm}

\tableofcontents

\section{Introduction and the main results}

\subsection{The setting and the semi-classical results}

For $h>0$ we
consider 
the self-adjoint operator
$$
L_h = (-ih\nabla - \A)^2 \qquad\text{in}\ L^2(\Omega) \,,
$$
where $\Omega\subset\R^2$ is a smooth, bounded and connected 
domain, and $\A$ is a
smooth, 
real vector field defined on $\overline\Omega$. Its curl, $B:=\curl\A$, 
describes a magnetic field (which in the case of a simply-connected 
$\Omega$ determines $\A$ uniquely up to the addition of a gradient
field). Throughout we shall assume that
\begin{equation}
\label{eq:bpos}
\inf_{\x\in\Omega} B(\x) >0 \,.
\end{equation}
Our goal is to study the behavior of the trace $\Tr f(h^{-1} L_h)$ 
in the limit $h\to 0$ for any Schwartz function $f$. Our main result 
will be a two-term asymptotic formula
\begin{equation}\label{eq:asymp}
\Tr f(h^{-1} L_h) = h^{-1} \left( C_0(f) + h^{1/2} C_1(f) + o(h^{1/2}) \right)
\qquad\text{as}\ h\to 0
\end{equation}
with explicit coefficients $C_0(f)$ and $C_1(f)$. The terms 
$C_0(f)$ and $C_1(f)$ are bulk and boundary terms, respectively, 
and will be defined below.

Due to the special structure of $C_0(f)$,  asymptotics
\eqref{eq:asymp} 
will imply two-term asymptotics for the number of eigenvalues of $L_h$ 
less than $E h$, provided the energy $E$ satisfies a certain gap
assumption. 
In the particular case of a constant magnetic field, 
this assumption is satisfied for all energies different from 
the Landau levels. Thus we are looking for a magnetic analog of the 
formula \cite{Iv}
\begin{equation}
 \label{eq:weyl}
N(E h, -h^2\Delta) = (4\pi h)^{-1} \left( |\Omega| \,E - h^{1/2}\, |\partial\Omega|\, E^{1/2} + o(h^{1/2}) \right) \qquad\text{as}\ h\to 0\,,
\end{equation}
where $-\Delta$ is the Dirichlet Laplacian on $\Omega$ and, 
for any self-adjoint operator $A$ with discrete spectrum, 
$N(\lambda,A)$ denotes the number of eigenvalues less than $\lambda$,
counting multiplicities. We emphasize that, in contrast to our result
in 
the magnetic case, \eqref{eq:weyl} requires some global geometric
conditions on $\Omega$. Moreover, it is remarkable that our proof of a
two-term 
asymptotics of the counting function is technically much 
less involved than the proof of \eqref{eq:weyl}.

We emphasize that in \eqref{eq:asymp} we consider $f(h^{-1} L_h)$
instead of $f(L_h)$, that is, we look at eigenvalues of order $h$
instead of order $1$. (For semi-classical results about $\Tr f(P_h)$
and $N(E,P_h)$ for a large class of operators $P_h$ we refer to
\cite{HeRo,DiSj}. These operators $P_h$ are defined on the whole space,
however, so that there are no boundary effects.) The study of $\Tr
f(h^{-1} L_h)$ has applications to mean-field and thermodynamic limits
in physical problems. It appears in \cite{LSY} in connection with the
mathematical justification of mean field theories for a large number
of electrons in strong magnetic fields. Another motivation for us
comes from the paper \cite{Ku} which studies a Fermi gas confined to a
large convex set submitted to a constant magnetic field. Boundary
corrections are derived for the grand-canonical pressure,
magnetization and density in the thermodynamic limit. Our main result
\eqref{eq:asymp} provides a rigorous proof of this derivation and 
removes the convexity assumption. 

We next describe the coefficients entering in \eqref{eq:asymp}. 
The term $C_0(f)$ is a bulk term, in the sense that it is of the form
\begin{equation}
 \label{eq:c0}
C_0(f) := (2\pi)^{-1} \sum_{k=1}^\infty \int_\Omega b_k(B(\x),f) B(\x) \,d\x
\end{equation}
involving the integral over $\Omega$ of the density
\begin{equation}
 \label{eq:c0den}
b_k(B,f) := f((2k-1)B) \,.
\end{equation}
We emphasize that the leading order asymptotics 
$\Tr f(h^{-1} L_h) \sim h^{-1} C_0(f)$ are essentially known. 
Indeed, related formulas have appeared in \cite{CdV,Ta} in the 
case of magnetic fields growing at infinity and in 
\cite{LSY,So1,So2,So3} in the presence of decaying electric potentials. 

Our main result concerns the second term. This is a surface term of the form
\begin{equation}
 \label{eq:c1}
C_1(f) := (2\pi)^{-1} \sum_{k=1}^\infty \int_{\partial\Omega} s_k(B(\x),f) \sqrt{B(\x)} \,d\sigma(\x) \,,
\end{equation}
depending only on the restriction of $B$ to the boundary. Here $d\sigma$ denotes integration with respect to the surface measure. In order to define the density $s_k(B,f)$, we recall that the harmonic oscillator
$$
-\frac{d^2}{dt^2}+t^2 \qquad\text{in}\ L^2(\R) 
$$
has eigenvalues $2k-1$, $k=1,2,\ldots,$ with corresponding normalized eigenfunctions
\begin{equation}\label{modello2}
\phi_k(t):=\frac{1}{2^{k/2} \sqrt{k!}}(1/\pi)^{\frac{1}{4}}H_k(t)e^{-\frac{t^2}{2}},
\end{equation}
where $H_k$ is the $k$-th Hermite polynomial. For $\xi\in\R$ we consider the Dirichlet realization of the operator
\begin{equation}\label{modello}
-\frac{d^2}{dt^2}+(\xi+t)^2 \qquad\text{in}\ L^2(\R_+) \,.
\end{equation}
This operator has discrete spectrum with simple eigenvalues $e_k(\xi)$, $k=1,2,\ldots$. Let $\psi_k(\cdot,\xi)$ denote corresponding normalized eigenfunctions chosen to be positive in a neighborhood of $+\infty$. (This is possible since by the Sturm oscillation theorem eigenfunctions have only a finite number of zeros.) We put
\begin{equation}
 \label{eq:c1den}
s_k(B,f) := \int_0^\infty  \left( \int_\R f(B e_k(\xi)) |\psi_k(t,\xi)|^2 d\xi - f(B(2k-1)) \right) \, dt \,.
\end{equation}
Here and in the following, by a Schwartz function on $[0,\infty)$ we
mean the restriction of a Schwartz function on $\R$ to $[0,\infty)$. 
We are now in position to state our main result. The coefficient $C_1(f)$ is well-defined according to the following

\begin{lemma}\label{salternative}
 Let $f$ be a Schwartz function on $[0,\infty)$. For $B>0$ the sum $\sum_{k=1}^\infty s_k(B,f)$ converges absolutely and is continuously differentiable with respect to $B>0$. Moreover, for any $k=1,2,\ldots,$
\begin{equation}
 \label{eq:salternative}
s_k(B,f) = \int_\R  \left( f(B e_k(\xi)) - f(B(2k-1)) \int_\xi^\infty |\phi_k(t)|^2 \,dt \right) \, d\xi \,.
\end{equation}
\end{lemma}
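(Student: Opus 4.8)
The plan is to prove the integral identity \eqref{eq:salternative} first, and then use it to read off the convergence and smoothness statements. For the identity, I would start from the definition \eqref{eq:c1den} and try to perform the $t$-integration in the first term. The key input is a ``sum rule'' or ``Dupuis–Kato''-type identity relating the operator in \eqref{modello} to the full-line harmonic oscillator: writing $h_\xi := -d^2/dt^2 + (\xi+t)^2$ on $L^2(\R_+)$ with Dirichlet condition, one has, for the free problem on all of $\R$, that translation by $\xi$ conjugates $-d^2/dt^2 + (\xi+t)^2$ to $-d^2/dt^2 + t^2$, whose eigenfunctions are the $\phi_k$. I would insert a spectral resolution and use Fubini to rewrite $\int_0^\infty \int_\R f(Be_k(\xi))|\psi_k(t,\xi)|^2\,d\xi\,dt$; the subtracted term $f(B(2k-1))$ should be matched against the analogous quantity for the harmonic oscillator restricted to $\R_+$, namely the ``missing'' tail $\int_\xi^\infty |\phi_k(t)|^2\,dt$ after the substitution $t\mapsto t-\xi$. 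The regularization by the subtraction is what makes both the $t$- and $\xi$-integrals converge, so the manipulation of the order of integration has to be justified by an absolutely-convergent majorant, which is where the decay of $f$ enters.

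Concretely, I expect the computation to hinge on two facts about $e_k(\xi)$ and $\psi_k(\cdot,\xi)$ as $\xi\to\pm\infty$: first, as $\xi\to-\infty$ the operator \eqref{modello} ``sees'' the Dirichlet wall far to the left of the well, so $e_k(\xi)\to 2k-1$ exponentially and $\psi_k(\cdot,\xi)$ converges (after translation) to $\phi_k$; second, as $\xi\to+\infty$ the well is pushed out of $\R_+$, the Dirichlet condition dominates, and $e_k(\xi)\to+\infty$, so $f(Be_k(\xi))$ decays faster than any power. These are standard facts about the de Gennes / Montgomery-type model operator \eqref{modello}, and I would either cite them or sketch them via Sturm–Liouville comparison. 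Given \eqref{eq:salternative}, absolute convergence of $\sum_k s_k(B,f)$ follows by estimating the two pieces separately: $\int_\R |f(Be_k(\xi))|\,d\xi$ is controlled using $e_k(\xi)\ge \max(2k-1,\ \text{const}\cdot\xi_+^2)$ together with rapid decay of $f$, while $\int_\R |f(B(2k-1))| \int_\xi^\infty|\phi_k(t)|^2\,dt\,d\xi = |f(B(2k-1))| \int_\R t|\phi_k(t)|^2\,dt$ after integration by parts, and $\int_\R t|\phi_k(t)|^2\,dt$ grows only polynomially in $k$ (like $\sqrt{k}$), which is beaten by $|f(B(2k-1))|$.

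For the $C^1$-dependence on $B$, I would differentiate \eqref{eq:salternative} formally under the integral and the sum, obtaining terms with $e_k(\xi)f'(Be_k(\xi))$ and $(2k-1)f'(B(2k-1))$, and then justify the interchange by producing a locally uniform (in $B$ on compact subsets of $(0,\infty)$) integrable/summable majorant, using exactly the same decay estimates as above with $f$ replaced by the Schwartz function $x\mapsto xf'(x)$. The continuity of the derivative then follows from dominated convergence. The main obstacle, and the step I would spend the most care on, is the rigorous justification of the Fubini interchange and the identification of the $\xi\to-\infty$ limiting contribution in deriving \eqref{eq:salternative}: one has to check that $\int_0^\infty\big(\,|\psi_k(t,\xi)|^2 - \mathbf{1}_{[?]}\,|\phi_k(t-\xi)|^2\big)\,dt$ behaves well enough in $\xi$, i.e. that the pointwise-in-$\xi$ cancellation of the two $t$-integrals is strong enough to be integrated against $f(Be_k(\xi))$; establishing the needed $\xi$-decay of $\|\psi_k(\cdot,\xi)-\phi_k(\cdot-\xi)\|$ via Agmon-type exponential estimates for \eqref{modello} is the technical heart of the argument.
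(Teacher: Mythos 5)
Your high-level plan coincides with the paper's: reduce to $B=1$, use $\int_\R|\phi_k(t+\xi)|^2\,d\xi=1$ to write $s_k$ in the symmetric form with $f(2k-1)|\phi_k(t+\xi)|^2$ subtracted off pointwise, invoke the asymptotics of $e_k(\xi)$ and $\psi_k(\cdot,\xi)$ from the one-dimensional de Gennes model together with Agmon-type exponential bounds (these are exactly the contents of Lemma~\ref{1.4bis}), and deduce \eqref{eq:salternative} from Fubini once absolute convergence is secured. The $C^1$ statement is also handled in the same way, by repeating the bounds with $f$ replaced by $x\mapsto xf'(x)$.

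However, there is a genuine error in the step where you establish summability in $k$. You propose to bound the two terms of \eqref{eq:salternative} separately, estimating $\int_\R|f(Be_k(\xi))|\,d\xi$ and $\int_\R|f(B(2k-1))|\int_\xi^\infty|\phi_k(t)|^2\,dt\,d\xi$ individually and claiming the latter equals $|f(B(2k-1))|\int_\R t|\phi_k(t)|^2\,dt$ ``after integration by parts.'' Neither piece is finite on its own: as $\xi\to-\infty$ one has $e_k(\xi)\to 2k-1$, so $f(Be_k(\xi))\to f(B(2k-1))\neq 0$, and $\int_\xi^\infty|\phi_k(t)|^2\,dt\to 1$; hence both $\xi$-integrals diverge, and the boundary term in your integration by parts is infinite. (Moreover $\int_\R t|\phi_k(t)|^2\,dt=0$ by parity of $\phi_k$, not $\sim\sqrt k$.) The integral \eqref{eq:salternative} converges only because of the cancellation between the two terms as $\xi\to-\infty$, so the separate estimation cannot work. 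The paper's proof tracks this cancellation explicitly: for $\xi<-C\sqrt k$ it rewrites the integrand as
\begin{equation*}
\bigl(f(e_k(\xi))-f(2k-1)\bigr)|\psi_k(t,\xi)|^2
+f(2k-1)\bigl(\psi_k(t,\xi)+\phi_k(t+\xi)\bigr)\bigl(\psi_k(t,\xi)-\phi_k(t+\xi)\bigr)
\end{equation*}
and then uses the exponential bounds $|e_k(\xi)-(2k-1)|\le Ce^{-\alpha\xi^2}$ and $|\psi_k(t,\xi)-\phi_k(t+\xi)|\le Ce^{-\alpha(t^2+\xi^2)}$ from Lemma~\ref{1.4bis}, while the regions $\xi\geq -C\sqrt k$ are treated by the lower bounds $e_k(\xi)\ge 2k-1$ and $e_k(\xi)\ge(4k-1)+\xi^2$ (for $\xi\geq 0$) and the rapid decay of $f$. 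Your argument needs this region-by-region decomposition and cancellation structure rather than separate absolute bounds on the two summands.
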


And here is the main theorem:

\begin{theorem}\label{theorem2}
 Let $\Omega\subset\R^2$ be a bounded and connected domain with a boundary $\partial\Omega$ assumed to be a finite union of disjoint, regular, simple and closed $C^2$-curves. Let $\A\in C^3(\overline\Omega,\R^2)$ and assume that $B:=\curl\A$ satisfies \eqref{eq:bpos}. Then the asymptotics \eqref{eq:asymp} are valid for any Schwartz function $f$ on $[0,\infty)$ and with $C_0(f)$ and $C_1(f)$ defined by \eqref{eq:c0}, \eqref{eq:c0den} and \eqref{eq:c1}, \eqref{eq:c1den}, respectively.
\end{theorem}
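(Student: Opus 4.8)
The plan is to reduce the global trace asymptotics to a sum of local contributions via a partition of unity, treating separately the bulk cells (interior patches) and the boundary cells (patches meeting $\partial\Omega$), and then matching the local models. First I would fix a Schwartz function $f$ and, since $f(h^{-1}L_h)$ can be approximated in trace norm by a smooth compactly supported approximation plus a tail that is controlled by the exponential localization of the magnetic resolvent, reduce to the case where $f\in C_c^\infty([0,\infty))$. Next I would introduce an $h$-dependent partition of unity $1=\sum_j \chi_{j,h}^2$ on $\overline\Omega$ with cells of linear size $\sim h^{1/2-\delta}$ (so that the magnetic flux through a cell is $\gg 1$ but the field $B$ is nearly constant on each cell), separating the cells into those whose support stays at distance $\gtrsim h^{1/2-\delta}$ from $\partial\Omega$ (\emph{bulk cells}) and those touching the boundary (\emph{boundary cells}). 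Using the IMS-type localization formula together with the standard magnetic Agmon/Combes--Thomas estimates for $f(h^{-1}L_h)$ (finite propagation/decay of the kernel on scale $h^{1/2}$), one shows $\Tr f(h^{-1}L_h)=\sum_j \Tr\bigl(\chi_{j,h} f(h^{-1}L_h)\chi_{j,h}\bigr)+O(h^\infty)$ and that each summand may be replaced, up to an error that is $o(h^{1/2})$ after summation, by the trace of $\chi_{j,h} f(h^{-1} L_{j,h})\chi_{j,h}$ where $L_{j,h}$ is a \emph{model operator} with constant field $B_j:=B(\x_j)$ (and, for boundary cells, a flattened boundary with a Dirichlet condition on a half-plane).

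For the bulk cells, after a gauge transformation and rescaling $\x\mapsto h^{-1/2}\x$ the model operator becomes $B_j$ times the Landau Hamiltonian on $\R^2$, whose spectrum consists of the Landau levels $(2k-1)B_j$ with density of states $(2\pi h)^{-1} B_j$ per level; hence $\Tr\bigl(\chi_{j,h} f(h^{-1}L_{j,h})\chi_{j,h}\bigr)= (2\pi h)^{-1} B_j \sum_k f((2k-1)B_j)\int \chi_{j,h}^2 + O(h^\infty)$. Summing over the bulk cells and passing to the Riemann-sum limit reproduces exactly $h^{-1}C_0(f)$ with density $b_k(B,f)=f((2k-1)B)$ as in \eqref{eq:c0}--\eqref{eq:c0den}; the boundary cells contribute to the bulk term an amount of order $h^{-1}\cdot h^{1/2-\delta}=O(h^{-1/2-\delta})$, i.e. they are responsible for the $h^{1/2}$-correction. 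For a boundary cell, straightening the boundary (a $C^2$ change of variables, which is licit since $\partial\Omega$ is a union of $C^2$ curves) and rescaling by $h^{1/2}$ turns the model into $B_j$ times the Dirichlet realization of $(-i\nabla - \A_0)^2$ on a half-plane with constant field, which after a partial Fourier transform in the boundary variable decouples into the fiber operators $-d^2/dt^2 + (\xi+t)^2$ on $L^2(\R_+)$ of \eqref{modello}; the trace of $f$ of this fiber family, \emph{minus} the bulk piece already counted, is precisely $s_k(B_j,f)$ as in \eqref{eq:c1den}. Multiplying by the length density $\sqrt{B_j}$ (the Jacobian of the rescaling along the boundary together with the transverse scale) and summing over boundary cells along $\partial\Omega$ gives the Riemann sum converging to $h^{-1}\cdot h^{1/2} C_1(f)$ with $C_1(f)$ as in \eqref{eq:c1}; absolute convergence of $\sum_k s_k(B_j,f)$ and the needed regularity in $B$ are exactly Lemma 1.2. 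The curvature of $\partial\Omega$ and the variation of $B$ transverse to the boundary enter only at order $h^{1/2}$ relative to the boundary term, i.e. at order $o(h^{1/2})$ overall, and are absorbed into the error.

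The main obstacle I anticipate is the boundary analysis: controlling the error in replacing $L_h$ near $\partial\Omega$ by the flat, constant-field model uniformly along the boundary and uniformly in the (unbounded) Fourier parameter $\xi$. One must show that the fiber eigenvalues $e_k(\xi)$ converge to the Landau level $2k-1$ fast enough as $\xi\to+\infty$, with matching decay of $|\psi_k(\cdot,\xi)|^2$ away from the boundary, so that the $t$-integral in \eqref{eq:c1den} (equivalently the $\xi$-integral in \eqref{eq:salternative}) converges and the remainder estimates in the localization can be summed; this is where Lemma 1.2 and sharp decay/Tauberian estimates for the one-dimensional model \eqref{modello} do the crucial work. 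A secondary technical point is arranging the two length scales — the $h^{1/2}$ magnetic scale and the $h^{1/2-\delta}$ partition scale — so that the commutator (localization) errors, of relative size $h^\delta$, and the field-freezing errors, of relative size $h^{1/2-\delta}$ times a derivative of $B$, are both $o(h^{1/2})$ after being multiplied by the number $O(h^{-1/2+\delta})$ of boundary cells; choosing $\delta$ small but positive does this, but the bookkeeping must be done carefully, in particular checking that the interior Landau contribution of each boundary cell is correctly subtracted so that only $s_k$ survives.
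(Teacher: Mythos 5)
Your proposal captures the paper's overall strategy: partition at an intermediate scale, replace the operator in each cell by a constant-field model (whole plane in the interior, Dirichlet half-plane near $\partial\Omega$), compute the model traces explicitly (Landau levels for the bulk, the fibered family \eqref{modello} and the density $s_k$ for the boundary), and pass to Riemann sums, with Lemma \ref{salternative} controlling the $k$-sum; your anticipation of the two main obstacles (uniformity in the Fourier parameter $\xi$ and the bookkeeping of scales) is accurate. The paper's execution differs in ways worth noting. It first rescales to $\Omega_h=h^{-1/2}\Omega$ so that the field $B_h(\x)=B(h^{1/2}\x)$ is slowly varying at unit magnetic scale; it uses a linear partition $\sum_\gamma g_\gamma=1$ with nested cutoffs $\tilde g_\gamma,\tilde{\tilde g}_\gamma$ and a Helffer--Sj\"ostrand parametrix (Lemma \ref{lemma2}) rather than direct IMS-type trace estimates; and, most substantively, it separates the step you merge into one. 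The paper's \emph{localization} step (Lemma \ref{lemma2}) replaces $H_h$ by model operators $H_h^{(\gamma)}$ that coincide \emph{exactly} with $H_h$ on the inner cutoffs, so the error is $O(\ell^{-N})$ via Combes--Thomas; only afterwards does a separate \emph{freezing/flattening} step (Propositions \ref{bulk}, \ref{bdry}, \ref{bdry2} and their corollaries) compare those local operators with constant-field flat half-plane models by resolvent perturbation. In your version the constant-field flat-boundary $L_{j,h}$ agrees with $L_h$ nowhere, so the replacement error gains no smallness from propagation estimates alone and must be carried out entirely by perturbation arguments on a bounded domain, which is harder to make rigorous; the paper's two-stage structure is what keeps the two error mechanisms (exponential localization vs.\ polynomial freezing) disentangled. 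Two small further points: the preliminary reduction to $f\in C_c^\infty$ is unnecessary — the Helffer--Sj\"ostrand calculus handles Schwartz $f$ directly, and a uniform-in-$h$ reduction would presuppose trace bounds of the kind being proved — and the identity $\Tr f(H)=\sum_j\Tr(\chi_j f(H)\chi_j)$ is exact given $\sum\chi_j^2=1$, not up to $O(h^\infty)$.
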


In order to state the promised result about the eigenvalue counting function, we recall that $N(\lambda,L_h)$ denotes the number of eigenvalues of $L_h$ less than $\lambda$, counting multiplicities.

\begin{corollary}\label{theorem1}
Let $\Omega$ and $\A$ be as in Theorem \ref{theorem2} and assume that for some positive integer $K$
$$
(2K-1)B_{\rm max}<(2K+1)B_{\rm min}\,,
$$
where $B_{\rm min} := \inf_{\x\in\Omega} B(\x)$ and $B_{\rm max} := \sup_{\x\in\Omega} B(\x)$. Then, for any $E\in ((2K-1)B_{\rm max},(2K+1)B_{\rm min})$,
\begin{align}
 \label{eq:counting}
N(Eh,L_h) = & \frac1{2\pi h} \left( K \int_{\Omega}B(\x) \,d\x \right. \\
& - h^{1/2} \sum_{k=1}^K \int_{\partial\Omega}
\int_0^\infty  \left( 1- \int_{\{\xi:\ B(\x) e_k(\xi)<E\}} |\psi_k(t,\xi)|^2 d\xi \right) \,dt
\sqrt{B(\x)} \,d\sigma(\x) \notag \\
&  + o(h^{1/2}) \Big) \notag
\end{align}
as $h\to 0$. In particular, if $\curl\A=B$ is constant, then \eqref{eq:counting} holds for any $E\in\R\setminus\{B(2k-1):\ k=1,2,\ldots\}$.
\end{corollary}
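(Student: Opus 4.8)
The plan is to deduce Corollary~\ref{theorem1} from Theorem~\ref{theorem2} by choosing $f$ to be a smooth approximation of the characteristic function $\mathbf 1_{(-\infty,E)}$ and then controlling the error terms using the spectral gap hypothesis. More precisely, fix $E\in((2K-1)B_{\rm max},(2K+1)B_{\rm min})$. Since $B_{\rm min}\le B(\x)\le B_{\rm max}$ on $\overline\Omega$, the numbers $(2k-1)B(\x)$ for $k\le K$ all lie in $[B_{\rm min},(2K-1)B_{\rm max}]$, which is $<E$, while $(2k-1)B(\x)$ for $k\ge K+1$ all lie in $[(2K+1)B_{\rm min},\infty)$, which is $>E$. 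Hence there are $\delta>0$ and an interval $I=(E-\delta,E+\delta)$ such that $I$ avoids all the Landau-type values $(2k-1)B(\x)$, $k\ge 1$, $\x\in\overline\Omega$. Choose Schwartz functions $f_\pm$ on $[0,\infty)$ with $f_-\le \mathbf 1_{(-\infty,E)}\le f_+$, with $f_\pm=1$ on $[0,E-\delta]$ and $f_\pm=0$ on $[E+\delta,\infty)$, and such that $f_+ - f_-$ is supported in $I$ with $\int (f_+-f_-)\le\varepsilon$. Then $N(Eh,L_h)=\Tr\mathbf 1_{(-\infty,E)}(h^{-1}L_h)$ is sandwiched between $\Tr f_-(h^{-1}L_h)$ and $\Tr f_+(h^{-1}L_h)$, to each of which Theorem~\ref{theorem2} applies.

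Next I would evaluate $C_0(f_\pm)$ and $C_1(f_\pm)$ for such $f_\pm$. For the bulk term, by \eqref{eq:c0den} we have $b_k(B(\x),f_\pm)=f_\pm((2k-1)B(\x))$, which equals $1$ for $k\le K$ and $0$ for $k\ge K+1$ by the gap discussion above, uniformly in $\x$; hence $C_0(f_\pm)=(2\pi)^{-1}K\int_\Omega B(\x)\,d\x$ exactly, independent of the choice of $f_\pm$. For the surface term, I use the representation \eqref{eq:c1den}: since $e_k(\xi)\ge 2k-1$ for all $\xi$ (the Dirichlet realization on $\R_+$ dominates the oscillator on $\R$) and $e_k(\xi)\to 2k-1$ as $\xi\to-\infty$, $e_k(\xi)\to\infty$ as $\xi\to+\infty$, the only contributions to $s_k$ come from $k\le K$: for $k\ge K+1$ one has $e_k(\xi)\ge 2k-1\ge 2K+1>E/B_{\rm min}\ge E/B(\x)$, so $f_\pm(B(\x)e_k(\xi))=0=f_\pm(B(\x)(2k-1))$ and $s_k(B(\x),f_\pm)=0$. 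For $k\le K$, the inner integral $\int_\R f_\pm(B e_k(\xi))|\psi_k(t,\xi)|^2\,d\xi$ differs from $\int_{\{\xi:\ B e_k(\xi)<E\}}|\psi_k(t,\xi)|^2\,d\xi$ only by a term supported where $B e_k(\xi)\in I$, i.e. on a $\xi$-set whose size is controlled by $\delta$; since $\int_\R|\psi_k(t,\xi)|^2\,d\xi$ is integrable in $t$ (this is exactly the content underlying Lemma~\ref{salternative}), the total discrepancy in $C_1(f_\pm)$ is $O(\varepsilon)$ uniformly, and $f_\pm(B(\x)(2k-1))=1$. This identifies the main surface term as the one displayed in \eqref{eq:counting}, with the $1$ coming from $f_\pm(B(2k-1))=1$ and the subtracted integral from the inner term.

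Combining, Theorem~\ref{theorem2} gives, for each fixed $\varepsilon$,
\[
\limsup_{h\to0}\; h^{1/2}\Bigl( h\,N(Eh,L_h) - \tfrac{1}{2\pi}K\!\int_\Omega B \Bigr) \le C_1(f_+) \le C_1^{*} + O(\varepsilon),
\]
and similarly $\liminf \ge C_1(f_-) \ge C_1^{*} - O(\varepsilon)$, where $C_1^{*}$ denotes the right-hand surface coefficient in \eqref{eq:counting}; letting $\varepsilon\to 0$ yields the claimed two-term asymptotics. Finally, when $B$ is constant, $B_{\rm min}=B_{\rm max}=B$, so the gap condition $(2K-1)B<(2K+1)B$ holds automatically for every $K$, and every $E\notin\{(2k-1)B:k\ge1\}$ lies in some admissible interval; this gives the last assertion. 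The main obstacle is the uniform control of the surface term: one must verify that the function $t\mapsto \int_\R \mathbf 1_{\{B e_k(\xi)\in I\}}|\psi_k(t,\xi)|^2\,d\xi$ has integral over $t\in(0,\infty)$ that is $O(\delta)$ uniformly, which requires the decay estimates on $\psi_k(t,\xi)$ (equivalently the convergence statement in Lemma~\ref{salternative}) together with the fact that $e_k$ is strictly monotone in $\xi$, so that $\{\xi: B e_k(\xi)\in I\}$ is a genuinely short interval; the boundary measure factor $\sqrt{B(\x)}$ and $C^2$-regularity of $\partial\Omega$ then make the $d\sigma$-integration harmless.
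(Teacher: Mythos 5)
Your proposal is correct and follows essentially the same route as the paper: sandwich $N(Eh,L_h)$ between $\Tr f_\pm(h^{-1}L_h)$ for smooth approximations $f_\pm$ of the indicator, apply Theorem \ref{theorem2} (the gap hypothesis making $C_0(f_\pm)$ exactly $(2\pi)^{-1}K\int_\Omega B$ and killing $s_k$ for $k>K$), and control the surface-term discrepancy by the smallness of $|\{\xi:\ B(\x)e_k(\xi)\in(E-\delta,E+\delta)\}|$ uniformly in $\x\in\partial\Omega$, which the paper obtains from Hadamard's formula \eqref{hadamard} for $e_k'$. One small correction to your last paragraph: since $\int_0^\infty|\psi_k(t,\xi)|^2\,dt=1$, Fubini shows the error term equals exactly that measure, so no decay estimates on $\psi_k$ are needed at this step; your parenthetical claim that $\int_\R|\psi_k(t,\xi)|^2\,d\xi$ is integrable in $t$ is false (it tends to $1$ as $t\to\infty$) but also unnecessary.
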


We emphasize again that the crucial point in Theorem \ref{theorem2} and Corollary \ref{theorem1} is the existence of a second term in the asymptotic expansion. It is remarkable that a second term exists without any assumption on periodic orbits, in contrast to the case $B=0$ \cite{Iv}. The fact that this second term is a boundary contribution involving a one-dimensional auxiliary problem is a typical feature of two-term asymptotics; see, e.g., \cite{SaVa}.
In \cite{Ku} it is shown, using the alternative form of $s_k(B,f)$ from Lemma \ref{salternative} and the behavior of zeroes of special functions, that for any $B>0$, the sum
$$
\sum_{k=1}^K \int_0^\infty  \left( 1- \int_{\{\xi:\ B e_k(\xi)<E\}} |\psi_k(t,\xi)|^2 d\xi \right) \,dt
$$
is a non-negative, decreasing function of $E\in ((2K-1)B,(2K+1)B)$. It diverges to $+\infty$ as $E\searrow B(2K-1)$ and vanishes as $E\nearrow B(2K+1)$. This means that asymptotically the effect of the Dirichlet boundary is to reduce the number of eigenvalues, as compared to a bulk prediction. This is, of course, consistent with the non-asymptotic result from \cite{FLW}, which states that $N(Eh,L_h) \leq (2\pi h)^{-1} K B |\Omega|$ for any $E\in ((2K-1)B,(2K+1)B]$ and any $h>0$, provided the magnetic field $B$ is constant and the domain $\Omega$ is tiling. Whether this extends to more general $B$ and $\Omega$ is an open problem. The corresponding problem for $B=0$ is the famous P\'olya conjecture.

Results similar to Theorem \ref{theorem2} and Corollary \ref{theorem1} are derived in \cite{Fra2,FoKa} in the 
case of Neumann instead of Dirichlet boundary conditions and 
for $f$ supported in $[0,B]$ (or $E<B$ in \eqref{eq:counting}). In
that case, 
however, $C_0(f)=0$, which makes the analysis considerably easier. 

A related interesting problem is the one in which the comparison
operator lives on a strip of finite width \cite{BHRS,BRS}. Enlarging
the discussion, there exist some works dealing with spectral asymptotics
for magnetic operators defined in the exterior of bounded domains,
both with Dirichlet and Neumann boundary conditions (see \cite{Push,Persson} and
references therein). For a heuristic approach to sharp asymptotics for magnetic systems we
refer the reader to \cite{HorSmil}. 

\subsection{The thermodynamic limit}

In the case of a homogeneous magnetic field, the semi-classical limit 
is equivalent to the thermodynamic limit. We will now show how to
recover and 
extend the results of \cite{Ku} as corollaries of Theorem
\ref{theorem2}. 
For a set $\Omega$ as in the previous subsection we define
\begin{equation}\label{prima9}
\Omega_h:=h^{-1/2}\, \Omega = \{\x\in\R^2:\; h^{1/2}\x\in \Omega\} \,.
\end{equation}
We fix a constant $B>0$, put $\A_0(\x) := \frac{B}{2}(-x_2,x_1)$ and consider for any $L>0$ the operator
$$
H^L:=(-i\nabla -\A_0)^2 \qquad\text{in}\ L^2(\Omega_{L^{-2}})
$$
with Dirichlet boundary conditions. Note that via the scaling
$\x\mapsto L^{-1}\x$, 
the operator $H^L$ is unitary equivalent to $h^{-1} L_h$ in
$L^2(\Omega)$ with 
$h=L^{-2}$. Hence Theorem \ref{theorem2} implies

\begin{corollary}
 \label{limterm}
Let $\Omega$ be as in Theorem \ref{theorem2}, let $\A_0(\x) = \frac{B}{2}(-x_2,x_1)$ for some $B>0$ and let $f$ be a Schwartz function on $[0,\infty)$. Then
\begin{align}\label{azecea3}
\frac{1}{|\Omega_{L^{-2}}|}\; \Tr f(H^L)
= \frac{B}{2\pi} \sum_{k= 1}^\infty b_k(B,f)
+ L^{-1} \ \frac{\sqrt{B} \ |\partial \Omega|}{2\pi \ |\Omega|} 
\sum_{k= 1}^\infty s_k(B,f) + o(L^{-1})
\end{align}
as $L\rightarrow +\infty$.
\end{corollary}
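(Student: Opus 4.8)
The plan is to obtain Corollary \ref{limterm} as a direct consequence of Theorem \ref{theorem2} by making rigorous the scaling announced just before its statement. First I would introduce the unitary dilation $U\colon L^2(\Omega_{L^{-2}})\to L^2(\Omega)$ given by $(U\psi)(\x)=L\,\psi(L\x)$; since $\Omega_{L^{-2}}=L\,\Omega$ in two dimensions, the Jacobian factor makes $U$ unitary, and $U$ carries the Dirichlet form domain $H^1_0(\Omega_{L^{-2}})$ onto $H^1_0(\Omega)$. Because $\A_0$ is linear we have $\A_0(L\x)=L\,\A_0(\x)$, so a short computation gives $U(-i\nabla_\y-\A_0(\y))U^*=L^{-1}\bigl(-i\nabla_\x-L^2\A_0(\x)\bigr)$ and hence $U H^L U^*=L^{-2}\bigl(-i\nabla_\x-L^2\A_0(\x)\bigr)^2$ with Dirichlet conditions on $L^2(\Omega)$. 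On the other hand $h^{-1}L_h=h^{-1}(-ih\nabla-\A)^2=h\,(-i\nabla-h^{-1}\A)^2$, so with $h=L^{-2}$ and $\A=\A_0$ this equals $L^{-2}(-i\nabla-L^2\A_0)^2=U H^L U^*$. Consequently $\Tr f(H^L)=\Tr f(h^{-1}L_h)$ for every Schwartz function $f$. Since $\A_0\in C^3(\overline\Omega,\R^2)$ and $\curl\A_0\equiv B>0$ satisfies \eqref{eq:bpos}, Theorem \ref{theorem2} applies with this $\A$ and this $h$.

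Next I would specialize the asymptotics \eqref{eq:asymp} to the constant field. Because $B(\x)\equiv B$, the coefficients collapse: by \eqref{eq:c0}--\eqref{eq:c0den}, $C_0(f)=(2\pi)^{-1}B\,|\Omega|\sum_{k\ge1}b_k(B,f)$, and by \eqref{eq:c1}--\eqref{eq:c1den}, $C_1(f)=(2\pi)^{-1}\sqrt{B}\,|\partial\Omega|\sum_{k\ge1}s_k(B,f)$; both series converge absolutely, the first because $f$ is Schwartz and $b_k(B,f)=f((2k-1)B)$, the second by Lemma \ref{salternative}. Substituting $h=L^{-2}$, so that $h^{-1}=L^2$ and $h^{1/2}=L^{-1}$, into \eqref{eq:asymp} yields $\Tr f(H^L)=L^2C_0(f)+L\,C_1(f)+o(L)$ as $L\to\infty$. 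Finally, dividing by $|\Omega_{L^{-2}}|=L^2|\Omega|$ and inserting the explicit forms of $C_0(f)$ and $C_1(f)$ gives exactly \eqref{azecea3}, the remainder $o(L)/(L^2|\Omega|)$ being $o(L^{-1})$.

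I do not expect a genuine obstacle here: once the conjugation identity $UH^LU^*=h^{-1}L_h\big|_{h=L^{-2}}$ is verified, the rest is bookkeeping — the constant-field simplification of $C_0$ and $C_1$, the volume factor $L^2|\Omega|$, and the conversion of $o(h^{1/2})$ into $o(L^{-1})$. The only points deserving a line of care are that the dilation $U$ really preserves the Dirichlet form domain, and that the gauge $\A_0$ occurring in $H^L$ is precisely the one whose curl is the constant $B$ entering $C_0(f)$ and $C_1(f)$, so that no further gauge transformation is needed; the scaling sends $\A_0$ to $L^2\A_0=h^{-1}\A_0$ on the nose, which is exactly why the computation closes.
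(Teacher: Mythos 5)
Your proof is correct and is exactly the scaling argument the paper invokes (the paper states the unitary equivalence $H^L \cong h^{-1}L_h$ with $h = L^{-2}$ in one line and leaves the rest as "hence Theorem \ref{theorem2} implies"); you simply spell out the conjugation, the constant-field simplification of $C_0$ and $C_1$, and the bookkeeping with $|\Omega_{L^{-2}}| = L^2|\Omega|$. No gaps.
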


Note that this corollary is, in particular, valid for $f(E)=\left(e^{\beta(E-\mu)}+1\right)^{-1}$ and 
$f(E)=\ln(1+e^{-\beta(E-\mu)})$ for $\beta>0$ and $\mu\in\R$. These choices correspond to the Fermi-Dirac distribution and the grand-canonical pressure at chemical potential $\mu$ and inverse temperature $\beta$, respectively, which were considered in \cite{Ku}. In view of \eqref{eq:salternative}, Corollary \ref{limterm} provides a proof the formulas derived in \cite{Ku}.


\subsection{Outline of the paper}

Our method to deal with the boundary contribution is reminiscent of
the techniques developed in \cite{HelMo1,HelMo2,FoHel1, FHBook} in the
mathematical study of surface superconductivity.

As in many other problems dealing with extended magnetic fields, gauge
invariance and magnetic perturbation theory plays a crucial role (see
\cite{Cor0}-\cite{CN3}, \cite{He1}-\cite{HeSj} and references
therein).

Here is a short description of the content of this paper: 
\begin{itemize}

\item The proof of our main result, Theorem \ref{theorem2}, is given
  in Section \ref{sectiuneadoi}. The main idea is to write the trace
  involving the original operator as a sum of traces involving model
  operators plus an error of order $h^\infty$. These model operators
  are either defined on the whole plane, or on slightly perturbed
  half-planes. The $h$-asymptotics for the model
  operators are given in the previous three sections as independent results.     

\item In Section \ref{wholespace} we consider operators defined on the
  whole plane. The main result here is Proposition \ref{bulk}, in which
  we give a sharp $h$-expansion of the diagonal of the integral kernel of
  $f(H)$. 

\item In Sections \ref{halfspace} and \ref{phalfspace} we treat the local trace
  $h$-asymptotics for operators defined with Dirichlet boundary
  conditions on the half-plane and on a perturbed half-plane, respectively.
   and when we restrict them near the boundary. The main
  technical results are Corollaries \ref{bdrycor} and \ref{bdry2cor} which, unlike
  Proposition \ref{bulk}, are not formulated for integral kernels but
  for local traces. It would be interesting to
  prove a variant of these corollaries in terms of integral kernels. 
 
\item Section~\ref{sectiuneasase} provides the proof of our results on the
integrated density of states when the extra gap-condition is
satisfied. 

\item In the end we have three appendices containing various more or less
known results, adapted and formulated according to our needs. 

\end{itemize}

\section{Local asymptotics for operators on the whole plane}\label{wholespace}

In this section we consider functions of the operator $(-i\nabla -
\A)^2$ on $L^2(\R^2)$. From Appendix \ref{apendixD} we know that 
$f((-i\nabla - \A)^2)$ has a continuous integral kernel if $f$ 
is a Schwartz function. In this section we provide an approximation for the diagonal value of kernel of $f((-i\nabla - \A)^2)$ which depends only on the value of the magnetic field at the point where the kernel is evaluated. This approximation is good provided the magnetic field is slowly varying. The precise statement is the following.

\begin{proposition}\label{bulk}
Given $f\in\mathcal S(\R)$, $M>0$ and $N\geq 0$, there is a constant
$C>0$
 such that, for any $B\in C^2(\R^2)$ satisfying $\| B
 \|_\infty+\|\nabla B\|_\infty \leq M$, for any $\x_0\in\R^2$,
 $h\in(0,1]$ and $\ell \in[1,\infty)$ such that
\begin{equation}
 \label{eq:bulkass}
h^{-1/2} \sup_{|\x-\x_0|\leq \ell } |\nabla B(\x)| + h^{-1} \sup_{|\x-\x_0|\leq \ell} \|\mathrm{Hess}\ B(\x)\| \leq M
\end{equation}
and for any $\A,\A_0\in C^3(\R^2;\mathbb R^2)$ satisfying $\curl\A=B$ and $\curl\A_0\equiv B(\x_0)$ one has
\begin{equation}
 \label{eq:bulk}
\left| f( (-i\nabla - \A)^2 )(\x_0,\x_0) - f( (-i\nabla - \A_0)^2 )(\x_0,\x_0) \right| 
\leq C \, (h + \ell^{-N}) \,.
\end{equation}
\end{proposition}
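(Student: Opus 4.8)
The plan is to combine three ingredients: gauge invariance of the diagonal of the kernel, the locality of $f$ of a magnetic operator (which will absorb all errors far from $\x_0$), and — this is the point responsible for the gain from $h^{1/2}$ to $h$ — a parity cancellation of the leading perturbative contribution. Since $\R^2$ is simply connected, two smooth vector potentials with the same curl differ by a gradient $\nabla\lambda$, and conjugation by $e^{i\lambda}$ multiplies the kernel of $f\bigl((-i\nabla-\A)^2\bigr)$ by $e^{i(\lambda(\x)-\lambda(\y))}$, so the value on the diagonal is gauge independent and the left-hand side of \eqref{eq:bulk} is unchanged if $\A,\A_0$ are replaced by any admissible gauges. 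I would therefore take $\A$ in the transversal gauge centred at $\x_0$, $\A(\x)=\bigl(\int_0^1 s\,B(\x_0+s(\x-\x_0))\,ds\bigr)(\x-\x_0)^\perp$ with $\mathbf v^\perp:=(-v_2,v_1)$, and $\A_0(\x)=\tfrac12 B(\x_0)(\x-\x_0)^\perp$. Writing $H=(-i\nabla-\A)^2$, $H_0=(-i\nabla-\A_0)^2$ and $a:=\A-\A_0$, a Taylor expansion of $B$ at $\x_0$ gives, on $\{|\x-\x_0|\le\ell\}$,
$$ a(\x)=a_1(\x-\x_0)+O\bigl(h\,|\x-\x_0|^3\bigr),\qquad a_1(\mathbf v):=\tfrac13\langle\nabla B(\x_0),\mathbf v\rangle\,\mathbf v^\perp, $$
where, by \eqref{eq:bulkass}, $|a_1(\mathbf v)|\le\tfrac13 Mh^{1/2}|\mathbf v|^2$ and the remainder is $O(h)$ on that ball. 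The structural fact I will exploit is that each component of $a_1$ is an \emph{even} function of $\x-\x_0$, whereas for the reflection $R$ through $\x_0$ one has $RH_0R^{-1}=H_0$ and $R(-i\nabla-\A_0)R^{-1}=-(-i\nabla-\A_0)$.

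\textbf{Localisation of $H$ near $\x_0$.} I would fix a radial cutoff $\chi$ about $\x_0$ with $\chi\equiv1$ on $\{|\x-\x_0|\le\ell/2\}$, $\supp\chi\subset\{|\x-\x_0|<\ell\}$ and $|\partial^\alpha\chi|\le C_\alpha\ell^{-|\alpha|}$, and set $\widetilde\A:=\A_0+\chi a$, $\widetilde H:=(-i\nabla-\widetilde\A)^2$, so that $\widetilde H=H$ on $\{|\x-\x_0|\le\ell/2\}$. Using the Helffer--Sj\"ostrand representation $f(T)=\tfrac1\pi\int_{\C}\bar\partial\widetilde f(z)\,(T-z)^{-1}\,dL(z)$ with an almost-analytic extension satisfying $|\bar\partial\widetilde f(z)|\le C_N|\Im z|^N\langle z\rangle^{-N}$, a geometric resolvent identity expressing $(H-z)^{-1}-(\widetilde H-z)^{-1}$ through $[H,\eta]$ for a cutoff $\eta\equiv1$ near $\x_0$ with $\supp\eta\subset\{\chi\equiv1\}$ (hence $\supp\nabla\eta$ at distance $\gtrsim\ell$ from $\x_0$), and Combes--Thomas exponential decay of the kernels of $(H-z)^{-1}$ and $(\widetilde H-z)^{-1}$ (with rate $\gtrsim|\Im z|^{1/2}$, which is integrable against $|\bar\partial\widetilde f|$), I expect
$$ \bigl|f(H)(\x_0,\x_0)-f(\widetilde H)(\x_0,\x_0)\bigr|\le C_N\,\ell^{-N}\qquad\text{for every }N. $$

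\textbf{Comparison of $\widetilde H$ with $H_0$; the parity cancellation.} Now $\widetilde a:=\chi a$ is bounded with bounded derivatives, so $V:=H_0-\widetilde H=\{(-i\nabla-\A_0)\cdot\widetilde a+\widetilde a\cdot(-i\nabla-\A_0)\}-|\widetilde a|^2$ is $H_0$-form bounded with relative bound $0$, and iterating the resolvent identity twice gives
$$ f(\widetilde H)(\x_0,\x_0)-f(H_0)(\x_0,\x_0)=\tfrac1\pi\!\int_{\C}\!\bar\partial\widetilde f(z)\Bigl[(H_0-z)^{-1}V(H_0-z)^{-1}+(H_0-z)^{-1}V(H_0-z)^{-1}V(\widetilde H-z)^{-1}\Bigr](\x_0,\x_0)\,dL(z). $$
I would split $\widetilde a=\chi a_1+\chi(a-a_1)$, correspondingly $V=V_1+V_{\mathrm{rem}}$ with $V_1=\{(-i\nabla-\A_0)\cdot(\chi a_1)+(\chi a_1)\cdot(-i\nabla-\A_0)\}$. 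The first-order contribution of $V_1$ \emph{vanishes identically}: $\chi a_1$ has even components about $\x_0$, so $RV_1R^{-1}=-V_1$, while $R(H_0-z)^{-1}R^{-1}=(H_0-z)^{-1}$; hence the kernel of $(H_0-z)^{-1}V_1(H_0-z)^{-1}$ is odd under reflection through $\x_0$ and therefore equals $0$ at $(\x_0,\x_0)$ for every $z$. For all remaining terms I would use the Gaussian decay $|(H_0-z)^{-1}(\x,\y)|\lesssim e^{-c|\x-\y|^2}$ of the constant-field resolvent (up to a $z$-dependent polynomial factor and with the expected singularities at the Landau levels, which are tamed by the infinite-order vanishing of $\bar\partial\widetilde f$ there); this effectively confines the spatial integrations to $\{|\x-\x_0|\lesssim1\}$, where the coefficients of $V_{\mathrm{rem}}$ and of $|\widetilde a|^2$ are $O(h)$ while $V$ itself is $O(h^{1/2})$. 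Consequently the surviving first-order term is $O(h)$, and in the second-order term the two factors of $V$ each produce a factor $h^{1/2}$ (the trailing $(\widetilde H-z)^{-1}$ being controlled near the diagonal by an inverse power of $|\Im z|$), so it is $O(h)$ as well. This yields $|f(\widetilde H)(\x_0,\x_0)-f(H_0)(\x_0,\x_0)|\le C\,h$, and combining with the localisation step proves \eqref{eq:bulk}.

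\textbf{Main obstacle.} The genuinely delicate point is obtaining $h$ rather than $h^{1/2}$: a first-order perturbation in $a$ is a priori only $O(h^{1/2})$ because $\nabla B(\x_0)=O(h^{1/2})$, and it is precisely the parity cancellation above — available only because both gauges are centred at $\x_0$, which makes the leading term $a_1$ even — that eliminates it. The other, more routine, difficulty is that $a$ is not globally small; this I would circumvent by playing the Gaussian localisation of the constant-field resolvent against the locality estimate, so that only the region $\{|\x-\x_0|\lesssim1\}$, where $a=O(h^{1/2})$, actually contributes.
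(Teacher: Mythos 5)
Your proposal is correct and follows the same overall strategy as the paper: second‐order resolvent expansion in $\delta\A=\A-\A_0$ inserted into the Helffer--Sj\"ostrand formula, a parity cancellation that kills the a priori $O(h^{1/2})$ first‐order term, and exponential locality to dispose of the far‐field. Two implementation choices differ. First, you realise the parity cancellation abstractly through the reflection $R$ about $\x_0$: since $\A_0$ is odd and $\chi a_1$ has even components, $RV_1R^{-1}=-V_1$ while $R(H_0-z)^{-1}R^{-1}=(H_0-z)^{-1}$, so the kernel of $(H_0-z)^{-1}V_1(H_0-z)^{-1}$ is odd under $R$ and vanishes on the diagonal; the paper instead reads off the same cancellation from the explicit radial structure of the constant-field Green kernel via Mehler's formula (the integrand $\Green_0(0,\y,z)(\mathbf c\cdot\y)(-y_2,y_1)\cdot\mathbf\Green_1(\y,0,z)$ is odd in $\y$). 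Your version is cleaner and does not require the explicit kernel. Second, you localise by modifying the vector potential to $\widetilde\A=\A_0+\chi a$ and comparing $f(H)$ with $f(\widetilde H)$ via a geometric resolvent identity plus Combes--Thomas, whereas the paper keeps the full $\delta\A$ and splits the spatial integrals into $\{|\y|\le\ell\}$ and $\{|\y|>\ell\}$, using the global linear bound $|\delta\A(\y)|\lesssim|\y|$ together with exponential decay of the Green functions on the outer region. Both achieve the $\ell^{-N}$ error. One small inaccuracy: the constant-field resolvent kernel $(H_0-z)^{-1}(\x,\y)$ does \emph{not} have Gaussian decay uniformly in $z$; the paper's Proposition~\ref{diamag} gives exponential decay $e^{-\delta\eta|\x-\y|/r}$ with a $z$-dependent rate (the heat kernel is Gaussian, but the Laplace transform degrades the rate). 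Consequently the spatial integrals are confined only to $|\x-\x_0|\lesssim r/\eta$, so you cannot simply say the coefficients of $V_{\rm rem}$, $|\widetilde a|^2$ are $O(h)$ ``on the effective region''; rather one picks up polynomial factors $(r/\eta)^L$ that must be (and are) absorbed by the $|\Im z|^N\langle z\rangle^{-N}$ decay of the almost-analytic extension, exactly as the paper does. This does not affect the validity of your argument once stated correctly.
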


Later, in our proof of Theorem \ref{theorem2}, we will apply this proposition with $\ell=h^{-\epsilon}$ for some $\epsilon>0$. Then, when $N$ is chosen large enough, we obtain an error of the order $\mathcal O(h)$. The crucial point for our proof of Theorem \ref{theorem2} is that this is $o(h^{1/2})$. We believe that the order $\mathcal O(h)$ is best possible. Note also that we do not require the magnetic field $B$ to be positive.

Before beginning with the proof, we will derive from the pointwise estimate in \eqref{eq:bulk} a weaker, integral estimate for $\Tr g f( (-i\nabla - \A)^2 )$ with $g$ of `small' support. To do so, we also recall that if $\curl\A_0\equiv B$ is a non-zero constant, then
\begin{equation}
 \label{eq:landaudiag}
f( (-i\nabla - \A_0)^2 )(\x_0,\x_0) = \frac{|B|}{2\pi} \sum_{k=1}^\infty f((2k-1)|B|) 
= \frac{|B|}{2\pi} \sum_{k=1}^\infty b_k(|B|,f)
\end{equation}
for all $\x_0\in\R^2$. It is easy to see that the right side of the above formula can be continuously extended to $B=0$ by setting it equal to $(4\pi)^{-1} \int_0^\infty f(\lambda)\,d\lambda$.

\begin{corollary}\label{bulkcor}
Given $f\in\mathcal S(\R)$, $M>0$ and $N\geq 0$, there is a constant
$C>0$ such that, for any $B\in C^2(\R^2)$ satisfying
$\| B \|_\infty+\|\nabla B\|_\infty \leq M$, for any $\x_0\in\R^2$,
$h\in(0,1]$ and $\ell \in[1,\infty)$ satisfying \eqref{eq:bulkass},
  for any $g\in L^\infty(\{|\cdot-\x_0|\leq \ell /2\})$ and for any $\A\in
  C^3(\R^2;\mathbb R^2)$ with $\curl\A=B$,  one has
\begin{equation}
 \label{eq:bulkcor}
\left| \Tr\left[ g f( (-i\nabla - \A)^2 ) \right]
- \frac{1}{2\pi} \sum_{k=1}^\infty \int_{\R^2} g(\x) b_k(|B(\x)|,f) |B(\x)| \,d\x \right|
\leq C \, \|g\|_\infty \, \ell^2 \ (h + \ell^{-N}) \,.
\end{equation}
\end{corollary}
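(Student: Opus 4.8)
The plan is to derive Corollary \ref{bulkcor} from Proposition \ref{bulk} by integrating the pointwise kernel estimate against $g$. First I would observe that the trace $\Tr\left[ g f( (-i\nabla - \A)^2 ) \right]$ can be written as $\int_{\R^2} g(\x)\, f( (-i\nabla - \A)^2 )(\x,\x)\,d\x$, using that $f( (-i\nabla - \A)^2 )$ has a continuous integral kernel (by Appendix \ref{apendixD}) and that $g$ is bounded with support in the ball $\{|\x-\x_0|\le \ell/2\}$; this requires a small argument that $g\, f( (-i\nabla - \A)^2 )$ is trace class, which follows since $f$ Schwartz makes $f( (-i\nabla - \A)^2 )$ a pseudodifferential-type operator with rapidly decaying kernel, and multiplication by the compactly supported $g$ then yields a Hilbert--Schmidt product representation; alternatively one writes $f=f_1 f_2$ with $f_1,f_2$ Schwartz so that $g f(\cdot) = (g f_1(\cdot))(f_2(\cdot))$ is a product of two Hilbert--Schmidt operators, and the trace is the integral of the diagonal kernel.

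Next, for each $\x\in\supp g$ I would apply Proposition \ref{bulk} with base point $\x_0$ replaced by $\x$. The key point is that the hypothesis \eqref{eq:bulkass} at $\x$ with radius $\ell/2$ is implied by the hypothesis at $\x_0$ with radius $\ell$: indeed, if $|\x-\x_0|\le \ell/2$ then $\{|\y-\x|\le \ell/2\}\subset\{|\y-\x_0|\le \ell\}$, so the suprema of $|\nabla B|$ and $\|\mathrm{Hess}\,B\|$ over the smaller ball are bounded by those over the larger one. Hence Proposition \ref{bulk}, applied with $\ell$ replaced by $\ell/2$ and with $\A_0$ a vector field of constant curl $B(\x)$, gives
\begin{equation*}
\left| f( (-i\nabla - \A)^2 )(\x,\x) - \frac{|B(\x)|}{2\pi} \sum_{k=1}^\infty b_k(|B(\x)|,f) \right| \le C\,(h + (\ell/2)^{-N}) \le C'\,(h+\ell^{-N}),
\end{equation*}
where I used the explicit Landau kernel formula \eqref{eq:landaudiag} to identify $f( (-i\nabla - \A_0)^2 )(\x,\x)$ with the Landau sum, and where $C'$ absorbs the factor $2^N$ into a new constant depending only on $f,M,N$. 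Note the constant from Proposition \ref{bulk} is uniform in the base point, which is exactly what makes this work.

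Finally I would multiply this pointwise bound by $|g(\x)|$ and integrate over $\R^2$; since $g$ is supported in $\{|\x-\x_0|\le \ell/2\}$, a ball of area $\pi\ell^2/4$, the integral of the error is bounded by $\|g\|_\infty\cdot(\pi\ell^2/4)\cdot C'(h+\ell^{-N}) \le C''\,\|g\|_\infty\,\ell^2\,(h+\ell^{-N})$, which is the claimed \eqref{eq:bulkcor} (the restriction of the $k$-sum and the absolute value $|B(\x)|$ match the statement verbatim). The main obstacle, such as it is, is the trace-class justification and the interchange of trace and integral in the first step — everything else is a direct consequence of Proposition \ref{bulk} together with the inclusion of balls and the explicit Landau formula; I would handle the trace-class point by the $f=f_1f_2$ factorization and the Hilbert--Schmidt bounds on kernels of Schwartz functions of the magnetic operator, as recorded in Appendix \ref{apendixD}.
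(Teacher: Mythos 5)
Your proof follows exactly the paper's argument: the paper's one-line justification is to combine \eqref{eq:landaudiag} with the pointwise estimate \eqref{eq:bulk}, noting that \eqref{eq:bulkass} holds at every $\x\in\supp g$ with $\ell$ replaced by $\ell/2$, and then integrate against $g$ over a region of area $\mathcal O(\ell^2)$. The extra remarks you include on the trace-class property and writing the trace as the integral of the diagonal kernel are correct and only make explicit what the paper leaves implicit.
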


Indeed, this is an immediate consequence of \eqref{eq:landaudiag} 
and \eqref{eq:bulk} noting that \eqref{eq:bulkass} holds for 
any $\x\in\supp g$ with $\ell$ replaced by $\ell/2$.



We now turn to the proof of Proposition \ref{bulk}. We shall use the functional calculus based on the Helffer--Sj\"ostrand formula, which we recall next. Assume that $f\in\mathscr{S}(\R)$ is real-valued. For any given $N\geq 1$, we can construct an almost
analytic extension $f_{a,N}$ which obeys the following conditions \cite{He-Sjhva}:
\begin{enumerate}\label{conditiihesj}
\item $f_{a,N}(z)=f(z),\quad \forall z\in\R$;
\item ${\rm supp}(f_{a,N})\subset
\{z\in\mathbb{C}:\; |\Im z|<1\} =:\mathcal D$;
\item $f_{a,N}\in \mathscr{S}(\mathcal D)$;
\item there exists $C_N>0$ such that for  all $z\in\mathcal D$
\begin{equation}\label{apatra2}
 \left |\frac{\partial f_{a,N}}{\partial \overline{z}}(z)\right 
|\leq C_N \frac{|\Im z|^{\mathit N}}{\langle\Re z\rangle^{\mathit N}} \,.
\end{equation}
\end{enumerate}
Let $H$ be any self-adjoint operator. Assuming that $N\geq 2$, the Helffer-Sj\"ostrand formula reads:
\begin{equation}\label{azecea6}
f(H)=\frac{1}{\pi}\int_{\mathcal{D}} \frac{\partial
  f_{a,N}}{\partial \overline{z}}(H-z)^{-1}\,dxdy,\quad \mbox{ with }  z=x+iy\,. 
\end{equation}

The proof of the following lemma is elementary.

\begin{lemma}\label{deltaa}
Assume that $B\in C^2(\R^2)$ satisfies $\| B \|_\infty+\|\nabla B\|_\infty \leq M$ and
$$
h^{-1/2} \sup_{|\x|\leq \ell} |\nabla B(\x)| + h^{-1} \sup_{|\x|\leq \ell} \|\mathrm{Hess}\ B(\x)\| \leq M
$$
for some positive $M, h, \ell$. We define
\begin{equation}\label{eq:a}
\A(\x) := \int_0^1 dt \,t \, B(t\x) \, (-x_2,x_1)\,,
\qquad
\A_0(\x) := \frac{B(0)}2\, (-x_2,x_1) \,.
\end{equation}
and $\delta\A:=\A-\A_0$. Then for all $\x\in\R^2$
\begin{equation}\label{eq:deltaaglob}
|\delta \A(\x)| \leq M |\x|
\qquad\text{and}\qquad
|\Div\, \delta \A(\x)| \leq \frac13 M |\x| \,.
\end{equation}
Moreover, for all $\x$ with $|\x|\leq \ell$
\begin{equation}\label{eq:deltaaloc}
|\delta \A(\x)| \leq \frac 13 M h^{1/2} |\x|^2
\qquad\text{and}\qquad
|\Div\delta \A(\x)| \leq \frac 13 M h^{1/2} |\x|
\end{equation}
and
\begin{equation}\label{eq:deltaataylor}
\left|\delta\A(\x) - \frac 13 (\nabla B(0)\cdot\x) \,(-x_2,x_1) \right| \leq \frac18 M h |\x|^3 \,.
\end{equation}
\end{lemma}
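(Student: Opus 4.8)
The plan is to rewrite everything through the single scalar function
$$
g(\x) := \int_0^1 t\,\big(B(t\x)-B(0)\big)\,dt ,
$$
so that, since $\int_0^1 t\,dt = \tfrac12$, one has $\delta\A(\x) = \A(\x)-\A_0(\x) = g(\x)\,(-x_2,x_1)$; in particular $|\delta\A(\x)| = |g(\x)|\,|\x|$, and all four assertions reduce to estimating $g$ and its first derivatives. (Note in passing that $\curl\A = B$ and $\curl\A_0\equiv B(0)$, though this is not needed here.) I will repeatedly use the elementary identities $\int_0^1 t\,dt = \tfrac12$, $\int_0^1 t^2\,dt = \tfrac13$, $\int_0^1 t^3\,dt = \tfrac14$, which are precisely the source of the constants $1$, $\tfrac13$ and $\tfrac18$ in the statement.

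For the global bounds \eqref{eq:deltaaglob}: using the crude pointwise estimate $|B(t\x)-B(0)|\le 2\|B\|_\infty\le 2M$ and integrating in $t$ gives $|g(\x)|\le M$, hence $|\delta\A(\x)|\le M|\x|$. For the divergence I would differentiate under the integral sign (legitimate since $B\in C^2$): from $\partial_{x_j}g(\x) = \int_0^1 t^2\,(\partial_j B)(t\x)\,dt$ one obtains
$$
\Div\delta\A(\x) = -x_2\,\partial_{x_1}g(\x) + x_1\,\partial_{x_2}g(\x)
= \int_0^1 t^2\,\big(x_1(\partial_2 B)(t\x) - x_2(\partial_1 B)(t\x)\big)\,dt ,
$$
whose integrand is bounded by $t^2\,|\x|\,\|\nabla B\|_\infty\le t^2 M|\x|$; integrating in $t$ yields $|\Div\delta\A(\x)|\le\tfrac13 M|\x|$.

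For the local estimates \eqref{eq:deltaaloc} the key observation is that if $|\x|\le\ell$ then the whole set $\{st\x:\ s,t\in[0,1]\}$ stays inside $\{|\cdot|\le\ell\}$, so the hypothesis supplies $|\nabla B|\le Mh^{1/2}$ there; the fundamental theorem of calculus then gives $|B(t\x)-B(0)|\le Mh^{1/2}\,t|\x|$, whence $|g(\x)|\le Mh^{1/2}|\x|\int_0^1 t^2\,dt = \tfrac13 Mh^{1/2}|\x|$ and $|\delta\A(\x)|\le\tfrac13 Mh^{1/2}|\x|^2$, while feeding the same bound into the divergence formula above gives $|\Div\delta\A(\x)|\le\tfrac13 Mh^{1/2}|\x|$. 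For the Taylor bound \eqref{eq:deltaataylor}, again valid for $|\x|\le\ell$, I would write $B(t\x)-B(0) = t\,\nabla B(0)\cdot\x + R_t(\x)$ with second-order remainder satisfying $|R_t(\x)|\le\tfrac12\,\big(\sup_{|\y|\le\ell}\|\mathrm{Hess}\,B(\y)\|\big)\,t^2|\x|^2\le\tfrac12 Mh\,t^2|\x|^2$ (using once more that the relevant intermediate point lies in $\{|\cdot|\le\ell\}$). Inserting this into $g$, the linear term contributes $\big(\int_0^1 t^2\,dt\big)\,\nabla B(0)\cdot\x = \tfrac13\,\nabla B(0)\cdot\x$, which after multiplication by $(-x_2,x_1)$ is exactly the claimed leading term, while the remainder is controlled by $\tfrac12 Mh|\x|^2\int_0^1 t^3\,dt = \tfrac18 Mh|\x|^2$; multiplying by $|\x|$ gives \eqref{eq:deltaataylor}.

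As the text indicates, there is no serious obstacle here — the lemma is elementary. The only points requiring a little care are the justification of differentiation under the integral sign (immediate from $B\in C^2$) and the verification that the segments along which the mean value theorem and the second-order Taylor formula are applied remain inside the ball $\{|\cdot|\le\ell\}$, which is what makes the $h$-dependent hypotheses on $\nabla B$ and $\mathrm{Hess}\,B$ usable in \eqref{eq:deltaaloc} and \eqref{eq:deltaataylor}.
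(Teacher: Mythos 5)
Your proof is correct and follows essentially the same route as the paper's: write $\delta\A(\x)$ as a scalar, $g(\x)=\int_0^1 t\,(B(t\x)-B(0))\,dt$, times $(-x_2,x_1)$, then estimate $g$ and its derivatives via the hypotheses on $\nabla B$ and $\mathrm{Hess}\,B$, tracking the constants through the elementary $t$-moments. The only (cosmetic) difference is in \eqref{eq:deltaataylor}: the paper iterates the fundamental theorem of calculus and interchanges the order of integration to arrive at an integral-form remainder, whereas you invoke the Lagrange form of Taylor's theorem directly; both give the constant $\tfrac18$.
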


\begin{proof}
The first bound in \eqref{eq:deltaaglob} follows by writing
\begin{equation}\label{eq:deltaarep}
\delta\A (\x) = \int_0^1 dt\, t \left(B(t\x)-B(0)\right) (-x_2,x_1)
\end{equation}
and estimating $|B(t\x)-B(0)|\leq 2M$. Similarly, for the second bound we write
\begin{equation}
 \label{eq:deltaadivrep}
\Div\, \delta\A (\x) = \int_0^1 dt\, t^2 \, \nabla B(t\x) \cdot (-x_2,x_1)
\end{equation}
and estimate $|\nabla B(t\x)| \leq M$.

For the first bound in \eqref{eq:deltaaloc} we use that $B(t\x)-B(0) = \int_0^t ds\, \nabla B(s\x)\cdot \x$ in \eqref{eq:deltaarep} to obtain
\begin{equation}\label{eq:deltaarep2}
\delta\A (\x) = \frac12 \int_0^1 ds\, (1-s^2) \left[\nabla B(s\x)\cdot \x\right] \, (-x_2,x_1) \,.
\end{equation}
The claimed inequality then follows from $|\nabla B(s\x)|\leq M h^{1/2}$. For the second bound we use the same estimate in \eqref{eq:deltaadivrep}.

Finally, by \eqref{eq:deltaarep2} we have
$$
\delta\A(\x) - \frac 13 \left[\nabla B(0)\cdot\x\right] (-x_2,x_1) =
\frac 12 \int_0^1 ds\, (1-s^2) \left[\left(\nabla B(s\x)-\nabla B(0)\right) \cdot \x\right] \, (-x_2,x_1)
$$
and, since $\nabla B(s\x)-\nabla B(0) 
= \int_0^s d\sigma\, \mathrm{Hess}\,B(\sigma \x)\, \x$,
$$
\delta\A(\x) - \frac 13 \left[\nabla B(0)\cdot\x\right] (-x_2,x_1) =
\frac12 \int_0^1 d\sigma \left(\frac23 -\sigma +\frac23\sigma^3\right)
\left( \x\cdot \mathrm{Hess}\,B(\sigma \x) \, \x\right) (-x_2,x_1) \,.
$$
In view of $\|\mathrm{Hess}\,B(\sigma \x)\| \leq M h$, we arrive at \eqref{eq:deltaataylor}.
\end{proof}

\begin{proof}[Proof of Proposition \ref{bulk}]

\emph{Step 1. Representation via the Helffer-Sj\"ostrand formula}

We may assume that $\x_0=0$. Moreover, since the diagonal of the 
integral kernels are gauge-invariant, we may assume that $\A$
 and $\A_0$ are given by \eqref{eq:a}. We write 
$H:= (-i\nabla - \A)^2$ and $H_0:= (-i\nabla - \A_0)^2$ in
$L^2(\R^2)$. 
Applying the resolvent identity twice, we find that for 
any $z\in\C\setminus[0,\infty)$ (the equality holds on functions with
compact support):  
$$
(H-z)^{-1} = (H_0-z)^{-1} - (H_0-z)^{-1} W (H_0-z)^{-1}
+(H_0-z)^{-1} W (H-z)^{-1} W (H_0-z)^{-1}
$$
where 
$$
W:= - (-i\nabla-\A_0) \cdot \delta\A - \delta\A\cdot (-i\nabla-\A_0) + |\delta\A|^2
$$
and $\delta\A := \A-\A_0$.

The above identity needs some justification since $W$ is unbounded,
and this is what we do next. Let us consider one piece building the product 
$W\,(H_0-z)^{-1}$, i.e. the operator \break 
$\delta\A\cdot
(-i\nabla -\A_0)\;(H_0-z)^{-1}$, and let
us apply it on a
compactly supported function $\psi$. Then we can write:
\begin{align}\label{oatreia1}
\delta\A\cdot
(-i\nabla -\A_0)\;(H_0-z)^{-1}\psi =&
e^{-{\frac{\delta \eta}{2r }|\cdot|}}\{e^{-{\frac{\delta \eta}{2r
    }|\cdot|}}\delta\A\}\nonumber \\
& \cdot \{e^{{\frac{\delta \eta}{r }|\cdot|}}(-i\nabla
-\A_0)\;(H_0-z)^{-1}e^{-{\frac{\delta
      \eta}{r }|\cdot|}}\}
\{e^{{\frac{\delta \eta}{r }|\cdot-\x_0|}}\psi\}.
\end{align}
Using the general estimates established in \eqref{hcadoua7} and
\eqref{hcapatra1} it follows that the resolvent sandwiched with
exponentials is bounded, and we also gain some exponential decay to the
left. Half of it is used to bound the linear growth in
$\delta\A$, while the other half can be used as a priori
decay in case we have other $W$'s and resolvents to bound. Moreover,
the norms  grow at most polynomially in $|\Re z|$ and in $1/|\Im z|$.

Hence by the Helf\-fer--Sj\"ostrand formula for any almost analytic
extension $f_{a}$ of $f$ satisfying the decay condition
\eqref{apatra2}, we can write the following operator identity which
holds at least on compactly supported functions:
\begin{align}\label{eq:hs}
f(H) - f(H_0) = & - \frac{1}{\pi}\int_{\mathcal{D}} \frac{\partial
  f_{a}}{\partial \overline{z}}(H_0-z)^{-1} W (H_0-z)^{-1} \,dxdy \\ 
& + \frac{1}{\pi}\int_{\mathcal{D}} \frac{\partial
  f_{a}}{\partial \overline{z}}(H_0-z)^{-1}W (H-z)^{-1} W(H_0-z)^{-1} \,dxdy \notag 
\end{align}
where $\mathcal D = \{z\in\mathbb{C}:\; |\Im z|<1\}$. Hence we need to prove estimates on the diagonal of the integral kernels of $(H_0-z)^{-1} W (H_0-z)^{-1}$ and $(H_0-z)^{-1}W (H-z)^{-1} W(H_0-z)^{-1}$. In order to do this, we need use the estimates from Lemma \ref{deltaa} on $\delta\A$ and its divergence. In the remainder of this proof we always assume that $z\in\mathcal{D}$.

\bigskip

\emph{Step 2. The term $(H_0-z)^{-1} W (H_0-z)^{-1}$.}
We write
\begin{align*}
(H_0-z)^{-1} W (H_0-z)^{-1}
= & -  \left( (-i\nabla-\A_0) (H_0-\overline z)^{-1} \right)^* \cdot \delta\A (H_0-z)^{-1} \\
& - (H_0-z)^{-1} \delta\A \cdot (-i\nabla-\A_0) (H_0- z)^{-1} \\
& + (H_0-z)^{-1} |\delta\A|^2 (H_0-z)^{-1} \,.
\end{align*}
Since the first term on the right side coincides with the adjoint of the second term after replacing $z$ by $\overline z$, it suffices to consider the second and the third term. 

We begin with the second term. Let $\Green_0(\cdot,\cdot,z)$ be the integral kernel of $(H_0-z)^{-1}$ and $\mathbf \Green_1(\cdot,\cdot,z)$ be the ($\C^2$-valued) integral kernel of $(-i\nabla-\A_0)(H_0-z)^{-1}$. Then the integral kernel of $(H_0-z)^{-1} \delta\A \cdot (-i\nabla-\A_0) (H_0- z)^{-1}$ is
\begin{equation}\label{eq:kernelmain}
\int_{\R^2} d\y \, \Green_0(\x,\y,z) \delta\A(\y) \cdot \mathbf \Green_1(\y,\x',z) \,.
\end{equation}
Since
$$
|\Green_0(\x,\x',z)| \leq C(z) (1+|\ln|\x-\x'||) e^{-|\x-\x'|/C(z)} \,, 
\ |\mathbf \Green_1(\x,\x',z)| \leq C(z) |\x-\x'|^{-1} e^{-|\x-\x'|/C(z)}
$$
by Propositions \ref{diamag} and \ref{kernelder} and since $|\delta\A(\x)| \leq M|\x|$ by Lemma \ref{deltaa} this integral converges absolutely for any $z\in\C\setminus\R$ and for any $\x$ and $\x'$. Moreover, one can show that the above integral depends continuously on $(\x,\x')$. 

Therefore we can and will set $\x=\x'=0$ in the following. In order to
estimate the integral \eqref{eq:kernelmain} we split it according
 to whether $|\y|\leq \ell$ or not. In the first case we note that
 $\Green_0(0,\y,z)$ depends on $|\y|$ only. 
This follows from Mehler's Formula (see \eqref{eq:2}).
Hence the tangential component of $\mathbf \Green_1(\y,0,z)$ is also a radial function times $(-y_2,y_1)$. This implies that for any fixed vector $\mathbf c\in\R^2$,
\begin{equation}\label{eq:kernelmain'}
\Green_0(0,\y,z) \ \left( \mathbf c \cdot \y \right)\ (-y_2,y_1)\cdot \mathbf \Green_1(\y,0,z)
\end{equation}
is odd with respect to $\y\mapsto-\y$, and therefore its integral over
$\{|\y|\leq \ell \}$ is zero. Hence by \eqref{eq:deltaataylor},
\begin{align*}
\left| \int_{|\y|\leq \ell} d\y\,  \Green_0(0,\y,z) \delta\A(\y)\cdot \mathbf \Green_1(\y,0,z) \right|
\leq \frac18 M h \int_{|\y|\leq \ell } d\y\,  \left| \Green_0(0,\y,z)\right| |\y|^3 \left| \mathbf \Green_1(\y,0,z) \right| \,.
\end{align*}
By Propositions \ref{diamag} and \ref{kernelder} there are $C$ and $\delta$ such that for all $z\in\mathcal D$ the latter integral is bounded by
$$
C \frac{r^{10}}{\eta^{2}} \int_{\R^2} d\y \, \left(1+|\ln|\y||\right) |\y|^2 e^{-\frac{\delta \eta}r |\y|} \,. 
$$
Recall that we are using the notations $r= \sqrt{\langle \Re z\rangle}$ and $\eta = |\Im z|$. One easily sees that for any $\epsilon>0$ there is a $C_\epsilon$ such that the latter is bounded for all $z\in\mathcal D$ by
\begin{equation}
 \label{eq:firstint}
C_\epsilon \frac{r^{10}}{\eta^{2}} \left( \frac{r}{\eta} \right)^{4+\epsilon} \,.
\end{equation}

In the region where $|\y|> \ell$, we use the first estimate in \eqref{eq:deltaaglob} to get
\begin{align*}
\left| \int_{|\y|> \ell } d\y\,  \Green_0(0,\y,z) \delta\A(\y)\cdot \mathbf \Green_1(\y,0,z) \right|
\leq M \int_{|\y|> \ell} d\x\,  \left| \Green_0(0,\y,z)\right| |\y| \left|\mathbf \Green_1(\y,0,z) \right| \,.
\end{align*}
By Propositions \ref{diamag} and \ref{kernelder} there are $C$ and $\delta$ such that for all $z$ with $|\Im z|\leq 1$ the latter integral is bounded by
$$
C \frac{r^{10}}{\eta^{2}} \int_{|\y|>\ell} d\y \, \left(1+|\ln|\y||\right) e^{-\frac{\delta \eta}r |\y|} \,.
$$
One easily deduces that there is a $C$ such that the latter is bounded for all $z\in\mathcal D$ by
$$
C \frac{r^{10}}{\eta^{2}} \left( \frac{r}{\eta} \right)^2 e^{-\frac{\delta \eta \ell}r} 
\, \ell^2 \,.
$$
(This can certainly be improved.) Using $t^N e^{-t}\leq (N/e)^N$, we conclude that for any $N\geq 0$ there is a $C_N$ such that for all $z\in\mathcal D$, the latter is bounded by
$$
C_N \frac{r^{10}}{\eta^{2}} \left( \frac{r}{\eta} \right)^{2+N} \, \ell^{2-N} \,.
$$
This, together with \eqref{eq:firstint}, proves that for any $N\geq 0$ there are $C_N$ and $L$ such that for all $z\in\mathcal D$,
$$
\left| \int_{\R^2} d\y \, \Green_0(0,\y,z) \delta\A(\y) \cdot \mathbf \Green_1(\y,0,z) \right|
\leq C_N \left( \frac{r}{\eta} \right)^L \left(h+ \ell^{-N} \right) \,.
$$

Next, we briefly turn to the term $(H_0-z)^{-1} |\delta\A|^2 (H_0-z)^{-1}$. As before one shows that this has a continuous integral kernel and that for any $N\geq 0$ there are $C_N$ and $L$ such that for all $z\in\mathcal D$
$$
\left| \int_{\R^2} d\y \, \Green_0(0,\y,z) |\delta\A(\y)|^2 \Green_0(\y,0,z) \right|
\leq C_N \left( \frac{r}{\eta} \right)^L \left(h+ \ell^{-N} \right) \,.
$$
This proof uses the first inequalities in \eqref{eq:deltaaglob} and \eqref{eq:deltaaloc}. The latter leads immediately to a power $(h^{1/2})^2=h$ and therefore one does not need a symmetry argument in this case.

Since for any given $L$, the almost analytic extension $f_a$ in
\eqref{eq:hs} can be chosen to satisfy for some constant $C_L$, 
$$
\left|\frac{\partial f_a}{\partial\overline z} (z) \right| \leq C_L \left(\frac{\eta}{r^2}\right)^L
$$
for all $z\in\mathcal D$, the above estimates prove that the integral kernel of
$$
\int_{\mathcal{D}} \frac{\partial f_{a}}{\partial \overline{z}}(H_0-z)^{-1} W (H_0-z)^{-1}(0,0) \,dxdy
$$
at $(0,0)$ exists and is bounded in absolute value by $C_N \left(h+ \ell^{-N} \right)$ for any $N$. 
\bigskip

\emph{Step 3. The term $(H_0-z)^{-1}W (H-z)^{-1} W(H_0-z)^{-1}$.}

Using that $(-i\nabla-\A_0) \cdot \delta\A = \delta\A \cdot (-i\nabla-\A_0) - i \Div\delta\A$ we have two representations
$$
W= - 2(-i\nabla-\A_0) \cdot \delta\A + V^*
$$
and
$$
W= - 2\delta\A \cdot (-i\nabla-\A_0) + V
$$
where $V:= |\delta\A|^2 + i \Div\,\delta\A$. This allows us to write
\begin{align}\label{eq:second}
 (H_0-z)^{-1} & W (H-z)^{-1} W(H_0-z)^{-1} \\
= & 4 \left( (-i\nabla-\A_0) (H_0-\overline z)^{-1} \right)^* \cdot \delta\A (H-z)^{-1} \delta\A \cdot (-i\nabla-\A_0) (H_0-z)^{-1} \notag \\
& - 2 (H_0-z)^{-1} V^* (H-z)^{-1} \delta\A \cdot (-i\nabla-\A_0) (H_0-z)^{-1} \notag\\
& - 2 \left( (-i\nabla-\A_0) (H_0-\overline z)^{-1} \right)^* \cdot \delta\A (H-z)^{-1} V (H_0-z)^{-1} \notag\\
& + (H_0-z)^{-1}V (H-z)^{-1} V(H_0-z)^{-1} \notag
\end{align}
We begin by considering the first summand. If $\Green(\cdot,\cdot,z)$ denotes the integral kernel of $(H- z)^{-1}$, then the integral kernel of $\left( (-i\nabla-\A_0) (H_0-\overline z)^{-1} \right)^* \cdot \delta\A (H-z)^{-1} \delta\A \cdot (-i\nabla-\A_0) (H_0-z)^{-1}$ is given by
$$
\int_{\R^2} d\y \int_{\R^2} d\y' \, \overline{ \mathbf \Green_1(\y,\x,\overline z)} \cdot \delta\A(\y) \, \Green(\y,\y',z) \, \delta\A(\y') \cdot \mathbf \Green_1(\y',\x',z)
$$
As before, it follows from Lemmas \ref{diamag} and \ref{kernelder} and the first estimate in \eqref{eq:deltaaglob} that this integral converges absolutely and depends continuously on $(\x,\x')$.

Moreover, the same argument as in Step 2 shows that the above integral, evaluated at $\x=\x'=0$ coincides with
$$
\int_{|\y|\leq \ell} d\y \int_{|\y'|\leq \ell} d\y' \, \overline{ \mathbf \Green_1(\y,0,\overline z)} \cdot \delta\A(\y) \, \Green(\y,\y',z) \, \delta\A(\y') \cdot \mathbf \Green_1(\y',0,z)
$$
up to an error which for any $N$ can be bounded by $C_N \left( \frac{r}{\eta} \right)^L \ell^{-N}$ for some $C_N$ and some $L=L(N)$. According to the first estimate in \eqref{eq:deltaaloc} this integral is bounded by
$$
\frac19 M^2 h \int_{|\y|\leq \ell} d\y \int_{|\y'|\leq \ell} d\y' \,
|\mathbf \Green_1(\y,0,\overline z)|\, |\y|^2\, |\Green(\y,\y',z)|\, |\y'|^2 \, |\mathbf \Green_1(\y',0,z)| \,.
$$
According to Propositions \ref{diamag} and \ref{kernelder} this double integral is bounded by
$$
C \frac{r^{16}}{\eta^3} \int_{\R^2} d\y \int_{\R^2} d\y' \, e^{-\frac{\delta\eta}{r}|\y|} |\y| \left(1+|\ln|\y-\y'|\,|\right) e^{-\frac{\delta\eta}{r}|\y-\y'|} |\y'| e^{-\frac{\delta\eta}{r}|\y'|} \,.
$$
One easily checks that for any $\epsilon>0$ this integral is bounded by $C_\epsilon \left(\frac{r}{\delta\eta}\right)^{4+\epsilon}$.

The other summands on the right side of \eqref{eq:second} are treated
similarly. For the integral kernel of $(H_0-z)^{-1}V (H-z)^{-1}
V(H_0-z)^{-1}$ one again first localizes to $|\y'|\leq \ell $ and
$|\y|\leq \ell $ and then estimates $V(\y)$ by \eqref{eq:deltaaloc}. The two factors of $V$ lead to a factor of $(h^{1/2})^2 = h$. In the second and the third summand in \eqref{eq:second} one can apply the Schwarz inequality and use the estimates for the first and the fourth summand.
\end{proof}


\section{Local asymptotics for operators on the half-plane}\label{halfspace}

In Corollary \ref{bulkcor} we studied $\Tr g f((-i\nabla -\A)^2 )$ for the operator $(-i\nabla -\A)^2$ defined on the whole space. Our next goal is to study similar traces for operators defined on a domain with boundary and for functions $g$ supported close to the boundary. Once again, the idea is that if $B$ varies slowly (measured in terms of a parameter $h$) on the support of the function $g$ (measured in terms of a parameter $\ell$), then an operator with constant magnetic field provides a good approximation. In contrast to Corollary \ref{bulkcor}, however, the presence of a boundary leads to the appearance of a second term involving an integral over the boundary. As a first step towards proving this, we consider in this section the case of the half-plane $\R^2_+=\{\x\in\R^2:\ x_2>0\}$.

\begin{proposition}\label{bdry}
Given $f\in\mathcal S(\R)$, $M>0$ and $N\geq 0$, there is a constant $C>0$ such that for any $B \in C^1(\R^2_+)$ satisfying 
$\|B\|_\infty+\|\nabla B\|_\infty \leq M$, for any $\x_0\in \R\times\{0\}$, $h\in(0,1]$ and $\ell\in[1,\infty)$ such that
\begin{equation}\label{eq:bloc}
\sup_{\x\in\R^2_+\,,\, |\x-\x_0|\leq \ell} |\nabla B(\x)| \leq M h^{1/2} \,,
\end{equation}
for any $\A\in C^2(\R^2_+,\R^2)$ and $\A_0\in C^2(\R^2_+,\R^2)$ satisfying $\curl \A=B$ and $\curl \A_0\equiv B(\x_0)$ and for any $g\in L^\infty(\{|\cdot-\x_0|\leq \ell /2\})$ one has
\begin{equation}\label{eq:bdry}
\left| \Tr g f((-i\nabla -\A)^2 ) - \Tr g f((-i\nabla -\A_0)^2 ) \right| 
\leq C \ell^2 \|g\|_\infty (h^{1/2} \ell^2 + h \ell^4 + \ell^{-N}) \,.
\end{equation}
Here $(-i\nabla -\A)^2$ and $(-i\nabla -\A_0)^2$ are the Dirichlet realizations of the corresponding operators in $L^2(\R^2_+)$.
\end{proposition}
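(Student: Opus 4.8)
The plan is to imitate the proof of Proposition~\ref{bulk}, but to work with the local trace rather than with pointwise kernel values and --- since here one controls only $\nabla B$, and only up to size $h^{1/2}$, with no bound on the Hessian --- to carry one fewer order of Taylor expansion than there. First I would reduce to a normal form. Since the diagonal of the integral kernel of $f((-i\nabla-\A)^2)$ is gauge invariant, I translate so that $\x_0=0$ and fix the gauges $\A(\x)=\int_0^1 t\,B(t\x)\,(-x_2,x_1)\,dt$ and $\A_0(\x)=\tfrac{B(0)}{2}(-x_2,x_1)$, which is legitimate because the segment $\{t\x:t\in[0,1]\}$ stays in $\overline{\R^2_+}$ for $\x\in\R^2_+$. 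Repeating the elementary computation of Lemma~\ref{deltaa} with a merely $C^1$ field, $\|B\|_\infty\le M$ gives the global bound $|\delta\A(\x)|\le M|\x|$, and assumption \eqref{eq:bloc} gives, for $|\x|\le\ell$, the local bounds $|\delta\A(\x)|\le\tfrac13 Mh^{1/2}|\x|^2$ and $|\Div\delta\A(\x)|\le\tfrac13 Mh^{1/2}|\x|$, i.e. the exact analogues of \eqref{eq:deltaaloc}; there is however no analogue of the second-order expansion \eqref{eq:deltaataylor}.

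Next I would set up the Helffer--Sj\"ostrand machinery for the Dirichlet realizations $H=(-i\nabla-\A)^2$ and $H_0=(-i\nabla-\A_0)^2$ in $L^2(\R^2_+)$, with $W=-(-i\nabla-\A_0)\cdot\delta\A-\delta\A\cdot(-i\nabla-\A_0)+|\delta\A|^2$: the resolvent identity and the Helffer--Sj\"ostrand formula yield the exact analogue of \eqref{eq:hs}, the unboundedness of $W$ being absorbed, as in Proposition~\ref{bulk}, by the exponential-weight bookkeeping of \eqref{oatreia1} together with \eqref{hcadoua7}--\eqref{hcapatra1}. All the kernel bounds of Propositions~\ref{diamag} and \ref{kernelder} remain valid for the Dirichlet realization (the diamagnetic comparison is then with the Dirichlet Laplacian, and the Combes--Thomas argument is insensitive to the boundary condition). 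Since $g(f(H)-f(H_0))$ is consequently trace class with $\Tr[g(f(H)-f(H_0))]=\int_{\supp g}g(\x)\,[f(H)-f(H_0)](\x,\x)\,d\x$ and $|\supp g|\le\ell^2$, it then suffices to prove that $\bigl|[f(H)-f(H_0)](\x,\x)\bigr|\le C\,(h^{1/2}\ell^2+h\ell^4+\ell^{-N})$ for every $\x$ with $|\x|\le\ell/2$.

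This pointwise estimate I would obtain by running the three steps of the proof of Proposition~\ref{bulk} term by term in \eqref{eq:hs}, \emph{with the Mehler symmetry argument of Steps~2 and 3 replaced by the crude bounds on $\delta\A$ just recorded}. For each term I split every intermediate variable $\y$ (and $\y'$) into $\{|\y|\le\ell\}$ and $\{|\y|>\ell\}$. On the inner region $|\delta\A(\y)|\le\tfrac13 Mh^{1/2}\ell^2$ and $|\Div\delta\A(\y)|\le\tfrac13 Mh^{1/2}\ell$, so a factor of $W$ carrying one $\delta\A$ costs $h^{1/2}\ell^2$, the term $|\delta\A|^2$ costs $h\ell^4$, and a $\Div\delta\A$ costs only $h^{1/2}\ell\le h^{1/2}\ell^2$; hence the first term of \eqref{eq:hs} is $O(h^{1/2}\ell^2+h\ell^4)$ and the second, with two factors of $W$, is $O(h\ell^4)$, the two gradient factors being paired off exactly as in \eqref{eq:second} and the Schwarz inequality handling the mixed summands (one may assume $h^{1/2}\ell^2\le1$, otherwise the assertion is trivial from the a~priori estimate $|\Tr[gf(H)]|\le C\|g\|_\infty\ell^2$ valid by Appendix~\ref{apendixD}, and this absorbs cross terms such as $h^{3/2}\ell^6$). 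On the outer region $|\x|\le\ell/2$ and $|\y|>\ell$ force $|\x-\y|>\ell/2$, so the factor $e^{-\delta\eta|\x-\y|/r}$ in Propositions~\ref{diamag} and \ref{kernelder} yields $e^{-\delta\eta\ell/(2r)}$; bounding $|\delta\A(\y)|\le M|\y|\le2M|\x-\y|$ there and using $t^Ne^{-t}\le(N/e)^N$ gives an outer contribution of size $(r/\eta)^L\ell^{-N}$ for every $N$. In every case the remaining $\y$- and $\y'$-integrals converge --- the $|\y-\x|^{-1}$ and logarithmic singularities of $\mathbf\Green_1$, $\Green_0$ and $\Green$ are integrable in two dimensions --- up to a power $(r/\eta)^L$, which I kill by choosing the almost-analytic extension so that $\left|\tfrac{\partial f_a}{\partial\overline z}(z)\right|\le C_L(\eta/r^2)^L$ with $L$ large, making the $z$-integral over $\mathcal D$ converge. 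Collecting the bounds yields the claimed pointwise estimate, hence the proposition.

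I do not expect a single hard step: this is essentially a careful, lower-order reprise of Proposition~\ref{bulk}. The one genuinely new feature --- and the reason the error is only $O(h^{1/2}\ell^2)$ instead of $O(h)$ --- is that the constant-field Dirichlet operator on the half-plane has no rotational symmetry about a boundary point, so Mehler's formula no longer renders the first-order contribution of $\delta\A$ odd and that contribution cannot be cancelled; the price for this is paid up front in the size of the error, after which the remaining work (resolvent and kernel bounds for the Dirichlet realization up to the boundary, and the trace-class/diagonal-integration bookkeeping) is, as in Proposition~\ref{bulk}, handled by the exponential-weight estimates, the only additional care being the accounting of powers of $\ell$.
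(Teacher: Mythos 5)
Your plan parallels the proof of Proposition~\ref{bulk}, but the paper deliberately does \emph{not} take this route for Proposition~\ref{bdry}, and there is a genuine obstruction to doing so. The paper observes explicitly that $(H_0-z)^{-1}W(H-z)^{-1}$, even when multiplied by cut-offs on both sides, is \emph{not} trace class (the derivative in $W$ makes one factor miss Hilbert--Schmidt by a logarithm in two dimensions). Its fix is to subtract the analytic operator $(H_0+1)^{-1}W(H+1)^{-1}$ from the integrand of the Helffer--Sj\"ostrand integral, rewrite the difference so that a trace-class factor $\eta(H_0+1)^{-2}$ or $(H+1)^{-2}\eta$ peels off (Proposition~\ref{prop15} gives $\mathcal O(\ell^2)$ for its trace norm), and then bound the remaining factors in \emph{operator} norm by splitting $W=W\eta+W(1-\eta)$, getting $h^{1/2}\ell^2+h\ell^4$ from Lemma~\ref{deltaab} on the inner piece and $\ell^{-N}$ from Proposition~\ref{prop4} on the outer piece. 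No pointwise kernel estimate on the half-plane is ever invoked.

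Your proposal, by contrast, wants to write $\Tr[g(f(H)-f(H_0))]=\int g(\x)[f(H)-f(H_0)](\x,\x)\,d\x$ and then reproduce the pointwise bound from the proof of Proposition~\ref{bulk} term by term. Two ingredients of that argument are not available on the half-plane as you assert. First, Proposition~\ref{kernelder} (pointwise bounds on the kernel of $(-i\nabla-\A_0)(H_0-z)^{-1}$) is proved only for the constant-field operator on the \emph{whole} plane, using Mehler's formula explicitly; your claim that it ``remains valid for the Dirichlet realization'' because ``the Combes--Thomas argument is insensitive to the boundary condition'' is not justified, since the proof is not a Combes--Thomas argument --- there is no explicit Mehler-type heat kernel for the Dirichlet half-plane with constant magnetic field. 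Second, the existence and continuity of the diagonal kernel of $f(H_D)$ is established in Appendix~\ref{apendixD} only for the whole-space operator; it is plausibly extendable, but you use it without an argument, precisely at the point where the paper's trace-norm strategy avoids needing it. Absent a substitute for Proposition~\ref{kernelder} on the half-plane (say, via $H^2$-regularity up to the boundary and Sobolev embedding), or a passage to Cauchy--Schwarz in $L^2$ using Corollary~\ref{localizHS} and Proposition~\ref{comm} rather than pointwise bounds, the pointwise-kernel route has a real gap. As a side remark, you also use the gauge of Lemma~\ref{deltaa} rather than the half-plane adapted gauge of Lemma~\ref{deltaab} (which produces $|\delta\A|\lesssim x_2|\x|$ rather than $|\x|^2$ and confines $\delta\A$ to a single component); this is not by itself fatal, but the paper's gauge is what makes the operator-norm estimates in its Step~2 come out cleanly.
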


We emphasize that, in contrast to Proposition \ref{bulk}, we only impose conditions on $B$ and its first derivative. Therefore the result is not as precise as those from Proposition~\ref{bulk} and Corollary~\ref{bulkcor}. This, however, will be sufficient for the application we have in mind. In our proof of Theorem~\ref{theorem2} we will choose $\ell=h^{-\epsilon}$ for some $\epsilon>0$. If $\epsilon<1/6$ and $N$ is chosen large enough, then the error will be $o(\ell)=o(h^{-\epsilon})$, whereas we shall compute $\Tr g f((-i\nabla -\A_0)^2 )$ up to order $\mathcal O(\ell)$.

Before turning to the proof of Proposition \ref{bdry} we deduce the following consequence.

\begin{corollary}\label{bdrycor}
Given $f\in\mathcal S(\R)$ and $M>0$, there is a constant $C>0$ with the following property: 
Let $B \in C^1(\R^2_+)$ satisfy $\|B\|_\infty+\|\nabla B\|_\infty \leq M$ and let $\x_0\in \R\times\{0\}$, $h\in(0,1]$ and $\ell\in[1,\infty)$ satisfy
\begin{equation}\label{eq:bloccor}
\sup_{\x\in\R^2_+\,,\, |\x-\x_0|\leq \ell} |\nabla B(\x)| \leq M h^{1/2}
\quad\text{and}\quad
\inf_{\x\in\R^2_+\,,\, |\x-\x_0|\leq \ell} B(\x) \geq M^{-1}
\,.
\end{equation}
Finally, assume that $g\in C^1(\{|\cdot-\x_0|\leq \ell /2\})$ satisfies
\begin{equation}\label{eq:gass}
\sup_{\x\in \R^2_+,\, |\x-\x_0|\leq \ell/2} |g(\x)| + \ell^{-1} \sup_{\x\in \R^2_+,\, |\x-\x_0|\leq \ell/2} |\nabla g(\x)| \leq M \,.
\end{equation}
Then for any $\A\in C^2(\R^2_+,\R^2)$ with $\curl \A=B$ one has
\begin{align}\label{eq:bdrycor}
& \left| \Tr g f((-i\nabla -\A)^2 ) - \frac1{2\pi} \sum_{k=1}^\infty \int_{\R^2} g(\x) b_k(B(\x),f) B(\x) \,d\x 
\right. \notag\\
& \qquad\qquad\qquad\qquad\ \ \left. 
- \frac1{2\pi} \sum_{k=1}^\infty \int_{\R} g(x_1,0) s_k(B(x_1,0),f) \sqrt{B(x_1,0)} \,dx_1 \right| \notag\\
& \qquad\qquad\leq C \ell^2 \left(h^{1/2}\ell^2 + h\ell^4 + \ell^{-2} \right) \,.
\end{align}
Here $(-i\nabla -\A)^2$ is the Dirichlet realization of the corresponding operator in $L^2(\R^2_+)$.
\end{corollary}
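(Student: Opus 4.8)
The plan is to reduce the statement to a \emph{constant}-field computation on the half-plane. Write $B_0:=B(\x_0)$. Since the half-plane is simply connected, a gauge transformation shows that $\Tr g\,f((-i\nabla-\A_0)^2)$ is independent of the choice of $\A_0$ with $\curl\A_0\equiv B_0$, so I fix the translation-invariant gauge $\A_0(\x)=(-B_0 x_2,0)$. Applying Proposition \ref{bdry} with $N$ large,
\[
\Tr g\,f((-i\nabla-\A)^2) = \Tr g\,f((-i\nabla-\A_0)^2) + \mathcal O\!\left(\ell^2\|g\|_\infty\big(h^{1/2}\ell^2+h\ell^4+\ell^{-N}\big)\right),
\]
so it remains to identify $\Tr g\,f(H_0)$, with $H_0:=(-i\nabla-\A_0)^2$ on $L^2(\R^2_+)$, with the two explicit integrals of \eqref{eq:bdrycor} evaluated at the frozen field $B_0$, and then to restore the $\x$-dependence of $B$ and $g$.

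\textbf{The constant-field trace.} Here I would take the partial Fourier transform in $x_1$, which decomposes $H_0$ into the fibre operators $(\xi+B_0x_2)^2-\partial_{x_2}^2$ on $L^2((0,\infty))$ with Dirichlet condition at $0$; after the rescaling $x_2\mapsto B_0^{-1/2}x_2$ each fibre becomes $B_0$ times the model operator \eqref{modello} at parameter $B_0^{-1/2}\xi$, with eigenvalues $B_0e_k(B_0^{-1/2}\xi)$ and eigenfunctions built from $\psi_k(\cdot,B_0^{-1/2}\xi)$. Undoing the transform, the diagonal kernel (which is $x_1$-independent and continuous) is
\[
f(H_0)(\x,\x)=\frac{B_0}{2\pi}\int_{\R}\sum_{k\ge1} f\big(B_0e_k(\eta)\big)\,\big|\psi_k(B_0^{1/2}x_2,\eta)\big|^2\,d\eta .
\]
Using \eqref{eq:landaudiag} I set $\rho_{B_0}(x_2):=f(H_0)(\x,\x)-\frac{B_0}{2\pi}\sum_{k\ge1}b_k(B_0,f)$; substituting $t=B_0^{1/2}x_2$ and comparing with \eqref{eq:c1den} gives the key identity $\int_0^\infty\rho_{B_0}(x_2)\,dx_2=\frac{\sqrt{B_0}}{2\pi}\sum_{k\ge1}s_k(B_0,f)$, the interchange of $\sum_k$ and $\int_0^\infty dt$ being legitimate by Lemma \ref{salternative}. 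Since $g$ is compactly supported, $\Tr g\,f(H_0)=\int_{\R^2_+}g(\x)f(H_0)(\x,\x)\,d\x$, whence
\[
\Tr g\,f(H_0)=\frac{B_0}{2\pi}\sum_{k\ge1}b_k(B_0,f)\int_{\R^2_+}\!\! g\,d\x \;+\; \frac{\sqrt{B_0}}{2\pi}\sum_{k\ge1}s_k(B_0,f)\int_{\R}g(x_1,0)\,dx_1 \;+\; E_1,
\]
with $E_1:=\int_{\R^2_+}\big(g(x_1,x_2)-g(x_1,0)\big)\rho_{B_0}(x_2)\,d\x$.

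\textbf{Error bookkeeping.} Three errors remain. First $E_1$: using $|g(x_1,x_2)-g(x_1,0)|\le x_2\|\nabla g\|_\infty$ together with the bound on $\nabla g$ from \eqref{eq:gass}, the $x_1$-support of $g$ having length $\lesssim\ell$, and the rapid decay of $\rho_{B_0}(x_2)$ as $x_2\to\infty$ --- uniform for $B_0$ in compact subsets of $(0,\infty)$, which is available by \eqref{eq:bloccor} and enters already in the proof of Lemma \ref{salternative} --- one gets $|E_1|\le C$. Second, freezing $B$ at $\x_0$: since $|B(\x)-B_0|\le Mh^{1/2}\ell$ on the support of $g$, and since $\beta\mapsto\frac{\beta}{2\pi}\sum_k b_k(\beta,f)$ and $\beta\mapsto\sqrt{\beta}\sum_k s_k(\beta,f)$ are $C^1$ on $[M^{-1},\|B\|_\infty]$ (the latter by Lemma \ref{salternative}) with derivatives bounded in terms of $M$, the difference between the frozen integrals above and the unfrozen integrals of \eqref{eq:bdrycor} is $\mathcal O(h^{1/2}\ell\cdot\ell^2)+\mathcal O(h^{1/2}\ell\cdot\ell)=\mathcal O(h^{1/2}\ell^3)$. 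Third, the Proposition \ref{bdry} error above, which for $N\ge2$ is $\mathcal O\!\big(\ell^2(h^{1/2}\ell^2+h\ell^4+\ell^{-2})\big)$ since $\|g\|_\infty\le M$. Adding the three contributions and using $\ell\ge1$ yields exactly the bound in \eqref{eq:bdrycor}.

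\textbf{Main obstacle.} The genuinely hard analytic input --- comparison of a variable-field Dirichlet operator on a half-plane with a frozen model --- is already provided by Proposition \ref{bdry}, so the corollary is mostly the explicit model computation plus bookkeeping. The one point requiring care is the \emph{uniformity} in $B_0\in[M^{-1},\|B\|_\infty]$ of the rapid decay of $\rho_{B_0}$ and of the $C^1$-bounds for $\sum_k s_k(\beta,f)$: both follow by scaling $B_0$ out and analysing the model operator \eqref{modello}, but they are precisely what makes the splitting of $E_1$ and the freezing of $B$ quantitatively effective.
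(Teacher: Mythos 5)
Your proposal follows essentially the same route as the paper: reduce via Proposition \ref{bdry} to the constant-field half-plane operator, diagonalize it explicitly via partial Fourier transform and rescaling to the model operator \eqref{modello}, split the diagonal kernel into the whole-plane Landau density plus a boundary correction $\rho_{B_0}(x_2)$ whose $x_2$-integral yields $\sum_k s_k$, control the replacement of $g(x_1,x_2)$ by $g(x_1,0)$ using the first-moment bound (the content of the paper's Lemma \ref{moment}, which you should cite explicitly rather than pointing back to Lemma \ref{salternative}'s proof), and finally restore the $\x$-dependence of $B$ via the $C^1$ regularity of $\beta\mapsto\sum_k b_k(\beta,f)\beta$ and $\beta\mapsto\sqrt\beta\sum_k s_k(\beta,f)$ from Lemma \ref{1.4bis}/\ref{salternative}. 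The bookkeeping of the three error contributions is the same (your $h^{1/2}\ell^3$ for the freezing step is a touch more conservative than the paper's $h^{1/2}\ell^2$, but both are dominated by $\ell^2\cdot h^{1/2}\ell^2$, so the final bound is unaffected).
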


\begin{remark}\label{general}
There is nothing special about the $1/2$ in the support condition on $g$ in Proposition~\ref{bdry} and Corollary \ref{bdrycor}. Indeed, the proof below works equally well when the support of $g$ is contained in $\{|\cdot-\x_0|\leq \theta \ell \}$ for a fixed ($\ell$-independent) $\theta\in(0,1)$. Of course, then the constant will also depend on $\theta$. Similarly, there is nothing special about the lower bound $1$ on $\ell$. Again the proof works equally well when $\ell\in[\ell_0,\infty)$ for a fixed ($h$-independent) $\ell_0>0$. The constant will also depend on $\ell_0$.
\end{remark}

\begin{proof}[Proof of Corollary \ref{bdrycor}]
Our goal is to replace the term $\Tr g f((-i\nabla -\A_0)^2 )$ in Proposition \ref{bdry} by the two integrals on the left side of \eqref{eq:bdrycor} and to control the error. To do this, we first note that the operator $(-i\nabla -\A_0)^2$ can be diagonalized explicitly. By gauge invariance we may assume that $\A_0(\x)=b\, (-x_2,0)$ with $b=B(\x_0)$. Applying a Fourier transform with respect to the $x_1$ variable, we arrive at the operator $-\frac{d^2}{dx_2^2} + (p_1 + bx_2)^2$. Via the change of variables $t =b^{\frac 12} x_2$ this operator is unitarily equivalent to $ b\left(-\frac{d^2}{dt^2} + (b^{-\frac 12} p_1 + t)^2\right)$. This is the operator \eqref{modello} with $\xi = b^{-\frac12} p_1$. We conclude that the eigenvalues and eigenfunctions of the operator $-\frac{d^2}{dx_2^2} + (p_1 + bx_2)^2$ are given by $b e_j(b^{-\frac 12} p_1)$ and $b^\frac 14 \psi_j(b^{\frac 12} x_2,b^{-\frac 12} p_1 )$. Hence
\begin{align}\label{asasea6}
& \Tr g f((-i\nabla -\A_0)^2 ) = \frac{b}{2\pi} \sum_{k=1}^\infty \int_{\R^2} g(\x) b_k(b,f) \,d\x \\
& + \frac{1}{2\pi}\sum_{k=1}^\infty \int_{\R}dx_1\int_{0}^\infty dx_2 \, g(\x)
\left\{ \int_{\R} b^{1/2} f(b e_k(b^{-\frac 12} p_1)) |\psi_k(b^\frac12 x_2,b^{-\frac12} p_1)|^2 dp_1 - bf(b(2k-1))\right\} \,. \nonumber
\end{align}
Since the function $ \lambda \mapsto \sum_{k} f(\lambda(2k-1))$ is in $C^1(0,\infty)$ and since $\nabla B$ is controlled on the support of $g$ by $M h^{1/2}$, we have the estimate
\begin{equation}\label{asasea7}
 \left \vert \frac{b}{2\pi} \sum_{k=1}^\infty \int_{\R^2} g(\x) b_k(b,f) \,d\x 
 - \frac1{2\pi} \sum_{k=1}^\infty \int_{\R^2} g(\x) b_k(B(\x),f) B(\x) \,d\x \right\vert 
 \leq C\;h^{1/2} \ell^2 \,.
\end{equation}
Here we also took into account that $|\supp g |\leq (\pi/8) \ell^2$.

The second term on the right side of \eqref{asasea6} needs more care. After rescaling $x_2$ and $p_1$ it equals
$$
\frac{1}{2\pi}\sum_{k=1}^\infty \int_{\R}dx_1\int_{0}^\infty dt\, g(x_1,t b^{-\frac12}) b^{1/2}
\left\{ \int_{\R} f(b e_k(\xi)) |\psi_k(t,\xi)|^2 d\xi - f(b(2k-1))\right\}
$$
First, we replace $g(x_1, t b^{-\frac12})$ by $g(x_1,0)$. Because of the bound on the derivative of $g$ we have
$$
|g(x_1, t b^{-\frac12})-g(x_1,0)| \leq M \ell^{-1} t b^{-\frac12} \leq M^{3/2} \ell^{-1} t \,.
$$
In Lemma \ref{moment} we prove that
$$
\sum_{k=1}^\infty \int_{0}^\infty dt \, t \left| \int_{\R} f(b e_k(\xi)) |\psi_k(t,\xi)|^2 d\xi - f((2k-1)b)\right| <\infty \,.
$$
This allows us to replace $g(x_1, t b^{-\frac12})$ by $g(x_1,0)$ and then to carry out the $t$ integration. We obtain
\begin{align*}
& \left| \sum_{k=1}^\infty \int_{\R}dx_1\int_{0}^\infty dt\, g(x_1,t b^{-\frac12}) b^{1/2}
\left\{ \int_{\R} f(b e_k(\xi)) |\psi_k(t,\xi)|^2 d\xi - f((2k-1)b)\right\} \right. \notag \\
& \qquad \left. - \sum_{k=1}^\infty \int_{\R}dx_1\, g(x_1,0) b^{1/2} s_k(b,f) \right|
\leq C \,.
\end{align*}
Finally, we use the fact that the function $ \lambda \mapsto \sum_{k} s_k(\lambda,f)$ is in $C^1(0,\infty)$, see Lemma \ref{1.4bis}. This, together with our assumptions on $B$ and $g$, shows that
$$
\left| \sum_{k=1}^\infty \int_{\R}dx_1 g(x_1,0) b^{1/2} s_k(b,f) - \sum_{k=1}^\infty \int_{\R}dx_1 g(x_1,0) \sqrt{B(x_1,0)} s_k(B(x_1,0),f) \right| \leq C h^{1/2} \ell \,.
$$
(Note that in the last step we used $B(0)>0$.) This leads to the bound stated in the corollary.
\end{proof}



We now turn to the proof of Proposition \ref{bdry}. The following lemma is the analogue of Lemma \ref{deltaab}. We omit the proof, which follows the same lines.

\begin{lemma}\label{deltaab}
Assume that $B\in C^1(\R^2_+)$ satisfies $\| B \|_\infty+\|\nabla B\|_\infty \leq M$ and
$$
\sup_{\x\in\R^2_+\,,\, |\x|\leq \ell} |\nabla B(\x)| \leq M h^{1/2}
$$
for some positive $M, h, \ell$. We define
\begin{equation}\label{eq:ab}
\A(\x) := \int_0^1 dt \, B(x_1,tx_2) \, (-x_2,0)\,,
\qquad
\A_0(\x) := B(0)\, (-x_2,0) \,.
\end{equation}
and $\delta\A:=\A-\A_0$. Then for all $\x\in\R^2_+$
\begin{equation}\label{eq:deltaaglobb}
|\delta \A(\x)| \leq 2 M x_2
\qquad\text{and}\qquad
|\Div\, \delta \A(\x)| \leq M x_2 \,.
\end{equation}
Moreover, for all $\x\in\R^2_+$ with $|\x|\leq \ell$
\begin{equation}\label{eq:deltaalocb}
|\delta \A(\x)| \leq M h^{1/2} x_2 |\x|
\qquad\text{and}\qquad
|\Div\delta \A(\x)| \leq M h^{1/2} x_2 \,.
\end{equation}
\end{lemma}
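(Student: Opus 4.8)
The plan is to transcribe the proof of Lemma~\ref{deltaa} almost verbatim, replacing the isotropic Poincar\'e gauge \eqref{eq:a} by the boundary-adapted gauge \eqref{eq:ab}, which ``radiates'' only in the $x_2$-direction and therefore never leaves $\overline{\R^2_+}$. First I would check that \eqref{eq:ab} is admissible: the substitution $s=tx_2$ turns the first component into $A_1(\x)=-\int_0^{x_2}B(x_1,s)\,ds$, while $A_2\equiv 0$, so $\curl\A=-\partial_{x_2}A_1=B$ and likewise $\curl\A_0\equiv B(0)$. With $\delta\A=\A-\A_0$ this gives the explicit representations
$$
\delta\A(\x)=\left(\int_0^1\bigl(B(x_1,tx_2)-B(0)\bigr)\,dt\right)(-x_2,0),
\qquad
\Div\delta\A(\x)=-x_2\int_0^1(\partial_{x_1}B)(x_1,tx_2)\,dt,
$$
the second one because the $x_2$-component of $\delta\A$ vanishes identically, so only $\partial_{x_1}$ of the first component survives.

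The global bounds \eqref{eq:deltaaglobb} are then read off immediately: in the first identity use $|B(x_1,tx_2)-B(0)|\le 2\|B\|_\infty\le 2M$, and in the second use $|(\partial_{x_1}B)(x_1,tx_2)|\le\|\nabla B\|_\infty\le M$; here only the trivial fact that $(x_1,tx_2)\in\overline{\R^2_+}$ for $t\in[0,1]$ is needed. For the local bounds \eqref{eq:deltaalocb}, for $\x\in\R^2_+$ with $|\x|\le\ell$ I would write
$$
B(x_1,tx_2)-B(0)=\int_0^1\nabla B\bigl(\sigma(x_1,tx_2)\bigr)\cdot(x_1,tx_2)\,d\sigma,
$$
and then the only point requiring a moment's care is that \eqref{eq:bloc} controls $\nabla B$ only on $\R^2_+\cap\{|\cdot-\x_0|\le\ell\}$ (with $\x_0=0$ here): one checks that the segment $\{\sigma(x_1,tx_2):\sigma\in[0,1]\}$ has non-negative second coordinate $\sigma tx_2$, hence lies in $\overline{\R^2_+}$, and that $|\sigma(x_1,tx_2)|\le|(x_1,tx_2)|\le|\x|\le\ell$ since $t,\sigma\in[0,1]$. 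Thus $|B(x_1,tx_2)-B(0)|\le Mh^{1/2}|(x_1,tx_2)|\le Mh^{1/2}|\x|$, and integrating in $t$ yields $|\delta\A(\x)|\le Mh^{1/2}x_2|\x|$; the same region is relevant for the argument $(x_1,tx_2)$ in the divergence formula, whence $|(\partial_{x_1}B)(x_1,tx_2)|\le Mh^{1/2}$ and $|\Div\delta\A(\x)|\le Mh^{1/2}x_2$.

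I do not expect any genuine obstacle: every bound is a one-line application of the fundamental theorem of calculus combined with the stated estimates on $B$ and $\nabla B$. The one thing to stay vigilant about is the domain-of-definition issue just flagged, which is exactly why the half-plane gauge \eqref{eq:ab} is tailored so that all the line segments used in the estimates remain in $\overline{\R^2_+}$. Note also that, in contrast to Lemma~\ref{deltaa}, one needs no third (Hessian) estimate analogous to \eqref{eq:deltaataylor}, since Proposition~\ref{bdry} only uses first-order control on $B$; this is why the lemma --- and its proof --- is shorter than its whole-plane counterpart.
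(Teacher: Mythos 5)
Your proof is correct and is exactly the adaptation the authors have in mind: the paper omits the proof of Lemma~\ref{deltaab} with the remark that it ``follows the same lines'' as that of Lemma~\ref{deltaa}, and your transcription does precisely that, with the key added observation (rightly highlighted) that the gauge \eqref{eq:ab} integrates only along vertical segments so all arguments of $B$ and $\nabla B$ stay in $\overline{\R^2_+}$ with norm at most $|\x|$.
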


\begin{proof}[Proof of Proposition \ref{bdry}]

\emph{Step 1. Representation via the Helffer-Sj\"ostrand formula}

We may assume that $\x_0=0$. Moreover, since the local traces are gauge-invariant, we may assume that $\A$ and $\A_0$ are given by \eqref{eq:ab}. We write $H:= (-i\nabla - \A)^2$ and $H_0:= (-i\nabla - \A_0)^2$ in $L^2(\R^2_+)$ with Dirichlet boundary conditions. With a justification as in the proof of Proposition~\ref{bulk} one has the resolvent identity in the form
$$
(H-z)^{-1} = (H_0-z)^{-1} - (H_0-z)^{-1} W (H-z)^{-1}
$$
where 
$$
W:= - (-i\nabla-\A_0) \cdot \delta\A - \delta\A\cdot (-i\nabla-\A_0) + |\delta\A|^2
$$
and $\delta\A := \A-\A_0$. Hence by the Helf\-fer--Sj\"ostrand formula for any almost analytic extension $f_{a}$ of $f$ we can write:
\begin{align*}
f(H) - f(H_0) = - \frac{1}{\pi}\int_{\mathcal{D}} \frac{\partial
  f_{a}}{\partial \overline{z}}(H_0-z)^{-1} W (H-z)^{-1} \,dxdy \,,
\end{align*}
where again $\mathcal D = \{z\in\mathbb{C}:\; |\Im z|<1\}$. In contrast to the proof of Proposition \ref{bulk}, however, this identity is not sufficient for our purposes, since the operator $(H_0-z)^{-1} W (H-z)^{-1}$, even when multiplied by cut-off functions from the left and from the right, is not trace class. Our way out is to note that the above integral does not change if we subtract an analytic (operator-valued) function to the integrand. We subtract $(H_0+1)^{-1} W (H+1)^{-1}$ and compute
\begin{align*}
& (H_0-z)^{-1} W (H-z)^{-1} - (H_0+1)^{-1} W (H+1)^{-1} \\
&\quad = (z+1) \left( (H_0-z)^{-1} (H_0+1)^{-1} W (H-z)^{-1} + (H_0+1)^{-1} W (H+1)^{-1}(H-z)^{-1} \right) \,.
\end{align*}
Thus
\begin{align*}
f(H) - f(H_0) = & - \frac{1}{\pi} \int_{\mathcal{D}} dxdy\, \frac{\partial f_{a}}{\partial \overline{z}} (z+1) \notag \\
 & \times \left( (H_0-z)^{-1} (H_0+1)^{-1} W (H-z)^{-1} + (H_0+1)^{-1} W (H+1)^{-1}(H-z)^{-1} \right) \,.
\end{align*}

Next, we denote by $\eta$ the characteristic function of the disk $\{ |\x|\leq \ell\}$. Then, by cyclicity of the trace and the fact that $\eta g = g$, we have
\begin{align*}
& \Tr g \left( f(H) - f(H_0) \right) = - \frac{1}{\pi} \int_{\mathcal{D}} dxdy\, \frac{\partial f_{a}}{\partial \overline{z}} (z+1) \notag \\
 & \qquad\times \Tr \left( \eta (H_0-z)^{-1} (H_0+1)^{-1} W (H-z)^{-1} g + g (H_0+1)^{-1} W (H+1)^{-1}(H-z)^{-1} \eta \right) \,.
\end{align*}
In the remainder of this proof we will show that the operators $\eta (H_0-z)^{-1} (H_0+1)^{-1} W (H-z)^{-1} g$ and $g (H_0+1)^{-1} W (H+1)^{-1}(H-z)^{-1} \eta$ are trace class and we derive bounds on their trace norms. 

We begin by bounding
$$
\| \eta (H_0-z)^{-1} (H_0+1)^{-1} W (H-z)^{-1} g \|_{B_1(L^2(\R^2_+))} \leq 
\| \eta (H_0-z)^{-1} (H_0+1)^{-1} \|_{B_1(L^2(\R^2_+))} \| W (H-z)^{-1} g \|
$$
and
$$
\| g (H_0+1)^{-1} W (H+1)^{-1}(H-z)^{-1} \eta \|_{B_1(L^2(\R^2_+))} \leq 
\| (H+1)^{-1}(H-z)^{-1} \eta \|_{B_1(L^2(\R^2_+))} \| g (H_0+1)^{-1} W \| \,.
$$
The trace class factors on the right side are controlled by
\begin{equation}
\label{eq:tracehs}
\| \eta (H_0+1)^{-2} \|_{B_1(L^2(\R^2_+))} \leq C \ell^2
\quad\text{and}\quad
\| (H+1)^{-2} \eta \|_{B_1(L^2(\R^2_+))} \leq C \ell^2 \,.
\end{equation}
These bounds follows from Proposition \ref{prop15} together with the fact that one needs $\mathcal O(\ell^2)$ squares of unit diameter to cover the support of $\eta$. In addition, we use the fact that $(H_0+1)(H_0-z)^{-1} = 1+ (z+1)(H_0-z)^{-1}$, which implies
\begin{equation}
\label{eq:z-1}
\| (H_0+1)(H_0-z)^{-1} \| \leq 1+|z+1| |\Im z|^{-1} \,.
\end{equation}
A similar bound holds for $H$.

To summarize, we have shown that
\begin{align}
\label{eq:hsbound}
\left| \Tr g \left( f(H) - f(H_0) \right) \right| \leq C \ell^2 \int_{\mathcal{D}} dxdy\, & \left| \frac{\partial f_{a}}{\partial \overline{z}}\right| |z+1| \left( 1+ \frac{|z+1|}{|\Im z|} \right) \notag \\
& \times \left( \| W (H-z)^{-1} g \| + \| g (H_0+1)^{-1} W \| \right) \,,
\end{align}

We now decompose $W$ into a part close to the origin and a part far from the origin. More precisely, we write
$$
W= \eta W + (1-\eta)W = W\eta + W(1-\eta) \,.
$$
We employ the two different forms for the two different operators we have to estimate. In the following two steps of the proof we will derive the bounds
\begin{equation}
\label{eq:w0}
\| \eta W (H-z)^{-1} g \| + \| g (H_0+1)^{-1} W \eta \| \leq C \|g\|_\infty \frac{r^2}{\eta} \left( h^{1/2} \ell^2 + h \ell^4 \right)
\end{equation}
and
\begin{equation}
\label{eq:w1}
\| (1-\eta) W (H-z)^{-1} g \| + \| g (H_0+1)^{-1} W (1-\eta) \| \leq C_N \| g \|_\infty \, \frac{r^{6+N}}{\eta^{5+N}} \, \ell^{-N}
\end{equation}
for $z\in\mathcal D$. We recall the notations $r=\sqrt{\langle \Re z\rangle}$ and $\eta=|\Im z|$. The constants $C$ and $C_N$ in \eqref{eq:w0} and \eqref{eq:w1} depend only on $M$ and the latter one also on $N\geq 1$. These bounds, when inserted into \eqref{eq:hsbound}, are integrable with respect to $z$ (provided the decay order $N$ in \eqref{apatra2} of the almost analytic extension is large enough) and lead to the behavior claimed in the proposition.

\bigskip

\emph{Step 2. Bounds on $\eta W$ and $W\eta$.}
The idea behind our estimate in this case is that the coefficients in the operators $\eta W$ and $W\eta$ are non-zero only close to the origin, and even there, they are small, as quantified by Lemma \ref{deltaab}. We shall show that
\begin{equation}
\label{eq:w0proof}
\| (H_0+1)^{-1/2} W \eta \|_\infty \leq C \left( h^{1/2} \ell^2 + h \ell^4 \right)
\quad\text{and}\quad
\| \eta W (H+1)^{-1/2} \|_\infty \leq C \left( h^{1/2} \ell^2 + h \ell^4 \right)
\end{equation}
with a constant $C$ depending only on $M$. This, together with \eqref{eq:z-1}, implies \eqref{eq:w0}.

To prove \eqref{eq:w0proof}, we compute
$$
W \, \eta = - 2 (-i\partial_1 - A_{0,1})\,\eta\, \delta A_{1} + \eta\, V^* \,,
$$
where again $V^*=(\delta A_1)^2 - i\partial_1\delta A_1$. Here $A_{0,1}$ denotes the first component of $\A_0$ and similarly for $\delta A_1$. The Schwarz inequality implies that
$$
\left| \left( \psi, W \eta \phi \right) \right| 
\leq 2 \| \eta\, \delta A_1 \|_\infty \| (-i\partial_1 - A_{0,1})\psi\| \|\phi\|
+ \| \eta V \|_\infty \|\psi\| \|\phi\| \,.
$$
Lemma \ref{deltaab} together with the properties of $\eta$ implies that $\| \eta\, \delta A_1 \|_\infty \leq 4 M h^{1/2} \ell^2$ and $\| \eta V \|_\infty \leq 16 M^2 h \ell^4+ 2M h^{1/2} \ell$. Thus
$$
\left| \left( \psi, W_0 \psi \right) \right| \leq C \left( h^{1/2} \ell^2 + h \ell^4 \right) (\psi, (H_0+1) \psi)^{1/2} \|\phi\|^2
$$
with $C= 8M + 16M^2 + 2M$. This proves the first inequality in \eqref{eq:w0proof}. The second one follows in the same way by using the representation
\begin{equation}
\label{eq:w0rep2}
\eta W = -2 \eta \, \delta A_1 \, (-i\partial_1 - A_{1}) - \eta V^* \,.
\end{equation}

\bigskip

\emph{Step 3. Bounds on $(1-\eta)W$ and $W(1-\eta)$.}
The idea behind our estimate in this case is that the coefficients in the operators $(1-\eta)W$ and $W(1-\eta)$ vanish on the support of $g$. This fact, combined with the exponential decay of the resolvents, leads to errors which decay faster than any polynomial in $\ell$. To make this precise we apply Proposition \ref{prop4} to the operators $(1-\eta)W$ and $W(1-\eta)$ and deduce that given $M$ there are constants $C$ and $\delta>0$ such that for any $z\in\mathcal D$ and for any characteristic functions $\chi$ and $\tilde \chi$ of disjoint sets of diameter $\leq 1$ and of distance $d\geq 1$ one has
$$
\| \chi (1-\eta) W (H-z)^{-1} \tilde\chi \| \leq C \frac{r^2}{\eta} e^{-\delta \eta d/r} \| \chi (1-\eta) \x^2 \|_\infty \,.
$$
The fact that $C$ only depends on $M$ follows from the fact that the left side of \eqref{hcadoua7} can be bounded in terms of $M$ and $\| \chi (1-\eta) \x^2 \|_\infty$. Indeed, by the argument following \eqref{hcadoua7} it suffices to bound the norm of
$e^{{\alpha \langle \cdot -\x_0\rangle }}\chi (1-\eta) W e^{-\alpha \langle \cdot -\x_0\rangle }(H_D-i)^{-1}$ in terms of these quantities. This follows easily from the representation (cf. \eqref{eq:w0rep2})
\begin{equation}
\label{eq:w1rep2}
(1-\eta) W = -2 (1-\eta)\, \delta A_1 \,(-i\partial_1 - A_{1}) - (1-\eta) V^* \,,
\end{equation}
together with the bounds from Lemma \ref{deltaab} on $\delta A_1$ and $V$.

We now choose the $\chi_j$'s to be characteristic functions of sets of diameter $\leq 1$ such that $\sum_j \chi_j$ is the characteristic function of the support of $1-\eta$. (For instance, let $\chi_j$ be the characteristic function of a square on the lattice $(2^{-1/2} \mathbb Z)^2$ times the characteristic function of the support of $1-\eta$.) Similarly, let the $\tilde\chi_k$'s be characteristic functions of sets of diameter $\leq 1$ such that $\sum_k \tilde\chi_k$ is the characteristic function of the support of $g$. Note that since $g$ is supported in $\{|\x|\leq \ell/2\}$ and $1-\eta$ is supported in $\{|\x|\geq \ell \}$, the distance $d_{j,k}$ between the supports of $\tilde\chi_k$ and $\chi_j$ is at least $\ell/2$. Thus the above bound allows one to deduce (for $\ell\geq 2$ at least; the remaining case is easily handled)
\begin{multline*}
\| (1-\eta) W (H-z)^{-1} g \|_\infty \leq \\
C \| g\|_\infty \frac{r^2}{\eta} \left( \sup_j \| \chi_j \x^2 \|_\infty \sum_k e^{-\delta \eta d_{j,k}/r}  \right)^{1/2} \left( \sup_k \sum_j e^{-\delta \eta d_{j,k}/r} \| \chi_j \x^2 \|_\infty \right)^{1/2} \,.
\end{multline*}
Using elementary estimates one can bound the product of the two square roots by a constant times $\ell (r/\eta)^3 e^{-\tilde\delta \eta\ell/r}$ for some $0<\tilde\delta<\delta$. Given $N\geq 0$ we use the bound $t^N e^{-t} \leq (N/e)^N$ to arrive at the first bound claimed in \eqref{eq:w1}. The proof of the second one is similar. This concludes the proof of Proposition \ref{bdry}.
\end{proof}


\section{Local asymptotics for operators on a perturbed half-plane}\label{phalfspace}

We now extend the results from the previous section to `perturbed half-planes'. By this we mean domains of the form
\begin{equation}
\label{marrtie1}
\mathcal P = \{ \y\in\R^2 :\ y_2 > q(y_1) \} \,.
\end{equation}
with a function $q\in C^2(\R)$. We note that the change of variables $\y\mapsto\x=(y_1,y_2-q(y_1))$ transforms $\mathcal P$ into $\R^2_+$. In this section, we will typically use a $\tilde{}\,\,$ to denote an object on $\R^2_+$ which corresponds to the un-tilded object on $\mathcal P$. What `corresponds' means depends on the object we are dealing with. For scalar functions $g$ on $\mathcal P$ we define functions $\tilde g$ on $\R^2_+$ by
$$
\tilde g(\x) = g(x_1,x_2+q(x_1)) \,.
$$
For vector fields $\A\in C^2(\mathcal P,\R^2)$ we define $\tilde\A$ on $\R^2_+$ by
$$
\tilde\A(\x) = \left(A_1(x_1,x_2+q(x_1)) + q'(x_1) A_2(x_1,x_2+q(x_1)), A_2(x_1,x_2+q(x_1))\right) \,.
$$
This definition is motivated by the fact that $\curl\tilde\A(\x) = \curl\A(x_1,x_2+q(x_1))$, that is, $\tilde\A$ generates the same magnetic field as $\A$ in the new coordinates. Finally, if $H=(-i\nabla_\y - \A)^2$ is a magnetic operator in $L^2(\mathcal P)$, with Dirichlet boundary conditions, say, then
$$
\tilde H = \left( -i\partial_{x_1} - \tilde A_1 - q' \left( -i \partial_{x_2} - \tilde A_2 \right) \right)^2 + \left( -i \partial_{x_2} - \tilde A_2 \right)^2
$$
denotes the corresponding operator on $L^2(\R^2_+)$. The operators $H$ and $\tilde H$ are unitarily equivalent via the above change of variables.

The next proposition, which is the main result of this section, quantifies in which sense $\tilde H$ is close to the simpler operator $(-i\nabla -\tilde\A)^2$ when the boundary is almost flat.

\begin{proposition}\label{bdry2}
Given $f\in\mathcal S(\R)$, $M>0$ and $N\geq 0$, there is a constant $C>0$ such that for any $q\in C^2(\R)$ satisfying $\|q'\|_\infty + \|q''\|_\infty \leq M$, for any $B \in C(\R^2_+)$ satisfying $\|B\|_\infty \leq M$, for any $\y_0\in \partial \mathcal P$, $h\in(0,1]$ and $\ell\in[1,M h^{-1/2}]$ such that $q'(y_{0,1})=0$ and
\begin{equation}\label{eq:qloc}
\sup_{\y\in\mathcal P\,,\, |\y-\y_0|\leq \ell} |q''(y_1)| \leq M h^{1/2} \,,
\end{equation}
for any $\A\in C^1(\mathcal P,\R^2)$ satisfying $\curl \A=B$ and for any $g\in L^\infty(\{|\cdot-\y_0|\leq \ell /2\})$ one has
\begin{equation}\label{eq:bdry2}
\left| \Tr_{L^2(\mathcal P)} g f((-i\nabla -\A)^2 ) - \Tr_{L^2(\R^2_+)} \tilde g f((-i\nabla - \tilde\A)^2 ) \right| 
\leq C \ell^2 \|g\|_\infty \left( h \ell + \ell^{-N}\right) \,.
\end{equation}
Here $(-i\nabla -\A)^2$ and $(-i\nabla -\tilde\A)^2$ are the Dirichlet realizations of the corresponding operators in $L^2(\mathcal P)$ and $L^2(\R^2_+)$, respectively.
\end{proposition}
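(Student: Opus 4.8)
The plan is to imitate the proof of Proposition~\ref{bdry}, the one genuinely new feature being that the perturbation one has to control is now a \emph{second}-order differential operator. After translating so that $\y_0=0$ and subtracting a constant from $q$ so that $q(0)=q'(0)=0$, the area-preserving change of variables $\y\mapsto\x=(y_1,y_2-q(y_1))$ turns $\Tr_{L^2(\mathcal P)}gf((-i\nabla-\A)^2)$ into $\Tr_{L^2(\R^2_+)}\tilde g f(\tilde H)$ exactly, so the task is to compare $\tilde H$ with $H_0:=(-i\nabla-\tilde\A)^2$, both Dirichlet in $L^2(\R^2_+)$. Writing $\Pi_j:=-i\partial_j-\tilde A_j$ one gets $W:=\tilde H-H_0=-\Pi_1(q'\Pi_2)-(q'\Pi_2)\Pi_1+(q'\Pi_2)^2$, and $b:=q'\Pi_2$ is a self-adjoint first-order operator whose coefficient $q'$ is $\le M$ everywhere and, on the disk $\{|\x|\le\ell\}$, satisfies $|q'|\le Mh^{1/2}\ell$, $|q''|\le Mh^{1/2}$ and $|q|\le Mh^{1/2}\ell^2$ by \eqref{eq:qloc} and $q(0)=q'(0)=0$. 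Thus $W$ is ``small of order $h^{1/2}$'' near $0$, exactly as $\delta\A$ was in Proposition~\ref{bdry}, except that its first two terms are second order.

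I would then run the same Helffer--Sj\"ostrand argument: start from the resolvent identity $(\tilde H-z)^{-1}=(H_0-z)^{-1}-(H_0-z)^{-1}W(\tilde H-z)^{-1}$ and the formula \eqref{azecea6} for $f(\tilde H)-f(H_0)$; subtract the $z$-analytic operator $(H_0+1)^{-1}W(\tilde H+1)^{-1}$ (which leaves the integral unchanged) to make the integrand trace class after cutoffs; and, using $\tilde g\eta=\tilde g$ with $\eta=\mathbf 1_{\{|\x|\le\ell\}}$ together with cyclicity, reduce as in \eqref{eq:hsbound} to trace-norm bounds on $\eta(H_0+1)^{-2}$ and $(\tilde H+1)^{-2}\eta$ (each $\le C\ell^2$ by Proposition~\ref{prop15}) times operator-norm bounds on $W(\tilde H-z)^{-1}\tilde g$ and $\tilde g(H_0+1)^{-1}W$, with polynomial $z$-dependence that is integrable against $\partial f_{a}/\partial\overline z$ once the order $N$ in \eqref{apatra2} is large enough. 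Note that $\Pi_2^2\le\tilde H$ and $\Pi_1^2\le C(M)\tilde H$ in the form sense, so factors such as $\Pi_2(\tilde H-z)^{-1}$, $\Pi_1(\tilde H-z)^{-1}$, $\Pi_2^2(\tilde H-z)^{-1}$ are bounded by powers of $\langle\Re z\rangle/|\Im z|$.

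Splitting $W=(1-\eta)W+\eta W$ and $W=W(1-\eta)+W\eta$ and using each decomposition for the operator it is adjacent to: the ``far'' pieces $(1-\eta)W$ and $W(1-\eta)$ have coefficients of size only $M$, but they are supported at distance $\ge\ell/2$ from $\supp\tilde g$, so peeling one $\Pi$-factor of $W$ onto the neighbouring resolvent and invoking the Combes--Thomas/weighted-resolvent bounds \eqref{hcadoua7}, \eqref{hcapatra1} and Proposition~\ref{prop4} exactly as in Step~3 of the proof of Proposition~\ref{bdry} yields, for every $N$, a bound $C_N\ell^{-N}$ times a power of $\langle\Re z\rangle/|\Im z|$. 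This is a mechanical transcription.

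The substance is in the ``near'' pieces. The term $(q'\Pi_2)^2=(q')^2\Pi_2^2$ carries \emph{two} small factors and, with the bounds above, contributes $O(h\,\ell^{2})$; likewise the iteration remainder that is quadratic in $W$. The two terms linear in $q'$ carry only one small factor and give the naive estimate $O(h^{1/2}\ell)$, which is \emph{not} of the claimed order---and this is the step I expect to be the main obstacle. To recover the extra $h^{1/2}$ one cannot rely on coefficient size alone ($q$ is only $C^2$, so $q'$ has no quantitative Taylor remainder, and $g$ is only $L^\infty$); the gain must come from the structural fact that the leading perturbation is an infinitesimal shear. Concretely, after iterating the resolvent identity once more the linear-in-$q'$ part of the trace difference appears as $-\tfrac1\pi\int_{\mathcal D}\tfrac{\partial f_{a}}{\partial\overline z}\,\Tr[\tilde g(H_0-z)^{-1}W_1(H_0-z)^{-1}]\,dx\,dy$ with $W_1=-\{q'\Pi_2,\Pi_1\}$; one writes $W_1=i[\hat G,H_0]+R$, where $\hat G$ is the first-order generator of the shear $x_2\mapsto x_2+q(x_1)$ and the coefficients of $R$ all contain a factor $q$ or $q''$ (or a first derivative of $\tilde\A$), hence are small, while the commutator term, moved by cyclicity onto $\tilde g$, becomes a commutator $[\hat G,\tilde g]$ carrying $\nabla\tilde g$ against the small coefficient $q$, which after mollifying $\tilde g$ and optimising the smoothing scale produces the missing power of $h^{1/2}$. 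The delicate points are that this manipulation must be organised so as to preserve trace-class-ness (which forces the subtraction trick to be performed first) and that the shear does not map $L^2(\R^2_+)$ to itself, so the boundary terms generated by the commutator have to be accounted for separately; everything else is a routine adaptation of the proofs of Propositions~\ref{bulk} and~\ref{bdry}.
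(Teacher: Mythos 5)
Your overall strategy matches the paper's: after translating so $\y_0=0$, $q(0)=q'(0)=0$, using the unitary change of variables to pass to $L^2(\R^2_+)$, writing $\tilde H=H_0+\tilde W$ with $\tilde W$ built from $q'$, invoking the Helffer--Sj\"ostrand formula with the subtraction trick of the proof of Proposition~\ref{bdry}, decomposing $\tilde W$ into near/far pieces with the cutoff $\tilde\eta$, and using the Combes--Thomas machinery (Proposition~\ref{prop4}) to get $\mathcal O(\ell^{-N})$ from the far piece. Your local bounds $|q'|\le C h^{1/2}\ell$, $|q|\le Ch^{1/2}\ell^2$, $|q''|\le Mh^{1/2}$ on the support of $\tilde\eta$ (Lemma~\ref{q}) are also right.

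The gap is in your treatment of the near piece. You assert that the linear-in-$q'$ terms ``give the naive estimate $O(h^{1/2}\ell)$, which is not of the claimed order,'' and then invent an elaborate shear-generator commutator scheme ($W_1=i[\hat G,H_0]+R$, mollification of $\tilde g$, etc.) to squeeze out an extra $h^{1/2}$. This is a wrong turn for two reasons. First, the paper \emph{does not} attempt any such refinement: it commutes the derivatives in $\tilde W\tilde\eta$ to one side, pushes one derivative through $(H_0+1)^{-1}$ (thereby generating commutator terms $\propto\tilde B$, controlled only by $\|B\|_\infty\le M$ -- a step you gloss over), uses $\|(-i\partial_j-\tilde A_j)^\alpha(H_0+1)^{-1}(-i\partial_k-\tilde A_k)^\beta\|\le 1$, and then reads off $\|(H_0+1)^{-1}\tilde W\tilde\eta\|\le Ch^{1/2}\ell$ directly from the sup bounds on $q'$, $(q')^2$, $q''$. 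Combined with $\|\tilde\eta(H_0+1)^{-2}\|_{B_1}\le C\ell^2$ this gives $C\ell^2(h^{1/2}\ell+\ell^{-N})$, and that is what feeds into Corollary~\ref{bdry2cor}, whose accepted error $C\ell^2(h^{1/2}\ell^2+h\ell^4+\ell^{-2})$ already dominates $C\ell^2\cdot h^{1/2}\ell$. (The factor $h\ell$ in the stated bound \eqref{eq:bdry2} rather than $h^{1/2}\ell$ corresponds to the stronger $|q''|\le Mh$ available in the application, as Remark~\ref{qassrem} hints; under the weaker hypothesis \eqref{eq:qloc} the naive argument delivers $h^{1/2}\ell$, and that is all that is needed.) Second, your proposed fix is not just unnecessary but unworkable as sketched: you yourself flag that the shear generator does not act on $L^2(\R^2_+)$ (it maps $\R^2_+$ into $\mathcal P$) and that $g$ is only $L^\infty$ so $[\hat G,\tilde g]$ carrying $\nabla\tilde g$ is undefined without a mollification that changes the trace; neither issue is resolved. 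The moral is that no gain beyond the naive coefficient bound is required here -- the slack comes from Corollary~\ref{bdrycor}, not from the boundary-straightening step.
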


This proposition, combined with our previous analysis about half-space operators leads to

\begin{corollary}\label{bdry2cor}
Given $f\in\mathcal S(\R)$ and $M>0$ there is a constant $C>0$ with the following property:
Let $q\in C^2(\R)$ satisfy $\|q'\|_\infty + \|q''\|_\infty \leq M$ and let $B \in C^1(\R^2_+)$ satisfy $\|B\|_\infty +\|\nabla B\|_\infty \leq M$. Moreover, let $\y_0\in \partial \mathcal P$, $h\in(0,1]$ and $\ell\in[1,C^{-1}h^{-1/2}]$ be such that $q'(y_{0,1})=0$,
\begin{equation}\label{eq:qloccor}
\sup_{\y\in\mathcal P\,,\, |\y-\y_0|\leq \ell} |q''(y_1)| \leq M h^{1/2} \,,
\end{equation}
and
\begin{equation}\label{eq:bloc2}
\sup_{\x\in\mathcal P\,,\, |\y-\y_0|\leq \ell} |\nabla B(\y)| \leq M h^{1/2}
\quad\text{and}\quad
\inf_{\x\in\mathcal P\,,\, |\y-\y_0|\leq \ell} B(\y) \geq M^{-1}
\,.
\end{equation}
Finally, assume that $g\in C^1(\{|\cdot-\y_0|\leq \ell /2\})$ satisfies
\begin{equation}\label{eq:gass2}
\sup_{\y\in \mathcal P,\, |\y-\y_0|\leq \ell/2} |g(\y)| + \ell^{-1} \sup_{\y\in \mathcal P,\, |\y-\y_0|\leq \ell/2} |\nabla g(\y)| \leq M \,.
\end{equation}
Then for any $\A\in C^2(\mathcal P,\R^2)$ with $\curl \A=B$ one has
\begin{align}\label{eq:bdry2cor}
& \left| \Tr g f((-i\nabla -\A)^2 ) - \frac1{2\pi} \sum_{k=1}^\infty \int_{\mathcal P} g(\y) b_k(B(\y),f) B(\y) \,d\y \right. \notag\\
& \qquad\qquad\qquad\left. - \frac1{2\pi} \sum_{k=1}^\infty \int_{\partial\mathcal P} g(\y) s_k(B(\y),f) \sqrt{B(\y)} \,d\sigma(\y) \right| \notag\\
& \qquad\qquad\qquad\leq C \ell^2 \left(h^{1/2}\ell^2 + h\ell^4 + \ell^{-2} \right) \,.
\end{align}
Here $(-i\nabla -\A)^2$ is the Dirichlet realization of the corresponding operator in $L^2(\mathcal P)$.
\end{corollary}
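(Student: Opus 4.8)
\textbf{Proof strategy for Corollary \ref{bdry2cor}.}
The plan is to combine Proposition \ref{bdry2} with Corollary \ref{bdrycor} in the obvious two-step fashion: first flatten the boundary, then compare with the constant-field model operator on the genuine half-plane. More precisely, I would start from Proposition \ref{bdry2}, which gives
$$
\left| \Tr_{L^2(\mathcal P)} g\, f((-i\nabla-\A)^2) - \Tr_{L^2(\R^2_+)} \tilde g\, f((-i\nabla-\tilde\A)^2) \right| \leq C \ell^2 \|g\|_\infty \left( h\ell + \ell^{-N} \right),
$$
with $N$ chosen large (say $N\geq 2$), so that this error is $\mathcal O(\ell^2(h\ell+\ell^{-2}))$, which is dominated by the right-hand side of \eqref{eq:bdry2cor} since $h\ell \le h^{1/2}$ on the range $\ell \le C^{-1}h^{-1/2}$. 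Then I would apply Corollary \ref{bdrycor} to the flattened data $\tilde g$, $\tilde\A$, $\tilde B := \curl\tilde\A$ on $\R^2_+$: this requires checking that the hypotheses \eqref{eq:bloccor} and \eqref{eq:gass} hold for the tilded objects (with $M$ replaced by a possibly larger constant $M'$ depending only on $M$), which follows from the chain rule together with $q'(y_{0,1})=0$, $\|q'\|_\infty+\|q''\|_\infty\le M$, \eqref{eq:qloccor}, \eqref{eq:bloc2} and \eqref{eq:gass2} — note that near $\y_0$ one has $|q'|\lesssim Mh^{1/2}\ell$ which is $\mathcal O(1)$ on the admissible range of $\ell$, so the coordinate change is a bi-Lipschitz map with controlled derivatives. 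Corollary \ref{bdrycor} then yields
$$
\left| \Tr_{L^2(\R^2_+)} \tilde g\, f((-i\nabla-\tilde\A)^2) - \frac1{2\pi}\sum_k \int_{\R^2_+} \tilde g(\x) b_k(\tilde B(\x),f)\tilde B(\x)\,d\x - \frac1{2\pi}\sum_k \int_\R \tilde g(x_1,0) s_k(\tilde B(x_1,0),f)\sqrt{\tilde B(x_1,0)}\,dx_1 \right|
$$
$$
\leq C \ell^2 \left( h^{1/2}\ell^2 + h\ell^4 + \ell^{-2} \right).
$$

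\textbf{Reverting the change of variables in the main terms.}
The remaining work is purely geometric: I must show that the two model integrals over $\R^2_+$ coincide, up to an admissible error, with the corresponding integrals over $\mathcal P$ and $\partial\mathcal P$ appearing in \eqref{eq:bdry2cor}. For the bulk term this is just the change of variables $\x=(y_1,y_2-q(y_1))$, whose Jacobian is identically $1$; since $\tilde g(\x)=g(y)$ and $\tilde B(\x)=B(y)$ by construction, the bulk integrals are \emph{exactly} equal, with no error. For the boundary term, the curve $\partial\mathcal P$ is parametrized by $y_1\mapsto(y_1,q(y_1))$ with arc-length element $d\sigma = \sqrt{1+q'(y_1)^2}\,dy_1$, whereas the model boundary integral over $\R$ uses $dx_1$. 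On the support of $g$ (contained in $|\y-\y_0|\le\ell/2$) we have $|q'(y_1)| = |q'(y_1)-q'(y_{0,1})| \le M h^{1/2}\ell/2$ by \eqref{eq:qloccor} and $q'(y_{0,1})=0$, hence $|\sqrt{1+q'^2}-1|\le C h \ell^2$; similarly $\tilde g(x_1,0)=g(x_1,q(x_1))$ and $\tilde B(x_1,0)=B(x_1,q(x_1))$, so the two boundary integrals differ by at most $C \|g\|_\infty \ell \cdot h\ell^2 = C h \ell^3$, again dominated by $\ell^2(h^{1/2}\ell^2+h\ell^4+\ell^{-2})$ on the admissible range. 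Assembling the three estimates by the triangle inequality gives \eqref{eq:bdry2cor}, possibly after enlarging $C$ and the implicit constant $C^{-1}$ in the constraint $\ell\le C^{-1}h^{-1/2}$ so that all the hypotheses of Proposition \ref{bdry2} and Corollary \ref{bdrycor} are met.

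\textbf{Main obstacle.}
The only genuinely delicate point is the bookkeeping in the first step: verifying that after flattening the boundary the new vector potential $\tilde\A$, the new magnetic field $\tilde B=\curl\tilde\A$ and the new weight $\tilde g$ still satisfy the quantitative hypotheses required by Corollary \ref{bdrycor} — in particular the lower bound $\tilde B\ge (M')^{-1}$ and the gradient bound $|\nabla\tilde B|\le M' h^{1/2}$ — uniformly on $\{|\x|\le\ell\}$ with a constant depending only on $M$. This needs the observation that $\tilde B(\x)=B(y_1,y_2+q(y_1))$ exactly (the field transforms as a scalar under this particular change of variables, by the computation preceding Proposition \ref{bdry2}), so $|\nabla\tilde B|\le |\nabla B|\cdot(1+|q'|)\le M h^{1/2}(1+Mh^{1/2}\ell)$, which is $\le M' h^{1/2}$ on the range $\ell\le C^{-1}h^{-1/2}$; the lower bound on $\tilde B$ is immediate from \eqref{eq:bloc2}. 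Likewise $\tilde g$ inherits an $\ell^{-1}$-bound on its gradient with a slightly larger constant because $\nabla\tilde g$ picks up a factor involving $q'$, again $\mathcal O(1)$ on the admissible range. Once this is carried out carefully, the rest is a routine concatenation of the two cited results with two elementary change-of-variables estimates.
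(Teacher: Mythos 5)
Your proposal follows exactly the paper's argument: apply Proposition~\ref{bdry2} to flatten the boundary, verify the hypotheses of Corollary~\ref{bdrycor} for the tilded data (the paper does this via explicit ball inclusions which are the quantitative version of your ``bi-Lipschitz with controlled derivatives'' remark), and then revert the change of variables noting that the bulk integral is exactly invariant while the boundary integral picks up an error $\mathcal O(h\ell^3)$ from the arc-length element. The error bookkeeping and the use of $q'(y_{0,1})=0$ together with \eqref{eq:qloccor} to control $|q'|$ near $\y_0$ also match the paper's proof.
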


\begin{remark}\label{qassrem}
We emphasize that assumption \eqref{eq:qloccor} is weaker than what we need later in our application, where we have a similar bound but with the right hand side replaced by $Mh$. Imposing this stronger assumption, however, does not lead to an improvement of \eqref{eq:bdry2cor}. The reason is that the error in \eqref{eq:bdry2cor} comes from Corollary \ref{bdrycor} and the error coming from straightening the boundary in Proposition \ref{bdry2} is of lower order.
\end{remark}

\begin{proof}[Proof of Corollary \ref{bdry2cor}]
After a translation, if necessary, we may assume that $\y_0=0$. In particular, we have $q(0)=q'(0)=0$. According to Proposition \ref{bdry2} we can replace, up to the accuracy we are interested in, the trace $\Tr g f((-i\nabla -\A)^2 )$ by $\Tr_{L^2(\R^2_+)} \tilde g f((-i\nabla - \tilde\A)^2 )$. We shall show that the latter trace can be computed using Corollary \ref{bdrycor}. This will prove the corollary, since
$$
\frac1{2\pi} \sum_{k=1}^\infty \int_{\R^2} \tilde g(\x) b_k(\tilde B(\x),f) \tilde B(\x) \,d\x 
= \frac1{2\pi} \sum_{k=1}^\infty \int_{\mathcal P} g(\y) b_k(B(\y),f) B(\y) \,d\y
$$
and
\begin{align*}
\left| \frac1{2\pi} \sum_{k=1}^\infty \int_{\R} g(x_1,0) s_k(B(x_1,0),f) \sqrt{B(x_1,0)} \,dx_1
- \frac1{2\pi} \sum_{k=1}^\infty \int_{\partial\mathcal P} g(\y) s_k(B(\y),f) \sqrt{B(\y)} \,d\sigma(\y) \right| \leq C h \ell^3 \,.
\end{align*}
The last bound follows from the fact that $1\leq \sqrt{1+q'^2}\leq 1+ C h\ell^2$ on the support of $g$. This bound follows from \eqref{eq:qloc} in Lemma~\ref{q} below.

We now verify the assumptions of Corollary \ref{bdrycor}. As observed before, $\tilde A$ generates the magnetic field $\tilde B(x)=B(x_1,x_2+q(x_1))$. Thus the assumption $\|B\|_\infty +\|\nabla B\|_\infty \leq M$, together with $\|q'\|_\infty \leq M$, implies the corresponding assumption for $\tilde B$. In order to prove the local assumption on $B$ and $g$ we need the following two inclusions,
\begin{equation}
\label{eq:sets}
\{ \x\in\R^2_+ :\ |\x| \leq A^{-1} L \} \subset \{ \x\in\R_+^2:\ x_1^2 + (x_2 + q(x_1))^2 \leq L^2 \} 
\subset \{ \x\in\R^2_+ :\ |\x| \leq A L \}
\end{equation}
valid for all $L\in (0,\ell]$ with $A=1+Ch^{1/2}\ell$, where $C$ only depends on $M$ and where we assume that $h^{1/2}\ell\leq 1$. Accepting \eqref{eq:sets} for the moment, we deduce that the bounds on $B$ and $g$ on the sets $\{\y \in\mathcal P:\ |\y|\leq \ell\}$ and $\{\y \in\mathcal P:\ |\y|\leq \ell/2\}$, respectively, will imply corresponding bounds for $\tilde B$ and $\tilde g$ on the sets $\{\x\in \R^2_+:\ |\x|\leq A^{-1} \ell\}$ and $\{\x \in\R^2_+:\ |\x|\leq A \ell/2\}$. If $h^{1/2}\ell\leq (3C)^{-1}$, where $C$ is the constant entering in the definition of $A$, then the ratio of the radii of the two disks is $A^2/2\leq (4/3)^2/2=\theta<1$. The assertion now follows from Corollary \ref{bdrycor} taking Remark \ref{general} into account.

We now turn to the proof of \eqref{eq:sets}. Assume first that $|\x|\leq cL$ for some $c\leq 1$ and $L\leq \ell$. Then, in particular, $|x_1|\leq cL$ and $|q(x_1)|\leq Q cL$ where $Q=(M/2)(1+M/4) h^{1/2}\ell$ from \eqref{eq:qloc}. This implies
$$
(x_2+q(x_1))^2 \leq x_2^2 + 2 x_2 |q(x_1)| + q(x_1)^2 \leq x_2^2 + 2 Q c^2 L^2 + Q^2 c^2 L^2
$$
and therefore $x_1^2 + (x_2+q(x_1))^2 \leq c^2 L^2 (1+ 2Q + Q^2)$. This implies the first part of the assertion with any $A\geq 1+Q$. Conversely, assume that $x_1^2 + (x_2+q(x_1))^2 \leq L^2$ for some $L \leq \ell$. Then, as before, $|q(x_1)|\leq QL$. This implies $x_2 \leq |x_2+q(x_1)|+|q(x_1)|\leq L(1+Q)$ and therefore
$$
x_2^2 \leq (x_2+q(x_1))^2 + 2 x_2 |q(x_1)| - q(x_1)^2 \leq (x_2+q(x_1))^2 + 2 L^2 Q(1+Q)
$$
Hence $|\x|^2 \leq L^2 (1+ 2Q(1+Q^2))$. This implies the second part of the assertion with any $A\geq 1+ 2Q(1+Q^2)$. Taking into account that $h^{1/2}\ell\leq 1$, we obtain what we claimed in \eqref{eq:sets}. This completes the proof of Corollary \ref{bdry2cor}.
\end{proof}



Before beginning with the proof of Proposition \ref{bdry2} we state and prove

\begin{lemma}\label{q}
Assume that $q\in C^2(\R)$ satisfies $q(0)=q'(0)=0$, $\|q''\|_\infty \leq M$ and
\begin{equation*}
\sup_{\y\in\mathcal P\,,\, |\y|\leq \ell} |q''(y_1)| \leq M h^{1/2} \,,
\end{equation*}
for some positive $M, h,\ell$. Then $|q'(y_1)|\leq M |y_1|$ for all $y_1\in\R$. Moreover, for all $y_1$ with $|y_1|\leq \ell$
\begin{equation}
\label{eq:qlocbound}
|q'(y_1)|\leq M(1+M/4) h^{1/2} |y_1|
\quad\text{and}\quad
|q(y_1)|  \leq (M/2)(1+M/4) h^{1/2} y_1^2 \,.
\end{equation}
\end{lemma}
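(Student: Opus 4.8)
The plan is to exploit the hypotheses $q(0)=q'(0)=0$ in two successive integrations of $q''$, using the global bound $\|q''\|_\infty\leq M$ away from the origin and the stronger local bound $|q''|\leq M h^{1/2}$ inside $|y_1|\leq\ell$.

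First I would establish the global estimate $|q'(y_1)|\leq M|y_1|$. Since $q'(0)=0$, we have $q'(y_1)=\int_0^{y_1} q''(s)\,ds$, and bounding the integrand by $\|q''\|_\infty\leq M$ gives $|q'(y_1)|\leq M|y_1|$ for every $y_1\in\R$. This does not even need the local hypothesis.

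Next I would treat the region $|y_1|\leq\ell$. The subtlety here is the precise meaning of the supremum in the hypothesis: it is taken over $\y\in\mathcal P$ with $|\y|\leq\ell$, i.e. over points on the curved boundary and inside $\mathcal P$, so one must check that the bound $|q''(y_1)|\leq Mh^{1/2}$ is available for all $y_1$ with $|y_1|\leq\ell$ — this follows because for such $y_1$ the boundary point $(y_1,q(y_1))$ satisfies $y_1^2+q(y_1)^2\leq \ell^2$ once one knows $|q(y_1)|$ is small, which is a mild bootstrap one can phrase as: first use the weaker global bound to see $(y_1,q(y_1))$ lies in the relevant set for $|y_1|$ slightly less than $\ell$, or simply reindex so that $\ell$ refers to the $y_1$-interval. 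Granting this, integrate again: for $|y_1|\leq\ell$, $q'(y_1)=\int_0^{y_1}q''(s)\,ds$ with $|q''(s)|\leq Mh^{1/2}$ on the relevant range, plus a correction near the endpoints where only the global bound $M$ applies on a set of measure controlled by $|q(y_1)|/\!\ldots$; carrying the elementary arithmetic through yields $|q'(y_1)|\leq M(1+M/4)h^{1/2}|y_1|$. Then integrating this bound on $q'$ from $0$ to $y_1$ (using $q(0)=0$) gives $|q(y_1)|\leq (M/2)(1+M/4)h^{1/2}y_1^2$, which is exactly \eqref{eq:qlocbound}.

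The main obstacle I anticipate is purely bookkeeping: making the domain of validity of the local bound on $q''$ rigorous, since it is stated for points of $\mathcal P$ within Euclidean distance $\ell$ of the origin rather than directly for $|y_1|\leq\ell$. The cleanest fix is to observe that by the already-proved global bound $|q'|\leq M|y_1|$ and $q(0)=0$ one has $|q(y_1)|\leq (M/2)y_1^2$, so for $|y_1|$ not too close to $\ell$ the point $(y_1,q(y_1))$ indeed satisfies $|\y|\leq\ell$; shrinking $\ell$ by the harmless constant factor absorbed into the $(1+M/4)$ terms, or noting that the estimates are only used for such $y_1$ in the sequel, removes the difficulty. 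Everything else is two applications of the fundamental theorem of calculus and the triangle inequality.
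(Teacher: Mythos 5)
The overall plan — two integrations of $q''$ using $q(0)=q'(0)=0$, split into the region where the local bound on $q''$ is available and the region where only the global bound holds — is the same as the paper's, and you correctly spot the domain subtlety (the local bound on $q''$ is stated over $\y\in\mathcal P$ with $|\y|\leq\ell$, not over $|y_1|\leq\ell$). However, your proposed ``cleanest fix'' for that subtlety does not work, and the quantitative half of the argument that actually produces the constant $M(1+M/4)h^{1/2}$ is absent.

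Concretely: the global bound gives only $|q(y_1)|\leq (M/2)y_1^2$, so $|(y_1,q(y_1))|\leq |y_1|\sqrt{1+(My_1/2)^2}$. For $|y_1|$ comparable to $\ell$ this exceeds $\ell$ by a factor $\sim\sqrt{1+(M\ell/2)^2}$, which is unbounded in $\ell$; it is not a ``harmless constant factor'' you can absorb into $(1+M/4)$. The paper instead bootstraps with the \emph{local} bound: as long as $|q''|\leq Mh^{1/2}$ the graph obeys $|q(y_1)|\leq (M/2)h^{1/2}y_1^2$, so it stays inside the disk and the local bound keeps applying; this is how one gets the explicit split point $\ell_0=\ell(1+(M/2)h^{1/2})^{-1/2}$ with the local bound on $[0,\ell_0]$ (so $|q'|\leq Mh^{1/2}|y_1|$ there) and the global bound on $[\ell_0,\ell]$. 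The contribution from $[\ell_0,|y_1|]$ is then $\leq M(|y_1|-\ell_0)$, and combining the two pieces one gets $|q'(y_1)|\leq M|y_1|\bigl(h^{1/2}+1-(1+(M/2)h^{1/2})^{-1/2}\bigr)$; the stated constant follows from the elementary inequality $1-(1+x)^{-1/2}\leq x/2$. None of this — the explicit $\ell_0$, the two-region split with a correctly sized remainder, the final inequality — appears in your sketch, and your description of the remainder as having ``measure controlled by $|q(y_1)|/\ldots$'' does not match the actual set $[\ell_0,\ell]$, whose length is $\approx (M/4)h^{1/2}\ell$ and depends on $\ell$ and $h$, not on $q$ itself. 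So while the skeleton is right, the heart of the lemma (why the constant is $M(1+M/4)h^{1/2}$ rather than something worse) is not actually proved.
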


\begin{proof}
The bound $|q'(y_1)|\leq M |y_1|$ simply follows by integrating the global bound $\|q''\|_\infty\leq M$. Similarly, the second bound in \eqref{eq:qlocbound} follows by integrating the first one. To prove the first inequality in \eqref{eq:qlocbound}, we observe that by assumption we have $q'(0)=0$ and $|q''(y_1)|\leq M h^{1/2}$ for any $y_1$ for which there is an $x_2>0$ such that $y_1^2 + (x_2+q(y_1))^2 \leq \ell^2$. Easy geometric considerations show that such $x_2$ exists at least when $|y_1|\leq \ell (1+(M/2) h^{1/2} )^{-1/2}$. (Indeed, the domain $\mathcal P\cap\{|\y|\leq \ell\}$ lies between the two parabolae $x_2=\pm (M/2) h^{1/2} y_1^2$.) Thus for all such $y_1$ we have $|q'(y_1)| \leq M h^{1/2} |y_1|$. On the other hand, for $\ell (1+(M/2) h^{1/2})^{-1/2} \leq |y_1|\leq \ell$ we use in addition the global bound $|q''|\leq M$ and find
\begin{align*}
|q'(y_1)| & \leq M h^{1/2} (1+(M/2) h^{1/2} )^{-1/2} \ell + M (|y_1| - \ell (1+(M/2) h^{1/2})^{-1/2} ) \\
& \leq M |y_1| \left( h^{1/2} + 1-(1+(M/2) h^{1/2})^{-1/2} \right) \,.
\end{align*}
The first bound in \eqref{eq:qlocbound} now follows from the elementary inequality $1-(1+x)^{-1/2}\leq x/2$ for $x\geq 0$.
\end{proof}

We now give the

\begin{proof}[Proof of Proposition \ref{bdry2}]
The proof is similar to that of Proposition \ref{bdry} and we only sketch the major differences. As in the proof of Corollary \ref{bdry2cor} we may assume that $\y_0=0$, so that $q(0)=q'(0)=0$. We have
$$
\Tr_{L^2(\mathcal P)} g f((-i\nabla -\A)^2 ) = \Tr_{L^2(\R^2_+)} \tilde g f(\tilde H) \,,
$$
where $\tilde H$ was introduced before the proposition. We write
$$
\tilde H = H_0 + \tilde W
$$
with $H_0 = (-i\nabla - \tilde\A)^2$ and
$$
\tilde W = - (-i\partial_1 -\tilde A_1) q' (-i\partial_2 -\tilde A_2) - (-i\partial_2 -\tilde A_2)q'(-i\partial_1 -\tilde A_1) + (-i\partial_2 -\tilde A_2)(q')^2 (-i\partial_2 -\tilde A_2) \,.
$$
(Strictly speaking, the notation $\tilde H_0$ instead of $H_0$ would be more consistent, but we try to keep the notation simple at this point.) Similarly as in the proof of Proposition \ref{bdry} we denote by $\eta$ the characteristic function of the disk $\{ |\y|\leq \ell\}$. Using the Helffer--Sj\"ostrand formula we find as before that
\begin{align*}
& \Tr \tilde g \left( f(\tilde H) - f(H_0) \right) = - \frac{1}{\pi}\int_{\mathcal{D}} dxdy\, \frac{\partial f_{a}}{\partial \overline{z}} (z+1) \\
& \qquad \times \Tr \left( \tilde\eta (H_0-z)^{-1} (H_0+1)^{-1} \tilde W (\tilde H-z)^{-1} \tilde g + \tilde g (H_0+1)^{-1} \tilde W (\tilde H+1)^{-1}(\tilde H-z)^{-1} \tilde\eta \right) \,,
\end{align*}
and we need to show that the operators $\tilde\eta (H_0-z)^{-1} (H_0+1)^{-1} \tilde W (\tilde H-z)^{-1} \tilde g$ and $\tilde g (H_0+1)^{-1} \tilde W (\tilde H+1)^{-1}(\tilde H-z)^{-1} \tilde\eta$ are trace class with a suitable bound on their trace norms. 

The factors $\tilde\eta (H_0+1)^{-2}$ and $(\tilde H+1)^{-2} \tilde\eta$ are trace class with norms as in \eqref{eq:tracehs}. Here we also use the fact that the support of $\tilde\eta$ is contained in $\{|\x| \leq A\ell^2\}$ with a constant $A$ depending only on $M$, which follows as in the proof of \eqref{eq:sets}. (At this point we use the assumption that $h^{1/2}\ell \leq M$.)

Thus it remains to control the operator norm of $\tilde W (\tilde H-z)^{-1} \tilde g$ and $\tilde g (H_0+1)^{-1} \tilde W$. We begin with the second operator. Again, we decompose $\tilde W=\tilde W\tilde\eta+\tilde W (1-\tilde\eta)$.

We now explain how to bound the operator $\tilde g (H_0+1)^{-1} \tilde W\tilde\eta$. We commute all derivatives in $\tilde W$ to the left and find
$$
\tilde W = - (-i\partial_1 -\tilde A_1)(-i\partial_2 -\tilde A_2) q' - (-i\partial_2 -\tilde A_2)(-i\partial_1 -\tilde A_1)q' + (-i\partial_2 -\tilde A_2)^2 (q')^2   -i(-i\partial_2 -\tilde A_2) q'' \,.
$$
Next, we commute one derivative through $(H_0+1)^{-1}$. Given $j\in\{1,2\}$, let $j'\in\{1,2\}\setminus\{j\}$ and compute
\begin{align*}
& (H_0+1)^{-1} (-i\partial_j -\tilde A_j) - (-i\partial_j -\tilde A_j) (H_0+1)^{-1} \\
& \quad = (H_0+1)^{-1} \left[ -i\partial_j -\tilde A_j, (-i\partial_{j'} -\tilde A_{j'})^2 \right] (H_0+1)^{-1} \\
& \quad = -i(-1)^{j} (H_0+1)^{-1} \left(  \tilde B(-i\partial_{j'} -\tilde A_{j'}) + (-i\partial_{j'} -\tilde A_{j'})\tilde B \right) (H_0+1)^{-1} \,.
\end{align*}
Thus
\begin{align*}
(H_0+1)^{-1} \tilde W \tilde\eta = 
& - (-i\partial_1 -\tilde A_1)(H_0+1)^{-1}(-i\partial_2 -\tilde A_2) q'\tilde\eta - (-i\partial_2 -\tilde A_2)(H_0+1)^{-1}(-i\partial_1 -\tilde A_1)q'\tilde\eta \\
& + (-i\partial_2 -\tilde A_2)(H_0+1)^{-1}(-i\partial_2 -\tilde A_2) (q')^2 \tilde\eta -i(H_0+1)^{-1}(-i\partial_2 -\tilde A_2) q''\tilde\eta \\
& - i (H_0+1)^{-1} \left(  \tilde B(-i\partial_2 -\tilde A_2) + (-i\partial_2 -\tilde A_2)\tilde B \right) (H_0+1)^{-1}  (-i\partial_2 -\tilde A_2) q'\tilde\eta \\
& + i (H_0+1)^{-1} \left(  \tilde B(-i\partial_1 -\tilde A_1) + (-i\partial_1 -\tilde A_1)\tilde B \right) (H_0+1)^{-1}  (-i\partial_1 -\tilde A_1)q'\tilde\eta \\
& -i (H_0+1)^{-1} \left(  \tilde B(-i\partial_1 -\tilde A_1) + (-i\partial_1 -\tilde A_1)\tilde B \right) (H_0+1)^{-1} (-i\partial_2 -\tilde A_2) (q')^2 \tilde\eta
\end{align*}
Now we use the fact that $\| (-i\partial_j - \tilde A_j)^\alpha (H_0+1)^{-1} (-i\partial_k - \tilde A_k)^\beta \|\leq 1$ for any $\alpha,\beta\in\{0,1\}$ and $j,k\in\{1,2\}$. Since $|B|\leq M$ everywhere and since $|q'|\leq M(1+M/4) h^{1/2}\ell$ (see \eqref{eq:qlocbound}) and $|q''| \leq M h^{1/2}$ on the support of $\eta$ we conclude that
$$
\| (H_0+1)^{-1} \tilde W \tilde \eta \| \leq C h^{1/2} \ell
$$
with a constant $C$ depending only on $M$.

Finally, we bound the operator $\tilde g (H_0+1)^{-1} \tilde W (1-\tilde\eta)$. Our goal is to apply Proposition \ref{prop4}, which will give us an exponentially small error with respect to $\ell$. In order to apply this proposition, we have to bound the operator norm of
$$
e^{\delta\langle\cdot-\x_0\rangle} (H_0+1)^{-1} \tilde W (1-\tilde\eta) e^{-\delta\langle\cdot-\x_0\rangle}
$$
in terms of $M$. To do so, we note that $\tilde W (1-\tilde\eta)$ is given by the same expression as $\tilde W \tilde\eta$, except that $\tilde{\eta}$ is replaced by $1-\tilde\eta$. Hence $(H_0+1)^{-1} \tilde W (1-\tilde\eta)$ has a similar representation as $(H_0+1)^{-1} \tilde W_0$ but with $\tilde\eta$ replaced by $1-\tilde\eta$. The required bound now follows from the bounds on $B$, $q'$ and $q''$ together with Proposition \ref{comm} which states that $\| e^{\delta\langle\cdot-\x_0\rangle} (-i\partial_j - \tilde A_j)^\alpha (H_0+1)^{-1} (-i\partial_k - \tilde A_k)^\beta e^{-\delta\langle\cdot-\x_0\rangle}\|$ is bounded for any $\alpha,\beta\in\{0,1\}$, $j,k\in\{1,2\}$ and $|\delta|\leq \delta_0$.

This concludes the proof of the boundedness of $\tilde g (H_0+1)^{-1} \tilde W$. The bound on $\tilde W (\tilde H - z)^{-1} \tilde g$ is derived similarly and we only sketch the major differences. First, it is more convenient to work on $\mathcal P$ instead of $\R^2_+$ and to consider the operator $W (H-z)^{-1} g$, where $H=(-i\nabla -\A)^2$. Now
$$
W = - (-i\partial_1 - A_1) q' (-i\partial_2 - A_2) - (-i\partial_2 - A_2)q'(-i\partial_1 - A_1) - (-i\partial_2 - A_2)(q')^2 (-i\partial_2 - A_2)
$$
(that is, only the sign of the last term changes) and we decompose $W=\eta W+ (1-\eta)W$. Next, we commute all derivatives to the right and then we commute one derivative through $(H-z)^{-1}$. The argument then is similar to what we have done before. The $z$-dependence is controlled by the resolvent identity in the bound for $\eta W$ and by Proposition \ref{prop4} in the bound for $(1-\eta) W$. This concludes the proof of Proposition \ref{bdry2}.
\end{proof}


\section{Proof of Theorem \ref{theorem2}}\label{sectiuneadoi}

Instead of working with the operator $L_h$ and the fixed domain
$\Omega$, we will dilate the domain as in \eqref{prima9} obtaining
$\Omega_h=\{\x\in \R^2:\; h^{1/2}\x\in\Omega\}$. Then $L_h$ will be
unitary equivalent with $hH_h$, where $H_h=(-i\nabla-\A_h)^2$ is
defined in $L^2(\Omega_h)$ with Dirichlet boundary conditions, 
the new vector potential becomes 
$\A_h(\x):=h^{-1/2}\A(h^{1/2}\x)$ and the new magnetic field is
$B_h(\x):=B(h^{1/2}\x)$. Thus \eqref{eq:asymp} can be rewritten as:
\begin{equation}\label{aprrilie1}
\Tr f(H_h) = h^{-1} \left( C_0(f) + h^{1/2} C_1(f) + o(h^{1/2}) \right)
\qquad\text{as}\ h\to 0,
\end{equation}
where $C_0(f)$ and $C_1(f)$ are the coefficients given in the introduction. 


\subsection{Partitions and cut-offs}

Our construction will depend on a parameter $\ell$, which will later be chosen as an inverse power of $h$. (Indeed, $\ell=h^{-1/8}$.) However, at this point we prefer to consider $\ell$ as a parameter independent of $h$. We will assume throughout that $\ell\geq 1$, $h\leq 1$ and that $h^{1/2}\ell\leq d$, where $d$ is chosen so large that the domain $\{\x \in \Omega :\ {\rm dist}\{\x,\partial \Omega\} \leq 100 d \}$ can be parametrized in terms of a tangential and a normal coordinates. This is explained below in more detail. The existence of such $d$ follows from the $C^2$ assumption on $\partial\Omega$.

We define for $t>0$:
\begin{equation}\label{margine1}
\Xi_\ell(t):=\left \{\x\in \overline{\Omega_h}\,:\; {\rm
    dist}\{\x,\partial \Omega_h\}\leq t \ell \right \}\,.
\end{equation}
This models a `thin' compact subset of $\Omega_h$, near the boundary, with a volume of order $\mathcal O(h^{-1/2}\ell)$. Because we assumed that $h^{1/2}\ell$ is small enough, all points of $\Xi_\ell(t)$, $t\leq 100$, have unique projections on $\partial\Omega_h$. We also note that if $t_1<t_2$ then $\Xi_\ell(t_1)\subset \Xi_\ell(t_2)$ and:
\begin{align}\label{margine3}
{\rm dist}\{\Xi_\ell(t_1), \overline{\Omega_h\setminus\Xi_\ell(t_2)}\}
\geq (t_2-t_1) \ell \,. 
\end{align}
The subset $\overline{\Omega_h\setminus\Xi_\ell(100)}$ will model the bulk
region of $\Omega_h$, which is still `far-away' from the
boundary. 

\vspace{0.5cm}

\noindent{\bf Covering the bulk region}. ~\\
We choose three functions
$0\leq g_0\leq \tilde{g}_0\leq \tilde{\tilde{g}}_0\leq 1$ 
with the following properties: 
\begin{align}
&{\rm supp}(g_0)\subset \overline{\Omega_h\setminus\Xi_\ell(100)},\quad 
g_0(\x)=1 \;{\rm on}\; \overline{\Omega_h\setminus\Xi_\ell(150)}
,\label{margine5}\\
&{\rm supp}(\tilde{g}_0)\subset \overline{\Omega_h\setminus\Xi_\ell(50)},\quad 
\tilde g_0(\x)=1\;{\rm on}\; \overline{\Omega_h\setminus\Xi_\ell(75)}
,\label{margine6}\\
&{\rm supp}(\tilde{\tilde{g}}_0)\subset \overline{\Omega_h\setminus\Xi_\ell(10)},\quad 
\tilde{\tilde g}_0(\x)=1\;{\rm on}\;  \overline{\Omega_h\setminus\Xi_\ell(20)}
\label{margine6'}.
\end{align}
There is a constant $C$ such that for all $h\in (0,1]$ and all $\ell\in[1,d h^{-1/2}]$ we have:
\begin{align}\label{margine7} 
\max\{\|D^\beta \tilde{\tilde{g}}_0\|_\infty, \|D^\beta
\tilde{g}_0\|_\infty,\|D^\beta g_0\|_\infty\}\leq C \ell^{-|\beta|} 
\quad \text{for all} \ |\beta|\leq 2 \,.
\end{align}
Moreover:
\begin{align}\label{aprrilie2}
g_0\tilde{g}_0=g_0,\quad
\tilde{g}_0\tilde{\tilde{g}}_0=\tilde{g}_0,\quad {\rm dist}\{{\rm
  supp}(D\tilde{g}_0),{\rm supp}(g_0)\}\geq \ell \,.
\end{align}

\vspace{0.5cm}

\noindent{\bf Covering the boundary region}.~\\
We recall our assumption that $\partial\Omega$ is  a finite union of
disjoint regular, simple and closed $C^2$ curves. For simplicity,
let us restrict ourselves to the simply connected situation. As explained before, we assume that $h^{1/2}\ell$ is sufficiently small such that $\Xi_\ell(100)$ can be parametrized using a tangential
coordinate $\sigma$ living on the torus 
$\frac{h^{-1/2}|\partial\Omega|}{2\pi}S^1$ and a normal 
coordinate $\tau$ in $(0,100 \ell)$. We want to divide this strip in
curvilinear rectangles of side size proportional to 
$\ell$. We define:
\begin{align}\label{adoua1}
M_\ell:=\left [h^{-1/2}\ell |\partial\Omega|\right ]\in
\mathbb{N},\quad \sigma_s:=(s-1)\;\ell\,,\; 1\leq s\leq M_\ell-1\,,\; \sigma_{M_\ell}:=\sigma_1\,.
\end{align}
Note that the (curvilinear) distance between
$\sigma_{M_\ell-1}$ and $\sigma_{M_\ell}(=\sigma_1)$ is at most $2\ell$ and at
least $\ell$. All other curvilinear distances between consecutive points
are $\ell$. Note also that due to the regularity condition on
$\partial\Omega$, the Euclidean distances between these points on
$\partial\Omega_h$ are of order $\ell$.   

Define also:
\begin{align}\label{adoua2}
C_s(t_1,t_2):=\{\x=(\sigma,\tau)\in \Xi_\ell(t_2):\;|\sigma-\sigma_s|\leq t_1 \ell \}\,.
\end{align}
We can now find $M_\ell$ smooth functions $\{g_s\}$ with the following properties:
\begin{align}
&{\rm supp}(g_s)\subset C_s(10, 150)\,;\label{amargine5}\\
&0\leq g_s\leq 1,\qquad \sum_{s=1}^{M_\ell-1}g_s(\x)=1\; {\rm on}\; \Xi_\ell(100)
\,;\label{amargine6}\\
&\|D^\beta g_s\|_\infty\leq C \ell^{-|\beta|} 
\quad \text{for all}\ |\beta|\leq 2\,,\quad  1\leq s\leq M_\ell-1\label{amargine7}\,. 
\end{align}
In order to use a more compact notation, we denote
the set of centers of all curvilinear squares of the type
$C_s(t_1,t_2)$ with $E_{\rm bdy}$. Thus \eqref{amargine6} reads as:
\begin{align}\label{aprrilie4}
0\leq g_\gamma\leq 1,\qquad \sum_{\gamma\in E_{\rm bdy}}
g_\gamma(\x)=1\; {\rm on}\; \Xi_\ell(100).
\end{align}
Another important property is that a point $\x\in \Xi_\ell(100)$ can 
only belong to the support of a finite number of $g_\gamma$'s, and
this number is bounded independently of $h$ and $\ell$.

For every $\gamma\in E_{\rm bdy}$ we can construct two other smooth functions
$0\leq \tilde{g}_\gamma\leq \tilde{\tilde{g}}_\gamma\leq 1$ with the properties:
\begin{align}\label{aprrilie5}
&g_\gamma\tilde{g}_\gamma=g_\gamma,\quad \tilde{g}_\gamma
\tilde{\tilde{g}}_\gamma=\tilde{g}_\gamma, \quad 
\max\{\|D^\beta \tilde{\tilde{g}}_\gamma\|_\infty, \|D^\beta
\tilde{g}_\gamma\|_\infty,\|D^\beta g_\gamma\|_\infty\}\leq C \ell^{-|\beta|} \quad \text{for all}\ 
|\beta|\leq 2,\\
&{\rm dist}\{{\rm supp}(D\tilde{g}_\gamma),{\rm supp}(g_\gamma)\}\geq \ell\,,
\quad {\rm diam}\{{\rm supp}(\tilde{\tilde{g}}_\gamma)\}\leq  C \ell\,.\label{aprrilie6}
\end{align}

\vspace{0.5cm}

 \noindent{\bf Patching the two regions together}.~\\
 It is clear from
 the construction of $g_0$ and the boundary covering that they can be made
 compatible with the supplementary condition that they generate a 
partition of unity of the whole dilated domain
 $\Omega_h$. Thus we can impose: 
\begin{equation}\label{adoua3}
g_0(\x)
+\sum_{\gamma\in E_{\rm bdy}}g_\gamma(\x)=1,\quad \forall \x\in \Omega_h.
\end{equation} 


\subsection{Local model operators}\label{sec:modelop}

Let us define 
\begin{align}\label{adoua12}
\A^{\rm bulk}_h(\x):=\tilde{\tilde{g}}_0(\x)\A_h(\x).
\end{align}
We can define an operator $H^{\rm bulk}_h:=(-i\nabla -\A^{\rm
  bulk}_h)^2$ defined in $L^2(\R^2)$ and see that 
due to the support conditions we have the identity:
\begin{equation}\label{adoua13}
\tilde{g}_0  \A^{\rm bulk}_h= \tilde{g}_0 \A_h \,,
\quad H^{\rm bulk}_h\tilde{g}_0 =H_h\tilde{g}_0 \,.
\end{equation}

Moreover, for every $\gamma\in E_{\rm bdy}$ we define:
\begin{align}\label{aprrilie10}
\A^{(\gamma)}_h(\x):=\tilde{\tilde{g}}_\gamma(\x)\A_h(\x).
\end{align}
For all sufficiently small $h^{1/2}\ell$, the support of each $\tilde{\tilde{g}}_\gamma$ contains a piece of 
$\partial\Omega_h$ having a length of order $\mathcal O(\ell)$ and is, after a translation and rotation, included in a perturbed half-plane $\mathcal{P}^{(\gamma)}$ of the form \eqref{marrtie1}. Here $q^{(\gamma)}$ coincides with the local parametrization of $\partial\Omega_h$. We now define the operator $H_h^{(\gamma)}=(-i\nabla -\A^{(\gamma)}_h)^2$ with Dirichlet boundary conditions in $L^2(\mathcal{P}^{(\gamma)})$. Again due to support considerations, 
\begin{align}\label{aprrilie11}
\tilde{g}_\gamma\A^{(\gamma)}_h =\tilde{g}_\gamma\A_h \,,
\quad H_h^{(\gamma)}\tilde{g}_\gamma=H_h\tilde{g}_\gamma \,.
\end{align}


\subsection{An approximation for $f(H_h)$}

The following lemma tells us that, up to a controlled error, we can compute $\Tr f(H_h)$ by computing $\Tr f(H_h^{\rm bulk})g_0$ and $\Tr f(H_h^{(\gamma)}) g_\gamma$ for $\gamma\in E_{\rm bdy}$. We recall that $d$ is the upper bound on $h^{1/2}\ell$, which is needed in order to introduce boundary coordinates.

\begin{lemma}\label{lemma2}
Given $f\in\mathcal{S}(\R)$ and $N\geq 0$, there is a constant $C>0$ such that for all $h\in (0,1]$ and $\ell\in [1,d h^{-1/2}]$ one has
\begin{align}\label{aprrilie13}
\left\| f(H_h) - \tilde{g}_0f(H_h^{\rm bulk})g_0 - 
\sum_{\gamma\in E_{\rm bdy}} \tilde{g}_\gamma f(H_h^{(\gamma)}) g_\gamma
\right\|_{B_1(L^2(\Omega_h))} 
\leq C \, h^{-7/4} \ell^{-N} \,.
\end{align}
\end{lemma}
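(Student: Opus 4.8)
\textbf{Proof proposal for Lemma \ref{lemma2}.}

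The plan is to use the Helffer--Sj\"ostrand formula to reduce everything to resolvents, and then to exploit the geometric finite-propagation (exponential decay) estimates satisfied by the resolvents of these magnetic operators. Write $G := f(H_h) - \tilde{g}_0 f(H_h^{\rm bulk}) g_0 - \sum_{\gamma\in E_{\rm bdy}} \tilde{g}_\gamma f(H_h^{(\gamma)}) g_\gamma$. Since $g_0 + \sum_\gamma g_\gamma = 1$ on $\Omega_h$ by \eqref{adoua3}, I can insert this partition of unity to the right of $f(H_h)$ and group terms, so that
$$
G = \left( f(H_h) - \tilde{g}_0 f(H_h^{\rm bulk}) \right) g_0 + \sum_{\gamma\in E_{\rm bdy}} \left( f(H_h) - \tilde{g}_\gamma f(H_h^{(\gamma)}) \right) g_\gamma \,.
$$
Thus it suffices to bound each piece $\left( f(H_h) - \tilde{g}_0 f(H_h^{\rm bulk}) \right) g_0$ and $\left( f(H_h) - \tilde{g}_\gamma f(H_h^{(\gamma)}) \right) g_\gamma$ in trace norm, with the bulk term contributing $\mathcal{O}(h^{-7/4}\ell^{-N})$ directly and each boundary term contributing $\mathcal{O}(h^{-7/4}\ell^{-N}/M_\ell)$ (recall $M_\ell \sim h^{-1/2}\ell$ and there are $\mathcal{O}(M_\ell)$ of them), so that after summation the total is again of the claimed order.

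For a single term, say the $\gamma$ term, I use \eqref{aprrilie11}, namely $H_h^{(\gamma)}\tilde{g}_\gamma = H_h \tilde{g}_\gamma$, which by iteration gives $(H_h^{(\gamma)})^n \tilde{g}_\gamma = H_h^n \tilde{g}_\gamma$ for all $n$, hence (via resolvents on a common domain, using that $\tilde g_\gamma$ maps into the domain of both operators and the supports are arranged so boundary conditions match) $f(H_h^{(\gamma)})\tilde g_\gamma$ and $f(H_h)\tilde g_\gamma$ differ only through commutators of $\tilde g_\gamma$ with the operator. Concretely, apply Helffer--Sj\"ostrand: $f(H_h) - \tilde g_\gamma f(H_h^{(\gamma)}) $ acting against $g_\gamma$ on the right becomes, modulo the identity $\tilde g_\gamma g_\gamma = g_\gamma$,
$$
\frac{1}{\pi}\int_{\mathcal D}\frac{\partial f_a}{\partial\bar z}\Big( (H_h-z)^{-1} - \tilde g_\gamma (H_h^{(\gamma)}-z)^{-1}\Big) g_\gamma \, dx\,dy \,,
$$
and the resolvent difference, via the commutator $[\tilde g_\gamma, (H_h^{(\gamma)}-z)^{-1}] = (H_h^{(\gamma)}-z)^{-1}[H_h^{(\gamma)},\tilde g_\gamma](H_h^{(\gamma)}-z)^{-1}$ together with $[H_h,\tilde g_\gamma](H_h^{(\gamma)}-z)^{-1} = [H_h^{(\gamma)},\tilde g_\gamma](H_h^{(\gamma)}-z)^{-1}$ on the relevant subspace, is a sum of terms sandwiching $[H_h,\tilde g_\gamma]$ — a first-order differential operator supported on $\supp D\tilde g_\gamma$ — between two resolvents, one of which is then hit by $g_\gamma$ on the right. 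By \eqref{aprrilie6}, ${\rm dist}\{\supp D\tilde g_\gamma, \supp g_\gamma\}\geq \ell$, so these are resolvents between functions supported a distance $\geq \ell$ apart.

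The key estimate is then the Combes--Thomas / finite-propagation bound for the magnetic resolvent: sandwiched between characteristic functions $\chi,\tilde\chi$ of sets of diameter $\leq 1$ at distance $d$, one has $\|\chi (H-z)^{-1}\tilde\chi\| \leq C \frac{r^2}{\eta} e^{-\delta\eta d/r}$ (and similarly with a derivative factor $(-i\nabla - \A)$ absorbed), exactly of the type used in the proofs of Propositions \ref{bulk} and \ref{bdry} (the a priori estimates \eqref{hcadoua7}, \eqref{hcapatra1}, and Proposition \ref{prop4}). Summing over an $\mathcal{O}(\ell^2)$-cover of the intermediate region and an $\mathcal{O}(\ell^2)$-cover of $\supp g_\gamma$, the exponential $e^{-\delta\eta\ell/r}$ beats all polynomial losses: using $t^N e^{-t}\leq (N/e)^N$ one converts it into $C_N (r/\eta)^{L} \ell^{-N}$ for any $N$, with $L=L(N)$; integrating against $|\partial f_a/\partial\bar z|$, which can be chosen to decay like $(\eta/r^2)^L$, makes the $z$-integral finite. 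To pass from operator-norm bounds to trace-norm bounds one inserts a factor $(H_h+1)^{-2}\tilde g_\gamma$ (or similar), which is trace class with norm $\mathcal{O}(\ell^2)$ by Proposition \ref{prop15} since $\supp\tilde g_\gamma$ is covered by $\mathcal{O}(\ell^2)$ unit squares, exactly as in \eqref{eq:tracehs}; the extra resolvent powers are harmless. This yields, for each boundary term, a bound $C_N\,\ell^{C}\,\ell^{-N}$ with $\ell$-power independent of $N$; choosing $N$ large and replacing $N$ by $N + C$ gives the stated $\ell^{-N}$, and summing over the $\mathcal{O}(h^{-1/2}\ell)$ boundary cells plus bounding the trace-class factors by the worst-case $\ell^2 \leq d^2 h^{-1}$ produces the overall $h^{-7/4}\ell^{-N}$.

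The main obstacle is the bookkeeping in the trace-norm estimate: one must carefully place the trace-class resolvent factors so that what remains is genuinely a product of an operator-norm-bounded piece and a trace-class piece of controlled norm, while simultaneously keeping track of (i) the polynomial growth in $r=\sqrt{\langle\Re z\rangle}$ and $\eta^{-1}=|\Im z|^{-1}$ coming from each resolvent and each commutator, (ii) the factor $\mathcal{O}(\ell^2)$ from the unit-square cover entering both the trace norm and the double sum over cover elements, and (iii) the number $\mathcal{O}(h^{-1/2}\ell)$ of boundary patches. The bulk term is strictly easier since there is no summation over $\gamma$; its treatment is the $\gamma$-term argument with $H_h^{\rm bulk}, \tilde g_0, g_0$ in place of $H_h^{(\gamma)}, \tilde g_\gamma, g_\gamma$, using \eqref{adoua13}. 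One should also note that the power $h^{-7/4}$ is not optimized — it simply accommodates the crudest bounds on all these factors — which is acceptable since in the application $\ell = h^{-1/8}$ and $N$ is taken large, making the right-hand side $o(h^{1/2})$, indeed $\mathcal{O}(h^{\infty})$ in the relevant sense.
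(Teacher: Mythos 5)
Your proposal follows essentially the same route as the paper: the Helffer--Sj\"ostrand formula, the algebraic identity $H_h\tilde g_\gamma = H_h^{(\gamma)}\tilde g_\gamma$ to reduce the resolvent difference to a commutator $[H_h^{(\gamma)},\tilde g_\gamma]$ supported on $\supp D\tilde g_\gamma$ (which, summed over $\gamma$ and the bulk piece, is exactly the paper's operator $V_h(z)$), the Combes--Thomas decay across the distance $\geq\ell$ between $\supp D\tilde g_\gamma$ and $\supp g_\gamma$, the use of $t^N e^{-t}\leq(N/e)^N$ to trade exponential decay for polynomial decay, and the trace-class resolvent factors from Proposition~\ref{prop15}. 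Your term-by-term decomposition, after summing over $\gamma$, recovers the paper's resolvent identity $(H_h-z)^{-1}=S_h(z)-(H_h-z)^{-1}V_h(z)$, so the two organizations are equivalent.

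There is one point where your sketch is too loose and would not survive without repair: you say ``one inserts a factor $(H_h+1)^{-2}\tilde g_\gamma$.'' This is not an algebraic operation you can perform on the integrand $(H_h-z)^{-1}[H_h^{(\gamma)},\tilde g_\gamma](H_h^{(\gamma)}-z)^{-1}g_\gamma$ -- the operator is given, and the naive direct estimate fails because $[H_h^{(\gamma)},\tilde g_\gamma]$ is a first-order differential operator, so the integrand is not manifestly trace class with a $z$-integrable norm. The paper's mechanism, which you must spell out for the argument to close, is to subtract the analytic (indeed $z$-independent) operator $(H_h+1)^{-1}V_h(-1)$ from the Helffer--Sj\"ostrand integrand; since $\int_{\mathcal D}\frac{\partial f_a}{\partial\bar z}\cdot(\text{analytic})\,dxdy=0$, this does not change the trace, and after applying the resolvent identity $(H_h-z)^{-1}-(H_h+1)^{-1}=(z+1)(H_h-z)^{-1}(H_h+1)^{-1}$ one obtains the genuine trace-class factor $(H_h-z)^{-1}(H_h+1)^{-1}$ sitting next to an operator-norm-bounded remainder, and a second term $(H_h+1)^{-1}[V_h(z)-V_h(-1)]$ which is Hilbert-Schmidt. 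Your citation of \eqref{eq:tracehs} and the phrase ``place the trace-class resolvent factors'' suggest you are aware of this device, but as written the proposal does not supply it, and that step is the crux of the trace-norm estimate. Once this is stated, the rest of your bookkeeping (the $\mathcal{O}(\ell^2)$ covers, the $\mathcal{O}(h^{-1/2}\ell)$ boundary cells, and the absorption of all polynomial factors into $\ell^{-N}$ and $h^{-7/4}$) is in line with the paper.
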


Presumably, the order $h^{-7/4}$ in \eqref{aprrilie13} is not optimal, but it is sufficient for our purposes.

\begin{proof}
\emph{Step 1. An approximate resolvent}\\
If $z\in\mathbb{C}$ with $\Im z\neq 0$, we define the operator: 
\begin{align}\label{geigi4}
S_h(z)& :=\tilde{g}_0(H_h^{\rm bulk}-z)^{-1}g_0+
\sum_{\gamma\in E_{\rm bdy} }\tilde{g}_\gamma (H_h^{(\gamma)}-z)^{-1}g_\gamma. 
\end{align}
One can prove that the range of $S_h(z)$ is in the domain of
$H_h$ and we have:
\begin{align}\label{geigi5}
&(H_h-z)S_h(z)=1 +V_h(z),\end{align}
with 
\begin{align}\label{geigi5a}
V_h(z)&:=\left \{-2i(\nabla\tilde{g}_0)\cdot
(-i\nabla -\A_h^{\rm bulk})-(\Delta\tilde{g}_0)\right \} (H_h^{\rm
  bulk}-z)^{-1}g_0\nonumber \\
&+\sum_{\gamma\in E_{\rm bdy} }
\{-2i(\nabla\tilde{g}_\gamma)\cdot
(-i\nabla -\A_h^{(\gamma)})-(\Delta\tilde{g}_\gamma)\} (H_h^{(\gamma)}-z)^{-1}
g_\gamma. 
\end{align}
In order to obtain \eqref{geigi5} we observe that $H_h$ can be
replaced by the local Hamiltonians on the support of the corresponding
cut-off functions, see \eqref{adoua13} and \eqref{aprrilie11}. 
The operator $V_h$ contains only terms coming from
commuting the local Hamiltonians with cut-off
functions. The identity
operator on the right hand side of \eqref{geigi5} appears after the
use of \eqref{adoua3}. 

It follows from \eqref{geigi5} that
\begin{align}\label{apatra1}
(H_h-z)^{-1}=S_h(z)-(H_h-z)^{-1}V_h(z).
\end{align}
Standard estimates (see \eqref{keyest}) yield
for any $h\in (0,1]$ the existence of a constant $C(h)$, such that 
$$
\|V_h(z)\|\leq C(h)\frac{|\Re z|^{1/2}}{|\Im z|}\,, \quad \forall z\in\mathcal{D}\,.
$$
The Helffer-Sj\"ostrand formula \eqref{azecea6} shows that 
for any almost analytic extension $f_{a,N}$ of $f$ satisfying 
\eqref{conditiihesj} with $N\geq 3$ one has:
\begin{align}\label{apatra3}
f(H_h)=\tilde{g}_0f(H_h^{\rm bulk})g_0+\sum_{\gamma\in E_{\rm bdy}}
\tilde{g}_\gamma f(H_h^{(\gamma)})g_\gamma
 -\frac{1}{\pi}\int_{\mathcal{D}} \frac{\partial
  f_{a,N}}{\partial \overline{z}}(H_h-z)^{-1}V_h(z)dxdy.
\end{align}
Hence \eqref{aprrilie13} will follow if we can show that for any 
$N_1>0$ there is an $N_2>0$ and a $C>0$ such that for any 
$h\in (0,1]$ and $\ell\in [1,d h^{-1/2}]$ one has 
\begin{align}\label{apatra4}
\left\| \int_{\mathcal{D}} \frac{\partial
  f_{a,N_2}}{\partial \overline{z}}(H_h-z)^{-1}V_h(z) \,dxdy
\right\|_{B_1(L^2(\Omega_h))} \leq C\,h^{-5/2} \ell^{N_1}\,.
\end{align}

\emph{Step 2. Preliminary estimates}\\
We show that $(H_h-z)^{-1}$ is a Hilbert-Schmidt
operator. First, for $z=-1$ we use the diamagnetic inequality in order 
to bound the absolute value of the integral kernel of $(H_h+1)^{-1}$
with the (positive) integral kernel of $(-\Delta+1)^{-1}$.  
Thus we obtain: 
\begin{align}\label{eq:hs-1}
\left\| (H_h+1)^{-1}\right\| _{B_2(L^2(\Omega_h))}\leq C 
\sqrt{|\Omega_h|}=C \sqrt{|\Omega|}\, h^{-1/2} \,.
\end{align}
For general $z$ we use the resolvent equation
$$(H_h-z)^{-1}=(H_h+1)^{-1}+(z+1)(H_h-z)^{-1}(H_h+1)^{-1}$$
in order to obtain the existence of $C>0$ such that
\begin{align}\label{hcatreia1}
\left\| (H_h-z)^{-1}\right\|_{B_2(L^2(\Omega_h))}\leq C\; h^{-1/2}\; \frac{\langle
 \Re z\rangle }{|\Im z|},
\end{align}
for all  $h$ and $z\in\mathcal D$.

Next, we consider the operators $V_h(z)$ and prove that there are constants $C$ and $\delta$ such that for all $z\in \mathcal{D}$, all $h\in(0,1]$ and all $\ell\in [1,dh^{-1/2}]$ one has
\begin{align}\label{eq:vhinfty}
\left\| V_h(z) \right\| \leq C h^{-3/4} \ell^{-1/2} \frac{\langle \Re z\rangle}
{|\Im z|}\exp\{-\delta \ell |\Im z|/\langle \Re z\rangle\}
\end{align}
and
\begin{align}\label{eq:vh2}
\left\| V_h(z) -V_h(-1) \right\|_{B_2(L^2(\Omega_h))}
\leq 
C h^{-3/4} \ell^{-1/2} \frac{\langle \Re z\rangle^2}
{|\Im z|}\exp\{-\delta \ell |\Im z|/\langle \Re z\rangle\}.
\end{align}
In the following $z$ will always be restricted to $\mathcal{D}$. 
In order to prove \eqref{eq:vhinfty}, let us take a generic term of 
$V_h(z)$, that is 
$(\nabla\tilde{g}_\gamma)\cdot (-i\nabla
-\A_h^{(\gamma)})(H_h^{(\gamma)}-z)^{-1}g_\gamma$. 
We can cover the supports of 
$\nabla\tilde{g}_\gamma$ and $g_\gamma$
with approximately $\ell^2$ disjoint squares of diameter $1$. 
Denote one such square covering the support of
$\nabla\tilde{g}_\gamma$ by $\Lambda_1$, and one such
square covering the support of $g_\gamma$ by $\Lambda_2$. 
Let $\chi_j$, $j=1,2$ denote their characteristic functions. Due to
the condition \eqref{aprrilie6}, the distance between
$\Lambda_1$ and $\Lambda_2$ is at least of order $\ell$. 

Now $H_h^{(\gamma)}$ plays the role of $H_D$ in \eqref{hcadoua7}, and  
$(\nabla\tilde{g}_\gamma)\cdot (-i\nabla -\A_h^{(\gamma)})$ is $W$ in the same
estimate. A consequence of Proposition \ref{prop4} and of
\eqref{aprrilie5} is the estimate:
$$
\left\| \chi_1 (\nabla\tilde{g}_\gamma)\cdot (-i\nabla
 -\A_h^{(\gamma)})(H_h^{(\gamma)}-z)^{-1}\chi_2\right\|
 \leq C \ell^{-1}
 \frac{\langle \Re z\rangle}{|\Im z|}\exp\{-\delta \ell |\Im z|/\langle \Re z\rangle\} \,, 
$$
where the constants $C$ and $\delta$ are independent of $h$, $\gamma$,
$z$ 
and the positions of $\chi_1$ and $\chi_2$.

A similar estimate holds for 
$\chi_1 (\Delta\tilde{g}_\gamma) (H_h^{(\gamma)}-z)^{-1}\chi_2$. 
Now let $\chi_1^{(k)}$, $k=1,\ldots,K$, be a family of characteristic 
functions corresponding to disjoint squares of diameter one covering 
$\supp(\nabla \tilde g_\gamma)$, and similarly $\chi_2^{(l)}$,
$l=1,\ldots,L$, covering $\supp g_\gamma$. Then for all $\Psi,\Phi\in L^2(\Omega_h)$
\begin{align*}
& \left| \left\langle \Psi, \left\{-2i(\nabla\tilde{g}_\gamma)\cdot
(-i\nabla -\A_h^{(\gamma)})-(\Delta\tilde{g}_\gamma)\right\}
(H_h^{(\gamma)}-z)^{-1}g_\gamma \ \Phi\right \rangle \right| \\
& \quad \leq \sum_{k,l} \left| \left \langle \Psi, 
\chi_1^{(k)} \left\{-2i(\nabla\tilde{g}_\gamma)\cdot
(-i\nabla -\A_h^{(\gamma)})-(\Delta\tilde{g}_\gamma)\} 
(H_h^{(\gamma)}-z)^{-1}g_\gamma \right\}
\chi_2^{(l)} \Phi\right\rangle \right| \\
& \quad \leq C \ell^{-1} \frac{\langle \Re z\rangle}{|\Im z|}\exp\{-\delta \ell |\Im z|/\langle \Re z\rangle\}
\sum_{k,l} \left\|\chi_1^{(k)} \Psi \right\| \left\|\chi_2^{(l)} \Phi \right\|.
\end{align*}
Since $K$ and $L$ are bounded by a constant times $\ell^2$, we have 
$\sum_{k} \left\|\chi_1^{(k)} \Psi \right\|\leq C \ell
\|\Psi\|$ by the Schwarz inequality and similarly for $\Phi$. From this we conclude that
$$
\left\|\left\{-2i(\nabla\tilde{g}_\gamma)\cdot
(-i\nabla -\A_h^{(\gamma)})-(\Delta\tilde{g}_\gamma)\right\}
(H_h^{(\gamma)}-z)^{-1}g_\gamma \right\|
\leq C \frac{\ell\ \langle \Re z\rangle}{|\Im z|}
e^{-\delta \ell |\Im z|/\langle \Re z\rangle } \,.
$$
Summing over $\gamma$ and using the fact that number of $\gamma$'s
which are non-zero at a given point is uniformly bounded in $h$ and $\ell$ we arrive at
\eqref{eq:vhinfty}. The estimate for the term containing $H^{\rm bulk}_h$ is completely analogous, with the difference that the
number of unit squares needed to cover the support of $g_0$ is of order $h^{-1}$, whereas the number of unit squares needed to cover the support of $\nabla\tilde g_0$ is of order $h^{-1/2}\ell$. This explains the factor $h^{-3/4}\ell^{-1/2}$ in \eqref{eq:vhinfty}. The factor $\ell$ does not appear in \eqref{eq:vhinfty} because of our assumption that $h^{1/2}\ell \leq d$. (It is the bound on $\left\{-2i(\nabla\tilde{g}_0)\cdot (-i\nabla -\A_h^{\rm bulk})-(\Delta\tilde{g}_0)\right\}
(H_h^{\rm bulk}-z)^{-1}g_0$, which is not optimal. One should take into account the decay between the supports of $\chi_1^{(k)}$ and $\chi_2^{(l)}$, similarly as we did in the proof of Proposition \ref{bdry}.)

In order to prove \eqref{eq:vh2} we proceed similarly as before by 
controlling the Hilbert-Schmidt norm of the operator:
\begin{align*}
& \chi_1 (\nabla\tilde{g}_\gamma)\cdot (-i\nabla
-\A_h^{(\gamma)})\{(H_h^{(\gamma)}-z)^{-1}-(H_h^{(\gamma)}+1)^{-1}\}\chi_2 \\
& \quad=
(z+1) \chi_1 (\nabla\tilde{g}_\gamma)\cdot (-i\nabla -\A_h^{(\gamma)})
(H_h^{(\gamma)}+1)^{-1}(H_h^{(\gamma)}-z)^{-1}\chi_2
\end{align*}
with $\chi_1$ and $\chi_2$ as above. Now using the result \eqref{eee2} of Proposition \ref{prop4} we arrive at:
$$
\left\| \chi_1 (\nabla\tilde{g}_\gamma)\cdot (-i\nabla
  -\A_h^{(\gamma)})(
H_h^{(\gamma)}+1)^{-1}(H_h^{(\gamma)}-z)^{-1}\chi_2 \right\|_{B_2}
\leq C \ell^{-1} \frac{\langle \Re z\rangle}{|\Im z|}
e^{-\delta \ell |\Im z|/\langle \Re z\rangle} \,, 
$$
where the constants $C$ and $\delta$ are independent of $h$, $\gamma$,
$z$ and the positions of $\chi_1$ and $\chi_2$. 
Hence, with the same notation as above,
\begin{align*}
& \left\| (\nabla\tilde{g}_\gamma)\cdot (-i\nabla -\A_h^{(\gamma)})
(H_h^{(\gamma)}+1)^{-1}(H_h^{(\gamma)}-z)^{-1} g_\gamma \right\|_{B_2}^2 \\
& \quad =\sum_{k,l}
\left\| \chi_1^{(k)} (\nabla\tilde{g}_\gamma)\cdot (-i\nabla
  -\A_h^{(\gamma)})(H_h^{(\gamma)}+1)^{-1}(H_h^{(\gamma)}-z)^{-1}
  g_\gamma \chi_2^{(l)} \right\|_{B_2}^2 \\
& \quad \leq C \ell^2 \frac{\langle \Re z\rangle^2}{|\Im z|^2}\exp\{-2\delta \ell |\Im z|/\langle \Re z\rangle\}
\end{align*}
Here we used that both the $k$ and the $l$ sum contain 
$\mathcal O(\ell^2)$ terms. Summing over $\gamma$ 
and using again that the overlap of the covering is uniformly bounded, 
we arrive at an estimate as in \eqref{eq:vh2}. The term with $H_h^{\rm
  bulk}$ can be treated in a similar way, where we note that we need
about $h^{-1}$ unit squares to cover the support of $g_0$ and $h^{-1/2}\ell$ to cover that of $\nabla\tilde g_0$. (Again, the argument can be improved at this point.)

\emph{Step 3. Proof of \eqref{apatra4}.}\\
Due to the presence of the momentum operator in the expression of $V_h(z)$, we cannot hope to prove that $V_h(z)$ is Hilbert-Schmidt. We resolve this problem in a similar way as we did in the proof of Proposition \ref{bdry}, namely by replacing the operator $(H_h-z)^{-1}V_h(z)$ under the integral in \eqref{apatra4} by the difference $$(H_h-z)^{-1}V_h(z)-(H_h+1)^{-1}V_h(-1)$$
without changing the integral. In this way, we obtain a sum of two terms:
\begin{equation}\label{hcatreia2}
\left[(H_h-z)^{-1}-(H_h+1)^{-1}\right]V_h(z)+(H_h+1)^{-1}\left[V_h(z)-V_h(-1)\right].
\end{equation}
In the first term we can apply the resolvent formula and arrive at the
operator:
\begin{equation}\label{hcatreia3}
(z+1)(H_h-z)^{-1}(H_h+1)^{-1}V_h(z).
\end{equation}
By \eqref{eq:hs-1} and \eqref{hcatreia1} the product
$(H_h-z)^{-1}(H_h+1)^{-1}$ is a trace class operator with trace norm
bounded by $ C\; h^{-1}\langle \Re z \rangle /|\Im z|$. The bound
\eqref{eq:vhinfty} on the operator norm of $V_h(z)$ 
gives
$$
\left\| (z+1)(H_h-z)^{-1}(H_h+1)^{-1}V_h(z)\right\|_{B_1}
\leq C h^{-7/4}\ell^{-1/2} \frac{\langle
\Re z\rangle^3}{|\Im z|^2}\exp\{-\delta \ell |\Im z|/\langle \Re z\rangle\} \,,
$$
which, in view of the inequality $x^N e^{-x} \leq (N/e)^N$, 
can be further bounded as follows:
\begin{equation}\label{hcatreia4}
\left\| (z+1)(H_h-z)^{-1}(H_h+1)^{-1}V_h(z)\right\|_{B_1}
\leq C_N h^{-7/4} \ell^{-N-1/2}  \frac{\langle \Re z\rangle^{3+N}}{|\Im z|^{2+N}} \,.
\end{equation}
Thus we obtained a power-like decay in $h$ at the price of a more
singular behavior in $\Im z$. But choosing $N$ large enough in
\eqref{apatra2} allows us to kill the
singularity in $\Im z$ and -- at the same time -- to  turn the growth 
in $\Re z$ into something integrable. 
Therefore we are done with the first term of
\eqref{hcatreia2}. 

The second term in \eqref{hcatreia2} is bounded using \eqref{eq:hs-1} 
and \eqref{eq:vh2}:
$$
\left\| (H_h+1)^{-1}\left[V_h(z)-V_h(-1)\right] \right\|_{B_1}
\leq C h^{-5/4} \ell^{-1/2} \frac{\langle\Re z\rangle^2}{|\Im z|}
\exp\{-\delta \ell |\Im z|/\langle \Re z\rangle\} \,.$$
As before, this implies
$$ \left\| (H_h+1)^{-1}\left[V_h(z)-V_h(-1)\right] \right\|_{B_1}
\leq C_N h^{-5/4} \ell^{- N-1/2}\frac{\langle\Re z\rangle^{2+N}}{|\Im z|^{1+N}}$$
which completes the proof.
\end{proof}


\subsection{Completing the proof of Theorem \ref{theorem2}}

The bound from Lemma \ref{lemma2} reduces the study of ${\rm Tr}\{f(H_h)\}$ to that of
$$
T_{\rm bulk}(h):= {\rm Tr}\left \{ f(H_h^{\rm bulk}) g_0\right \}
$$
and
$$
T_{\rm bdy}(h):= \sum_{\gamma\in E_{\rm bdy}}{\rm Tr}\left \{ f(H_h^{(\gamma)})g_\gamma \right\} \,.
$$
Indeed, Lemma \ref{lemma2} leads to the estimate: 
\begin{align}\label{apatra5}
{\rm Tr}\{f(H_h)\} = T_{\rm bulk}(h)+T_{\rm bdy}(h)+\mathcal{O}_N(h^{-7/4} \ell^{-N} )
\end{align}
for any $N\geq 1$. Let us briefly mention that by writing
$$
f(H_h^{(\gamma)}) g_\gamma=\{f(H_h^{(\gamma)}) (H_h^{(\gamma)}-i)^2\} (H_h^{(\gamma)}-i)^{-2}g_\gamma
$$
we can apply Proposition \ref{prop15} and conclude that $f(H_h^{(\gamma)}) g_\gamma$ is trace class. The same is true for the
bulk term.

\vspace{0.5cm}

\noindent{\bf The bulk contribution.}~\\
Since the operator $f(H_h^{\rm bulk}) g_0$ is trace class and has a
jointly continuous integral kernel (see Appendix \ref{apendixD}), we
can compute its trace as $T_1(h)=\int_{\R^2}f(H_h^{\rm bulk})(\x,\x)g_0(\x)d\x$. 
The bulk operator $H_h^{\rm bulk}$ is constructed in such a way that
on the support of $\tilde{g}_0$ the vector potential $\A_h^{\rm bulk}$ generates the
magnetic field $B_h(\x)$, which obeys \eqref{eq:bulkass}. Now putting together \eqref{eq:bulk} and 
\eqref{eq:landaudiag} we obtain:
\begin{align}\label{aprrilie20}
T_{\rm bulk}(h)&= (2\pi)^{-1} \sum_{k=1}^\infty \int_{\Omega_h} b_k(B_h(\y),f) B_h(\y) g_0(\y)\,d\y 
+ \mathcal{O}_N( |{\rm supp} g_0 | (h+\ell^{-N}) ) \nonumber \\
& = (2\pi h)^{-1} \sum_{k=1}^\infty \int_{\Omega} b_k(B(\x),f) B(\x) g_0(h^{-1/2}\x)\,d\x
+ \mathcal{O}_N(1+ h^{-1}\ell^{-N}) \,.
\end{align}

\vspace{0.5cm}

\noindent{\bf The boundary contribution.}~\\
The operators $H_h^{(\gamma)}$ are constructed in such a way that the assumptions of Corollary \ref{bdry2cor} are satisfied, with a constant $M$ independent of $\gamma$. The role of $g$ will be played by $g_\gamma$. On the support of $\tilde{\tilde{ g}}_\gamma$ we have introduce a local parametrization $q^{(\gamma)}$ of $\Omega_h$. Since $\Omega_h$ is a dilation of $\Omega$ by a factor $h^{-1/2}$, the functions $h^{1/2} |q^{(\gamma)} {}'|$ and $h |q^{(\gamma)}{}''|$ are bounded on the support of $\tilde{\tilde{ g}}_\gamma$, uniformly in $h$, $\ell$ and $\gamma$. This is, of course, stronger than \eqref{eq:qloccor}. Thus there is a $C$ such that if $h\in (0,1]$, $\ell\in [1,C^{-1} h^{-1/2}]$ and $\gamma\in E_{\rm bdy}$ one has
\begin{align*}
& \left| \Tr \left \{ f(H_h^{(\gamma)})g_\gamma \right\} - \frac1{2\pi h} \sum_{k=1}^\infty \int_{\Omega} g_\gamma(h^{-1/2}\x) b_k(B(\x),f) B(\x) \,d\x \right. \notag\\
& \qquad\qquad\qquad\qquad\left. - \frac1{2\pi h^{1/2}} \sum_{k=1}^\infty \int_{\partial\Omega} g_\gamma(h^{-1/2}\x) s_k(B(\x),f) \sqrt{B(\x)} \,d\sigma(\x) \right| \\
& = \left| \Tr \left \{ f(H_h^{(\gamma)})g_\gamma \right\} - \frac1{2\pi} \sum_{k=1}^\infty \int_{\Omega_h} g_\gamma(\y) b_k(B_h(\y),f) B_h(\y) \,d\y \right. \notag\\
& \qquad\qquad\qquad\qquad\left. - \frac1{2\pi} \sum_{k=1}^\infty \int_{\partial\Omega_h} g_\gamma(\y) s_k(B_h(\y),f) \sqrt{B_h(\y)} \,d\sigma(\y) \right| \\
& \leq C \ell^2 (h^{1/2}\ell^2 + h\ell^4 + \ell^{-2} ) \,.
\end{align*}
Since there are about $h^{-1/2}\ell^{-1}$ elements in $E_{\rm bdy}$ we obtain
\begin{align}\label{asasea8}
 &\left \vert T_{\rm bdy}(h) - \frac1{2\pi h} \sum_{k=1}^\infty \int_{\Omega} g_\gamma(h^{-1/2}\x) b_k(B(\x),f) B(\x) \,d\x \right. \notag\\
 & \qquad\qquad\left. - \frac1{2\pi h^{1/2}} \sum_{k=1}^\infty \int_{\partial\Omega} g_\gamma(h^{-1/2}\x) s_k(B(\x),f) \sqrt{B(\x)} \,d\sigma(\x) \right| \leq C h^{-1/2} \ell (h^{1/2}\ell^2 + h\ell^4 + \ell^{-2} ) \,.
\end{align}
In order to optimize the error, we now choose $\ell=h^{-1/8}$. (Note that with this choice $h^{1/2}\ell$ is arbitrarily small if $h$ is small.) With this choice the right side of \eqref{asasea8} is bounded by a constant times $h^{-3/8}$.

Finally, by choosing $N$ in \eqref{apatra5} and \eqref{aprrilie20} sufficiently large and recalling \eqref{adoua3}, we arrive at
$$
\Tr f(H_h) = h^{-1} \left( C_0(f) + h^{1/2} C_1(f) + O(h^{5/8}) \right)
\qquad\text{as}\ h\to 0 \,.
$$
This completes the proof of Theorem \ref{theorem2}.


\section{Proof of Corollary \ref{theorem1}}\label{sectiuneasase}

This corollary is a rather direct consequence of Theorem
\ref{theorem2}. We will work with the dilated domain, see
\eqref{prima9}. Clearly, because $L_h$ is unitarily equivalent with
$hH_h$ we obtain that
$N(hE,L_h)$ will equal the number of 
eigenvalues of $H_h$ (including multiplicities) which are less or
equal than $E$. In order to simplify things we assume that $K=1$, 
which means that $B_{\rm max}<E<3B_{\rm min}$. 

 Theorem \ref{theorem2} can now be applied for a function  $f\in C_0^\infty(\R)$ which equals $1$ on the interval $[B_{\rm min}-\epsilon,B_{\rm max}+\epsilon]$ (here $\epsilon>0$ is chosen small enough) and whose support is included in $(-\infty, 3B_{\rm min}-\epsilon]$. In this case only the first Landau level will
contribute to the bulk term, and we can thus state the following 
straightforward consequence of our main theorem:
\begin{align}\label{prima7}
&\lim_{h\searrow 0} h^{1/2}\left \vert {\rm Tr}\{f(H_h)\}-\frac{1}{2\pi h}\int_{\Omega}B(\x)\,d\x\right \vert \nonumber \\
&=\frac{1}{2\pi} \int_{\partial\Omega} d\sigma(\x)
\int_{0}^{\infty}dt \, \sqrt{B(\x)}  \left \{\int_{\R}f(B(\x)) e_1(\xi)) \left |\psi_1(t,\xi)\right |^2d\xi- 1\right \} \,.
\end{align}
Now denote by $f_-$ any function as above whose support is included in $(-\infty, E]$, and by $f_+$ any function as above which equals $1$ on $(-\infty, E]$. By a straightforward monotonicity argument we have: 
\begin{align}\label{finalaa1}
& \frac{1}{2\pi} \int_{\partial\Omega} d\sigma(\x)
\int_{0}^{\infty}dt \, \sqrt{B(\x)}  \left \{\int_{\R}f_-(B(\x) e_1(\xi)) \left |\psi_1(t,\xi)\right |^2d\xi- 1\right \} \nonumber \\
& \leq \liminf_{h\searrow 0} h^{1/2}\left \vert N(E,H_h)-\frac{1}{2\pi h}\int_{\Omega}B(\x)\,d\x\right \vert
\ \leq \ \limsup_{h\searrow 0} h^{1/2}\left \vert N_h(hE)-\frac{1}{2\pi h}\int_{\Omega}B(\x)\,d\x\right \vert \nonumber \\
&\leq \frac{1}{2\pi} \int_{\partial\Omega} d\sigma(\x)
\int_{0}^{\infty}dt \, \sqrt{B(\x)}  \left \{\int_{\R}f_+(B(\x) e_1(\xi)) \left |\psi_1(t,\xi)\right |^2d\xi- 1\right \} \,.
\end{align}
At this moment we can take the supremum over $f_-$ and the infimum over $f_+$ and show that they are equal. The main idea is to show that the Lebesgue measure of the set of $\xi$'s near $B(\x) e_1(\xi)=E$ is small, uniformly in $\x\in\partial\Omega$. In fact, using Hadamard's identity \eqref{hadamard} one can prove that uniformly in $\x\in\partial\Omega$:
$$
\lim_{\delta \searrow 0}\left \vert \{\xi\in \mathbb{R}: E-\delta\leq B(\x) e_1(\xi)\leq E+\delta\}\right \vert =0 \,.
$$
The last ingredient is to use that $\psi_1$ has norm one. The proof is over.
\qed


\appendix

\section{Analysis of the one-dimensional model operator}\label{AppendixA}

\subsection{Eigenvalues and eigenfunctions of the model operator}

Recall that the eigenfunctions $\phi_k$ and $\psi_k(\cot,\xi)$ were introduced in the introduction. Here we collect some information on these functions as well as on the corresponding eigenvalues.

\begin{lemma}\label{1.4bis}
There are constants $\alpha>0$ and $C>0$ such that, for all $k\geq 1$,  
\begin{align}\label{azecea41}
&|\psi_k(t,\xi)|\leq C \sqrt{e_k(\xi)+1}\,,\quad |\phi_k(t)|\leq C\;
\sqrt k,\quad t>0,\xi\in\mathbb R,\\
&e_k(\xi)\geq (2k-1),\quad \forall \xi\in \mathbb R\,,\label{azecea43}\\
&e_k(\xi)\geq (4k-1)+\xi^2,\quad \xi\geq 0\,,\label{azecea46}\\
&|\phi_k(t)|\leq C\; e^{-\alpha t^2},\quad |t| \geq C k^\frac 12\,, 
\label{azecea45}\\
&|e_k(\xi) -(2k-1)|\leq C e^{-\alpha |\xi|^2},\quad  \xi \leq -C k^\frac 12,\label{azecea44}\\
&|\psi_k(t,\xi)-\phi_k(t+\xi)|\leq C\; e^{-\alpha (t^2 +
  \xi^2)}\,,\; t>0\;{\rm and}\; \xi \leq -C k^\frac12\,. \label{azecea42}
\end{align}
\end{lemma}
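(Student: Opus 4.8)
The plan is to establish the six estimates more or less in the order stated, as each feeds into the next; throughout I write $H_\xi$ for the operator in \eqref{modello}, and the main point is to keep the $k$-dependence explicit so that all constants come out uniform in $k$ --- which is possible because every ``large parameter'' threshold is taken proportional to $k^{1/2}$, the natural length of the $k$-th bound state. The first line of \eqref{azecea41} follows from the one-dimensional Sobolev inequality: since $\psi_k(0,\xi)=0$, $|\psi_k(t,\xi)|^2=2\Re\int_0^t\psi_k\,\overline{\partial_t\psi_k}\,ds\le2\|\psi_k(\cdot,\xi)\|_{L^2(\R_+)}\|\partial_t\psi_k(\cdot,\xi)\|_{L^2(\R_+)}$, and testing \eqref{modello} against $\psi_k(\cdot,\xi)$ together with $(\xi+t)^2\ge0$ gives $\|\partial_t\psi_k(\cdot,\xi)\|_{L^2(\R_+)}^2\le e_k(\xi)$, whence $|\psi_k(t,\xi)|^2\le2\sqrt{e_k(\xi)}\le e_k(\xi)+1$; the same computation on $\R$ yields $|\phi_k(t)|^2\le2\sqrt{2k-1}$, which is stronger than claimed. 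Inequality \eqref{azecea43} is a min-max comparison: extending an element of the form domain of \eqref{modello} by zero to $\R$ shows that $e_k(\xi)$ dominates the $k$-th eigenvalue of $-d^2/dt^2+(\xi+t)^2$ on $L^2(\R)$, which equals $2k-1$ after the translation $t\mapsto t-\xi$. For \eqref{azecea46} one uses that $(\xi+t)^2\ge\xi^2+t^2$ for $\xi,t\ge0$, so that for $\xi\ge0$ the form of \eqref{modello} dominates that of the Dirichlet realization of $-d^2/dt^2+\xi^2+t^2$ on $L^2(\R_+)$, whose $k$-th eigenvalue is $\xi^2+(4k-1)$ (the odd Hermite functions).

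For the Gaussian decay \eqref{azecea45} I would compare ODEs in the classically forbidden region. Since the largest zero of $H_k$ is $O(k^{1/2})$, for all sufficiently large $k$ the function $\phi_k$ is positive on $[t_*,\infty)$ with $t_*:=\sqrt{2(2k-1)}$, and there $\phi_k''=(t^2-(2k-1))\phi_k$ with $t^2-(2k-1)\ge t^2/4-1/2=(e^{-t^2/4})''/e^{-t^2/4}$; hence the Wronskian $\phi_k'\,e^{-t^2/4}-\phi_k\,(e^{-t^2/4})'$ is nondecreasing and, since $\phi_k$ decays like $e^{-t^2/2}$, tends to $0$ at $+\infty$, so it is $\le0$ and $\phi_k\,e^{t^2/4}$ is nonincreasing on $[t_*,\infty)$. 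With $|\phi_k(t_*)|\le Ck^{1/4}$ (from the proof of \eqref{azecea41}) this gives $\phi_k(t)\le Ck^{1/4}e^{k}e^{-t^2/4}$ for $t\ge t_*$, and for $|t|\ge Ck^{1/2}$ with $C>2\sqrt2$ the prefactor $k^{1/4}e^{k-C^2k/8}$ is bounded uniformly in $k$, giving $|\phi_k(t)|\le C'e^{-t^2/8}$; the case $t<0$ is handled by parity and the finitely many remaining $k$ by hand.

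The lower bound in \eqref{azecea44} is \eqref{azecea43}; for the upper bound I would plug into the Rayleigh quotient of \eqref{modello} the trial function $u=\chi\,\phi_k(\cdot+\xi)$, with $\chi$ a fixed cut-off vanishing on $[0,1]$ and equal to $1$ on $[2,\infty)$. An integration by parts, using that $\phi_k(\cdot+\xi)$ solves $-u''+(\xi+t)^2u=(2k-1)u$ on $\R$, gives $\langle H_\xi u,u\rangle=(2k-1)\|u\|^2+\|\chi'\phi_k(\cdot+\xi)\|^2$; since $t+\xi\le2-Ck^{1/2}$ on $\supp\chi'$, the decay \eqref{azecea45} makes both $\|\chi'\phi_k(\cdot+\xi)\|^2$ and $1-\|u\|^2$ of size $O(e^{-\alpha'\xi^2})$, so $e_k(\xi)\le\langle H_\xi u,u\rangle/\|u\|^2\le(2k-1)+Ce^{-\alpha'\xi^2}$, which together with \eqref{azecea43} proves \eqref{azecea44}.

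Estimate \eqref{azecea42} is the step I expect to be the main obstacle. I would first obtain its $L^2$-version: $w':=\psi_k(\cdot,\xi)-\chi\,\phi_k(\cdot+\xi)$ lies in the operator domain and solves $(H_\xi-e_k(\xi))w'=(e_k(\xi)-(2k-1))\chi\,\phi_k(\cdot+\xi)-r$, where $r$ is supported in $[1,2]$ and, by \eqref{azecea45}, is $O(e^{-\beta\xi^2})$ in $L^2$; with \eqref{azecea44} the right-hand side is $O(e^{-\gamma\xi^2})$. Writing $w'=c\,\psi_k(\cdot,\xi)+v$ with $v\perp\psi_k(\cdot,\xi)$, the part $v$ is controlled by inverting $H_\xi-e_k(\xi)$ on $\{\psi_k(\cdot,\xi)\}^\perp$, which is bounded there since \eqref{azecea43}--\eqref{azecea44} applied to the indices $k-1,k,k+1$ yield a spectral gap $\ge1$ around $e_k(\xi)$ when $|\xi|\ge Ck^{1/2}$, so $\|v\|=O(e^{-\gamma\xi^2})$; the scalar $c$ is then pinned down by comparing $\|w'\|^2=|c|^2+\|v\|^2$ with the direct expansion $\|w'\|^2=2-2\langle\phi_k(\cdot+\xi),\psi_k(\cdot,\xi)\rangle+O(e^{-\gamma\xi^2})$, which (since $c=1-\langle\phi_k(\cdot+\xi),\psi_k(\cdot,\xi)\rangle+O(e^{-\gamma\xi^2})$) forces $\langle\phi_k(\cdot+\xi),\psi_k(\cdot,\xi)\rangle=1+O(e^{-\gamma'\xi^2})$ --- the sign being $+$, which can be seen from a zero-counting argument together with the fact that both functions are positive near $+\infty$ --- and hence $|c|=O(e^{-\gamma'\xi^2})$. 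Undoing the cut-off, whose defect $(\chi-1)\phi_k(\cdot+\xi)$ is pointwise $O(e^{-\alpha\xi^2})$ by \eqref{azecea45}, gives $\|\psi_k(\cdot,\xi)-\phi_k(\cdot+\xi)\|_{L^2(\R_+)}=O(e^{-\delta\xi^2})$. Finally I would upgrade this to \eqref{azecea42} by an Agmon estimate for $w:=\psi_k(\cdot,\xi)-\phi_k(\cdot+\xi)$, which solves $(H_\xi-e_k(\xi))w=g$ on $\R_+$ with $w(0)=-\phi_k(\xi)$, both $g$ and $w(0)$ being $O(e^{-\alpha\xi^2})$ by \eqref{azecea44}--\eqref{azecea45}: multiplying by $e^{2\epsilon\,\mathrm d}$, with $\mathrm d$ the Agmon distance to the classically allowed set $\{t>0:|\xi+t|\le\sqrt{2k-1}\}$ (so $\mathrm d(0)\asymp\xi^2$ and $\mathrm d(t)\asymp t^2$ for large $t$) and $\epsilon$ small, and feeding in the $L^2$-bound just obtained to control $w$ on that set, gives $|w(t)|\le Ce^{-\alpha(t^2+\xi^2)}$ after shrinking $\alpha$. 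The genuinely delicate part, here and in \eqref{azecea44}, is to arrange all the thresholds and exponents so that everything closes with constants independent of $k$.
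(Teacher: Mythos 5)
Your argument for \eqref{azecea44} has a genuine gap. After computing the Rayleigh quotient of $u=\chi\phi_k(\cdot+\xi)$, you invoke $e_k(\xi)\le\langle H_\xi u,u\rangle/\|u\|^2$; but this inequality is false for a single trial function when $k\ge 2$, since the Rayleigh quotient of one vector bounds only the bottom eigenvalue $e_1(\xi)$ from above. The quantities you computed --- $\langle H_\xi u,u\rangle/\|u\|^2=(2k-1)+O(e^{-\alpha\xi^2})$ and $\|(H_\xi-(2k-1))u\|=O(e^{-\alpha\xi^2})$ --- are exactly the paper's quasimode bound \eqref{eq:quasimode}, but to conclude that it is precisely $e_k(\xi)$ which sits near $2k-1$ you need either (i) the full $k$-dimensional trial space $\mathrm{span}\{\chi\phi_j(\cdot+\xi):1\le j\le k\}$ together with an almost-orthonormality estimate for the Gram matrix (the resulting extra polynomial in $k$ is killed by $\xi\le-Ck^{1/2}$), or (ii) the paper's route: the spectral theorem yields one eigenvalue within $O(e^{-\alpha\xi^2})$ of each $2j-1$, $1\le j\le k$, the intervals are pairwise disjoint, and by \eqref{azecea43} there are at most $k$ eigenvalues below $2k+1$, so the $j$-th interval must contain exactly $e_j(\xi)$. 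Without one of these, \eqref{azecea44} --- and hence the spectral gap you invoke when inverting $H_\xi-e_k(\xi)$ on $\{\psi_k\}^\perp$ in your proof of \eqref{azecea42} --- is unjustified.

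The remaining steps are correct but take some genuinely different routes. For \eqref{azecea46} the pointwise inequality $(\xi+t)^2\ge\xi^2+t^2$ delivers the $\xi^2$ shift more directly than the bare domain-monotonicity the paper cites. For \eqref{azecea45} your Wronskian/Sturm comparison is a valid elementary substitute for the paper's Agmon estimate with weight $e^{\beta s^2/2}$, and your positivity claim on $[t_*,\infty)$ in fact holds for every $k$ (no zeros can occur in the classically forbidden region). For the $L^2$ part of \eqref{azecea42} your decomposition $w'=c\psi_k+v$, $v\perp\psi_k$, followed by pinning $|c|$ via the norm identity, is equivalent to the paper's projection-of-the-quasimode argument, and both treatments leave the sign $a=\pm1$ somewhat informal. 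The real divergence is the pointwise upgrade: the paper proves a separate Gaussian bound on $\psi_k$ (same Agmon argument as for $\phi_k$), uses the integration-by-parts identity \eqref{nove2} to get an $H^1$ bound on $\psi_k-\phi_k(\cdot+\xi)$, Sobolev, and then a case split in $|t+\xi|\lessgtr|\xi|/2$; you run an Agmon estimate directly on the difference $w$ with nonzero boundary value $w(0)=-\phi_k(\xi)$. That does close, but one must check the boundary term $e^{2\epsilon\mathrm{d}(0)}|w(0)\,w'(0)|$ (needs $\epsilon$ small relative to the exponent in \eqref{azecea45} and a polynomial bound on $w'(0)$), the elementary inequality $(t+\xi)^2+\xi^2\ge c_0(t^2+\xi^2)$ to convert the Agmon-distance decay into $t^2+\xi^2$ decay, and a Sobolev step on the allowed set where $\mathrm{d}=0$ --- precisely the places where the uniformity in $k$ that you correctly flag as the delicate issue has to be earned.
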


\begin{remark}
The main observation is that $\psi_k$ lives in the classical region
 $(t+\xi)^2 \leq e_k(\xi)$. Hence outside a sufficiently
large neighborhood of this classical region, the eigenfunction should decay
 exponentially. The second point is that if $0$ (the boundary point) is far from the
classical region, then $\phi_k(t+\xi;1)$
 furnishes an excellent approximation of $\psi_k$ in large zones
 containing the classical region but avoiding a neighborhood of $0$.
This is what we now have to control by using explicit Agmon
estimates. \end{remark}

\begin{proof}[Proof of Lemma~\ref{1.4bis}]
Let us first prove \eqref{azecea41}. 
We recall that $\psi_k$ and
$\phi_k$ have $L^2$ norm equal to $1$. One immediately obtains after an
integration by parts that
$$
\|\psi_k\|^2_{H^1(\mathbb R_+)} = 1 + \|\psi'_k\|^2 \leq 1+
e_k(\xi)\,.
$$
Using the Sobolev embedding, we get the existence of $C>0$
 such that, for any $k\geq 1$ and $\xi \in \mathbb R$~:
\begin{equation}
\sup_{t>0}|\psi_k(t,\xi)| \leq C \sqrt{1+ e_k(\xi)}\;.
\end{equation}
Similarly, we get
\begin{equation}
\sup_{s\in \mathbb R} |\phi_k(s)|\leq C \sqrt{k}\;.
\end{equation}

The two inequalities \eqref{azecea43} and \eqref{azecea46} are easy by monotonicity
 of the Dirichlet problem with respect to the domain. Let us also note for
 the record the Hadamard formula applied to our Dirichlet problem:
\beq\label{hadamard}
\frac{d}{d\xi}e_k(\xi) = \psi'_k(0,\xi)^2\,.
\eeq

Let us show that \eqref{azecea45} is a consequence of an Agmon estimate
using a weight $e^{\Phi(s)}$ with $$
\Phi(s) = \beta \frac{s^2}{2}\,,\,0<\beta <1\,.
$$
Since 
$
\{D_s^2+s^2 - (2k-1)\} \phi_k (s)=0\, ,
$
we obtain the identity: 
$$
\|(e^{\Phi} \phi_k)'\|^2 =
\int_{\R} \{-s^2+ (\Phi')^2(s) + (2k-1)\} e^{2 \Phi(s)} \phi_k^2(s) ds\;.
$$
or 
\begin{align}\label{nove1}
\|e^{\Phi} \phi_k\|^2_{H^1(\R)}=\int_{\R} \{2k-(1-\beta^2) s^2\}
e^{2 \Phi(s)} \phi_k^2(s) ds\;\leq 2k e^{\frac{2k \beta}{1-\beta^2}},
\end{align}
where the inequality was obtained by taking a rough upper bound of the factor
multiplying $\phi_k^2(s)$ on the region where this factor is nonnegative.
Hence the Sobolev embedding provides a constant $C>0$ such that 
\begin{equation}
|\phi_k (s)| \leq C k e^{\frac{2k
    \beta}{1-\beta^2}}\; e^{-\beta s^2/2},\quad s\in \R.
\end{equation}
Now, for $0<\beta<1$, there exists $C_\beta$ 
such that 
$$C k e^{\frac{2k
    \beta}{1-\beta^2}}e^{-\beta s^2/4}\leq 1,\quad |s|\geq C_\beta
\sqrt{k}\,,$$
and \eqref{azecea45} follows by choosing for example
$\alpha=\beta/4$. 

Also note that \eqref{azecea45} implies the existence of some smaller
$\alpha$ and possibly larger $C$ such that 
\begin{equation}\label{majsup0}
| \phi_k (t+\xi)| \leq e^{-
\alpha\left( 
(t+\xi)^2 + \xi^2\right)} 
\,,\, \forall t \mbox{ s.t. } |t+\xi|\geq \frac{|\xi|}{2}\geq C \sqrt{k}\,.
\end{equation}

Let us show \eqref{azecea44}.  Due to the lower bound
\eqref{azecea43}, we only need to show the upper bound
\beq
e_k(\xi) \leq (2k-1) + C e^{- \alpha |\xi|^2}\,,\, \mbox{ for } \xi \leq - C
\sqrt{k}\,.
\eeq  
For a given $\alpha$, we can always find $C$  such that 
$C e^{- \alpha C^2} <\frac 14$. In particular, we have
$$
(2j-1) + C e^{- \alpha |\xi|^2} \leq (2j +1) - \frac 12\,, \quad \xi
<-C,\quad j\geq 1.
$$
We will show that, for every $1\leq j\leq k$ and for every $\xi<-C\sqrt{k}$, there is in an interval of half length $C e^{-\alpha \xi^2}$ centered at $(2j-1)$ at least one eigenvalue $e_\ell(\xi)$ of the Dirichlet realization of $D_t^2 + (t+\xi)^2$ in $\R_+$. Since \eqref{azecea43} implies that we cannot have more than $k$ eigenvalues in the energy interval $]1, (2k+1)[$, each such constructed $e_\ell(\xi)$ must be alone in its own interval.

We take $\chi(t) \phi_j (t+\xi)$ as the $j$-th quasimode, where $0\leq \chi\leq 1$ is a $C^\infty$ cut-off function equal to $1$ on $[1 ,+\infty)$ and vanishing in $(-\infty,0]$. The control \eqref{azecea45} (which also implies the $H^1$ control) yields
\beq \label{eq:quasimode}
\left\| \left(D_t^2 + (t+\xi)^2 \right) \chi \phi_j (\cdot+\xi) \right\|_{L^2(\R_+)} \leq C e^{-\alpha \xi^2}
\quad\text{and}\quad
\| \chi \phi_j(\cdot +\xi) \|_{L^2(\R_+)} \geq 1- C e^{-\alpha \xi^2}
\eeq
for $\xi< - C \sqrt k$. The spectral theorem now leads to the conclusion. Hence, we have shown that there exist constants $C$ and $\alpha >0$ (after possibly increasing the old $C$ and lowering the old $\alpha$), such that
\beq
|e_j -(2j-1)| \leq  C e^{ - \alpha \xi^2}\,,\, \forall \xi
\leq - C \sqrt{k},\quad 1\leq j\leq k.
\eeq
The bounds \eqref{eq:quasimode} and the spectral theorem also imply that
\beq \label{controlel2ext2}
\|\psi_j - \chi \varphi_j (\cdot + \xi)\|_{L^2(\R_+)}
 \leq C e^{- \alpha \xi^2}\,,\, \forall \xi \leq - C \sqrt k\,,\quad 1\leq j\leq k\,.
\eeq

We now turn to the proof of \eqref{azecea42}. Once we know that $e_k \sim 2k-1$ under the condition $\xi \leq - C \sqrt k$, what we have done for the Gaussian decay of $\varphi_k$ can be also done for any real eigenfunction $\psi_k$ corresponding to $e_k$. Thus for some $\beta >0$ we have
\beq \label{majsup1}
| \psi_k (t )| \leq \widehat C_\beta e^{-\beta |t+\xi|^2}
\,,\, \forall t >0\, \mbox{ s.t. } |t+\xi|\geq \widehat C_\beta
\sqrt{k},\quad \xi\leq -C\sqrt k.
\eeq
We note that under the condition $\xi \leq - C \sqrt{k}$, this implies
also
\beq 
\label{majsup1bis}
| \psi_k (t )| \leq \widehat C_\beta  e^{ - \frac \beta 8 
 (|t+\xi|^2 +\xi^2)} 
\,,\, \forall t >0\, \mbox{ s.t. } |t+\xi|\geq -\frac{\xi}{ 2}\geq C \sqrt{k}/2\,.
\eeq

The bound \eqref{azecea45} together with \eqref{controlel2ext2} implies that
\beq \label{controlel2ext}
\|\psi_k - \varphi_k (\cdot + \xi)\|_{L^2(\R_+)}
 \leq C e^{- \alpha \xi^2}\,,\, \mbox{ if } \xi \leq - C \sqrt k\,.
\eeq
To have a pointwise estimate, we integrate by parts and observe the following inequality:
\begin{align}\label{nove2}
&\|\psi_k' - \varphi_k' (\cdot + \xi)\|^2 + \| (t+\xi) ( \psi_k -
 \varphi_k(\cdot +\xi)\|^2 \nonumber \\ 
& \leq  (2k-1) \|\psi_k - \varphi_k(\cdot +\xi)\|^2 
+ 2 (e_k - (2k-1)) +|\psi_k' (0)- \varphi_k'(\xi )|\;
|\varphi_k(\xi )|\;.
\end{align}
The right hand side is exponentially small like  $\exp(-\alpha \xi^2)$ if we control $\varphi_k'(\xi )$ and $\psi_k' (0)$. This will come from  a
 control of both of them in $H^2(\R_+)$, through $(D_s^2+s^2)$ with
 the appropriate boundary conditions. Hence we have
$$
\| \varphi_k \|_{H^2(\R)}+\| \psi_k \|_{H^2(\R_+)} \leq C \{(2k-1)+e_k\}\,.
$$
Thus
\beq \label{controlel2extb}
\sup_{t>0}|\psi_k(t) - \varphi_k (t + \xi)|
 \leq C e^{ - \alpha \xi^2}\,,\, \mbox{ if } \xi \leq - C \sqrt k\,.
\eeq
We note that the last inequality implies:
\beq\label{majsup2}
|\psi_k(t) - \varphi_k (t + \xi)|
 \leq C e^{- \frac \alpha 2  (\xi^2+ (t+\xi)^2)}\,,\, \mbox{ if }
 \xi \leq - C \sqrt k \mbox{ and } |t+\xi|\leq \frac{|\xi|}{2}\,.
\eeq

We are now able to complete the proof of \eqref{azecea42}. This is simply a
consequence
 of \eqref{majsup0}, \eqref{majsup1bis} (when $ 0< t < \frac {-\xi}{2}
 $
and $t> \frac{-3\xi}{2}$)
 and of \eqref{majsup2} when $\frac{-\xi}{2} \leq t \leq
 \frac{-3\xi}2$.
More precisely, we have shown the existence of $C$ and $\alpha >0$
such that
\beq\label{majsupdef}
|\psi_k(t) - \varphi_k (t + \xi)|
 \leq C e^{- \frac \alpha 2  (\xi^2+ (t+\xi)^2)}\,,\, \mbox{ if }
 \xi \leq - C \sqrt k \mbox{ and } t >0\,.
\eeq
This finishes the proof of Lemma~\ref{1.4bis}.
\end{proof}

\subsection{Proof of Lemma \ref{salternative}}\label{AppendixB}

Replacing $f(E)$ by $f(B^{-1}E)$, we may assume in the proof that $B=1$. Since $\phi_k$ is normalized we have
$$
\int_\R |\phi_k(t+\xi)|^2 \,d\xi = 1 \,,
$$
and therefore we can rewrite $s_k(1,f)$ in the more symmetric form
\begin{equation}
\label{eq:azecea4''}
s_k(1,f) = \int_0^\infty  \left( \int_\R \left( f(e_k(\xi)) |\psi_k(t,\xi)|^2 - f(2k-1)|\phi_k(t+\xi)|^2 \right)\,d\xi \right) \, dt \,.
\end{equation}
In this form, things become clearer. We can use the estimates from Lemma \ref{1.4bis} on the behavior of the eigenvalues $e_k(\xi)$ and eigenfunctions $\psi_k(t,\xi)$ to show that $s_k(1,f)$ is well-defined and summable with respect to $k$. Before we do this, we note that once the convergence of the integrals in \eqref{eq:azecea4''} is proved we can apply Fubini's theorem and interchange the order of integration. Since $\psi_k$ is normalized, this yields the equivalent formula \eqref{eq:salternative} for $s_k(1,f)$. Also the continuous differentiability with respect to $B>0$ is a consequence of our bounds.

We now prove that the integral \eqref{eq:azecea4''} converges and is summable with respect to $k$. We split the integral with respect to $\xi$ in four regions: (I) $\xi\geq C\sqrt k$, (II) $0\leq\xi< C\sqrt k$, (III) $-C\sqrt{k}\geq\xi<0$, and
(IV) $\xi< -C\sqrt{k}$, where $C$ is the constant provided by Lemma \ref{1.4bis}.  
\begin{itemize}
\item We start with region (I). We use the fact that $f$ is rapidly decaying and get that, for any $N$, there is a $C_N>0$
  such that
$$
|f(E)|\leq C_N (1+E)^{-2N}\,.
$$ 
 Then we use \eqref{azecea46} and \eqref{azecea43} and write:
$$
|f[e_k(\xi)]| \leq C_N k^{-N}(1+\xi^2)^{-N} \,.
$$
($C_N$ here denotes a generic $N$-dependent constant, which may vary from line to line.) This when multiplied by $|\psi_k(t,\xi)|^2$ is absolutely integrable in $\xi$ and $t$, and the resulting integral is
summable in $k$ if $N>1$.\\
For the second term we use \eqref{azecea45}:
$$
\int_0^\infty dt \int_{C\sqrt{k}}^\infty d\xi |f(2k-1)|\left
  |\phi_k(t+\xi)\right |^2\leq C |f(2k-1)|\int_0^\infty dt
\int_{0}^\infty e^{-\alpha t^2}e^{-\alpha \xi^2}\leq C_N k^{-N} \,,
$$
which again is summable if $N>1$.

\item We continue with region (II). The first term, $f(e_k(\xi)) |\psi_k(t,\xi)|^2$, is treated as in region (I). For the second term we use the fact that $\phi_k$ is normalized to get
$$
\int_0^\infty dt \int_0^{C\sqrt{k}}d\xi |f(2k-1)|\left |\phi_k(t+\xi)\right |^2\leq C |f(2k-1)|\sqrt{k} \leq C_N k^{-N+1/2}
$$
which is summable for $N>3/2$.

\item In region (III) we use \eqref{azecea43} and see that both $f(e_k(\xi))$ and $f(2k-1)$ are bounded from above by a constant times $k^{-N}$. The integral with respect to $\xi$ gives just an extra $\sqrt{k}$, while the integral in $t$ is taken care of by $\psi_k$ and $\phi_k$.

\item Finally, in region (III) we split the integrand in the following way:
\begin{align*}
&f(e_k(\xi))
\left |\psi_k(t,\xi)\right |^2 -f(2k-1)\left
  |\phi_k(t+\xi)\right |^2\nonumber \\
&=\left( f(e_k(\xi))-f(2k-1)\right) \left |\psi_k(t,\xi)\right |^2 +f(2k-1)\left( \psi_k(t,\xi) +\phi_k(t+\xi) \right)\left(\psi_k(t,\xi)  -\phi_k(t+\xi)\right)\,. 
\end{align*}
We treat the two terms on the right side separately.\\
Let us start with $\left(f(e_k(\xi))-f(2k-1)\right)\left|\psi_k(t,\xi)\right |^2$. Since we are in the region where
\eqref{azecea44} applies, we can, by using the mean value theorem and the fast decay of $f'$, bound the double integral by:
$$
C_N k^{-N}\int_0^\infty dt \int_{-\infty} ^{-C\sqrt{k}}d\xi e^{-\alpha \xi^2}\left |\psi_k(t,\xi)\right |^2\leq \tilde C_N k^{-N} \,.
$$
This takes care of the first term on the right side. For the second term we use \eqref{azecea41} and obtain:
$$
\left|f(2k-1) \left(\psi_k(t,\xi) +\phi_k(t+\xi) \right)\right| \leq C_N k^{-N+1/2} \,.
$$
Then \eqref{azecea42} provides the absolute integrability of $\psi_k(t,\xi)-\phi_k(t+\xi)$ on the whole region, whose integral is uniformly bounded in $k$. 
\end{itemize}
This concludes the proof of Lemma \ref{salternative}.

\begin{lemma}\label{moment}
Let $f$ be a Schwartz function on $[0,\infty)$. For $B>0$ the sum
$$
\sum_{k=1}^\infty \int_0^\infty  t \left| \int_\R \left( f(B e_k(\xi)) |\psi_k(t,\xi)|^2 - f(B(2k-1)) \right)\,d\xi \right| \, dt
$$
converges.
\end{lemma}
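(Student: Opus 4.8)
The plan is to run the four-region argument from the proof of Lemma~\ref{salternative}, the only genuinely new feature being the factor $t$, which I will absorb by a first-moment estimate. By replacing $f$ with $f(B\,\cdot)$ we may assume $B=1$. Using $\int_\R|\phi_k(t+\xi)|^2\,d\xi=1$, I would first rewrite the inner quantity in the symmetric form $\int_\R(f(e_k(\xi))|\psi_k(t,\xi)|^2-f(2k-1)|\phi_k(t+\xi)|^2)\,d\xi$, pass the modulus inside this $\xi$-integral, and — justified a posteriori by the absolute convergence being established — apply Fubini, so that it suffices to show
\[
\sum_{k=1}^\infty\int_\R d\xi\int_0^\infty dt\;t\left|f(e_k(\xi))|\psi_k(t,\xi)|^2-f(2k-1)|\phi_k(t+\xi)|^2\right|<\infty.
\]

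The key new ingredient is a moment bound. From $\int_0^\infty(t+\xi)^2|\psi_k(t,\xi)|^2\,dt\le e_k(\xi)$ (integrate by parts in the eigenvalue equation) and $t^2\le2(t+\xi)^2+2\xi^2$ one obtains $\int_0^\infty t^2|\psi_k(t,\xi)|^2\,dt\le2e_k(\xi)+2\xi^2$, hence by the Schwarz inequality and $\|\psi_k(\cdot,\xi)\|_{L^2(\R_+)}=1$,
\[
\int_0^\infty t\,|\psi_k(t,\xi)|^2\,dt\le\sqrt2\left(\sqrt{e_k(\xi)}+|\xi|\right),
\]
and, changing variables $s=t+\xi$ and using $\int_\R s^2|\phi_k(s)|^2\,ds\le2k-1$, likewise $\int_0^\infty t\,|\phi_k(t+\xi)|^2\,dt\le\sqrt2(\sqrt k+|\xi|)$. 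These play the role of the bare $L^2$-normalizations used in the proof of Lemma~\ref{salternative}.

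With these in hand I would split the $\xi$-integral into the regions (I) $\xi\ge C\sqrt k$, (II) $0\le\xi<C\sqrt k$, (III) $-C\sqrt k\le\xi<0$, (IV) $\xi<-C\sqrt k$, with $C$ from Lemma~\ref{1.4bis}. In (I)--(III) the two terms are bounded separately: the first by $|f(e_k(\xi))|(\sqrt{e_k(\xi)}+|\xi|)$, which is $\mathcal{O}_N(k^{-N})$ on the pertinent $\xi$-range by the rapid decay of $f$ and the lower bounds \eqref{azecea43}, \eqref{azecea46} (together with any crude polynomial upper bound on $e_k(\xi)$); the second by $|f(2k-1)|(\sqrt k+|\xi|)=\mathcal{O}_N(k^{-N})$, with \eqref{azecea45} doing even better in (I); integrating over a $\xi$-interval of length $\mathcal{O}(\sqrt k)$ and summing over $k$ gives convergence for $N$ large. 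In (IV) the two terms cannot be separated (the $\xi$-integral of the constant diverges), so I would use
\[
f(e_k(\xi))|\psi_k|^2-f(2k-1)|\phi_k(t+\xi)|^2=\left(f(e_k(\xi))-f(2k-1)\right)|\psi_k|^2+f(2k-1)\left(\psi_k+\phi_k(t+\xi)\right)\left(\psi_k-\phi_k(t+\xi)\right),
\]
bound $|f(e_k(\xi))-f(2k-1)|\le C_N k^{-N}e^{-\alpha\xi^2}$ by the mean value theorem and \eqref{azecea44}, and estimate the first moment of $|\psi_k+\phi_k(t+\xi)|\,|\psi_k-\phi_k(t+\xi)|$ by Schwarz and the Gaussian bound \eqref{azecea42}; the resulting $\xi$-density is a polynomial times $e^{-\alpha\xi^2}$, so its integral over $(-\infty,-C\sqrt k)$ is finite and exponentially small in $k$.

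The main obstacle, and the reason Lemma~\ref{salternative} does not apply directly, is exactly this factor $t$: it destroys the normalizations $\int|\psi_k|^2\,dt=\int|\phi_k|^2\,dt=1$ exploited there, and it is a priori large in region (IV), where $|\psi_k(t,\xi)|^2$ concentrates near $t\approx-\xi\gg1$. The two moment bounds above, together with the fact that in region (IV) both $f(e_k(\xi))-f(2k-1)$ and $\psi_k-\phi_k(\cdot+\xi)$ are exponentially small in $\xi^2$, are precisely what make the $t$-growth harmless.
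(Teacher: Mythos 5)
Your proof is correct and follows essentially the same approach as the paper: the paper's proof of Lemma~\ref{moment} also symmetrizes by inserting $|\phi_k(t+\xi)|^2$, reuses the four-region split from Lemma~\ref{salternative}, and observes that the only new ingredient is a first-moment bound $\int_0^\infty t\,|\psi_k(t,\xi)|^2\,dt \le \sqrt{e_k(\xi)}+\max\{-\xi,0\}$ (obtained by writing $t=(t+\xi)-\xi$ and applying Cauchy--Schwarz), with the analogous bound for $\phi_k$, after which the details are omitted. Your derivation of the moment bound via $t^2\le 2(t+\xi)^2+2\xi^2$ is a trivially different route to the same estimate, and your region-by-region verification fills in precisely what the paper leaves to the reader.
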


\begin{proof}
The proof is similar to that of Lemma \ref{salternative}. Again we can multiply $f(B(2k-1))$ by $|\phi_k(t+\xi)|^2$ without changing the integral. We split the $\xi$-integral into the same four regions. The only new ingredient is the bound
\begin{align*}
\int_0^\infty t |\psi_k(t,\xi)|^2 \,dt & = \int_0^\infty (t+\xi) |\psi_k(t,\xi)|^2 \,dt -\xi \int_0^\infty |\psi_k(t,\xi)|^2 \,dt \\
& \leq \left( \int_0^\infty (t+\xi)^2 |\psi_k(t,\xi)|^2 \,dt \right)^{1/2} + \max\{-\xi,0\} \\
& \leq \sqrt{e_k(\xi)} + \max\{-\xi,0\}
\end{align*}
and the corresponding bound
$$
\int_0^\infty t |\phi_k(t+\xi)|^2 \,dt \leq \sqrt{2k-1} + \max\{-\xi,0\} \,.
$$
We omit the details.
\end{proof}



\section{Uniform exponential estimates}

\subsection{Integral bounds on the Green's function}

In this subsection we assume that $\Lambda\subset\R^2$ is an open set and that ${\bf A} =(A_1,A_2)\in L^{1,{\rm loc}}(\Lambda,\R^2)$ is a magnetic vector potential. We denote by $H_D:=(-i\nabla -{\bf A})^2$ the operator corresponding to the closure of the quadratic form $\int_\Lambda |(-i\nabla -{\bf A})\psi|^2 \,d\x$ initially defined on $C_0^\infty(\Lambda)$. It is remarkable that the results in this subsection do not require any smoothness of $\A$ or $\partial\Lambda$. We emphasize the crucial fact that \emph{all constants below can be chosen to be independent of $\Lambda$ and ${\bf A}$}.

Throughout this section we abbreviate
$$
\mathcal D := \{ z\in\C :\ 0<|\Im z|\leq 1 \}
$$
and
$$
r:= \sqrt{\langle \Re z\rangle}\,,
\qquad \eta:=|\Im z| 
$$
for $z\in\C$. Moreover, if $\x_0$ is a fixed point in
$\Lambda$ and $\alpha\in \R$, we denote by $e^{\alpha \langle \cdot
  -\x_0\rangle }$ the multiplication operator with the function
$\x\mapsto e^{\alpha \sqrt{ |\x -\x_0|^2+1}}$. The following proposition is the key to all the other estimates in the section.

\begin{proposition}\label{prop2}
There is a $\delta_0>0$ such that for all $\x_0\in\Lambda$, all $z\in\mathcal D$ and all $\delta\in(0,\delta_0]$ one has
 \begin{align}\label{expdek}
\left \Vert 
e^{{\frac{\pm\delta \eta}{r }}\langle \cdot -\x_0\rangle }
(H_D-z)^{-1}e^{{\frac{\mp\delta \eta}
{r}}\langle \cdot -\x_0\rangle }\right
\Vert \leq \frac{2}{\eta}
\end{align}
and
\begin{align}\label{expdek2}
&
  \left \Vert (H_D+1)
e^{\frac{\pm\delta \eta}{r}{\langle\cdot -\x_0\rangle}}(H_D-z)^{-1}
e^{\frac{\mp\delta \eta}{r }\langle \cdot -\x_0\rangle}\right
\Vert \leq 6 \frac{r^2}{\eta} \,.
\end{align}
\end{proposition}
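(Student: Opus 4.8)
The plan is to prove \eqref{expdek} by a variant of the Combes--Thomas argument, working directly at the level of the quadratic form, and then to bootstrap from \eqref{expdek} to \eqref{expdek2} using the identity $(H_D+1)(H_D-z)^{-1} = 1 + (z+1)(H_D-z)^{-1}$ together with a commutator computation. Throughout, fix $\x_0\in\Lambda$ and abbreviate $\rho(\x):=\langle\x-\x_0\rangle = \sqrt{|\x-\x_0|^2+1}$ and $\lambda := \frac{\delta\eta}{r}$, so that the conjugating weight is $e^{\pm\lambda\rho}$. The key elementary fact is that $\rho$ is $1$-Lipschitz, so $|\nabla\rho|\le 1$ pointwise.

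\textbf{Step 1: the conjugated operator.} First I would observe that conjugating $H_D$ by the (bounded, boundedly invertible) multiplication operator $e^{\lambda\rho}$ produces, at least formally,
\[
H_D(\lambda) := e^{\lambda\rho} H_D e^{-\lambda\rho} = \bigl(-i\nabla-\A - i\lambda\nabla\rho\bigr)^2 = H_D - 2i\lambda\, (\nabla\rho)\cdot(-i\nabla-\A) - \lambda (-i\nabla -\A)\cdot(\nabla\rho)\big|_{\text{sym}} + \lambda^2|\nabla\rho|^2 \,.
\]
The rigorous way to handle the unboundedness is to work with the sesquilinear form: for $\psi$ in the form domain of $H_D$ one has
\[
\Re\bigl\langle \psi, (H_D(\lambda)-z)\psi\bigr\rangle = \|(-i\nabla-\A)\psi\|^2 + \lambda^2\|(\nabla\rho)\psi\|^2 - \Re z\,\|\psi\|^2 \,,
\]
since the cross terms are purely imaginary (they are $\lambda$ times $\langle\psi, \{(\nabla\rho)\cdot(-i\nabla-\A) + (-i\nabla-\A)\cdot(\nabla\rho)\}\psi\rangle/i$, which is real, hence contributes only to the imaginary part), and
\[
\Im\bigl\langle \psi, (H_D(\lambda)-z)\psi\bigr\rangle = -2\lambda\,\Re\langle (\nabla\rho)\psi, (-i\nabla-\A)\psi\rangle - \Im z\,\|\psi\|^2 \,.
\]
Using $|\nabla\rho|\le 1$ and Cauchy--Schwarz, $2\lambda|\Re\langle(\nabla\rho)\psi,(-i\nabla-\A)\psi\rangle| \le 2\lambda\|(-i\nabla-\A)\psi\|\,\|\psi\| \le \tfrac12\|(-i\nabla-\A)\psi\|^2 + 2\lambda^2\|\psi\|^2$.

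\textbf{Step 2: the numerical-range estimate and the resolvent bound.} Combining a suitable linear combination $\Re(\cdot) \pm c\,\Im(\cdot)$ of the two identities above and choosing $\delta_0$ small enough that $\lambda = \delta\eta/r \le \delta_0 \cdot 1/1 = \delta_0$ is so small that the error terms are absorbed, one shows that $|\langle\psi,(H_D(\lambda)-z)\psi\rangle| \ge \tfrac{\eta}{2}\|\psi\|^2$ for all $z\in\mathcal D$ (the point being that $\lambda \le \delta_0$ makes the quadratic-in-$\lambda$ perturbation of the imaginary part controllable relative to $\eta = |\Im z|$; note $\lambda^2 = \delta^2\eta^2/r^2 \le \delta_0\,\eta\cdot(\delta_0\eta/r^2) \le \delta_0^2\eta$ since $\eta\le 1\le r^2$). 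The same bound holds with the adjoint, so $H_D(\lambda)-z$ is invertible with $\|(H_D(\lambda)-z)^{-1}\|\le 2/\eta$; since $e^{\lambda\rho}(H_D-z)^{-1}e^{-\lambda\rho} = (H_D(\lambda)-z)^{-1}$ (which one justifies by a standard analytic-continuation / density argument in the parameter $\lambda$, valid because both sides are analytic in $\lambda$ in a strip and agree at $\lambda=0$), this is exactly \eqref{expdek}. The case of the opposite sign is identical after $\lambda\mapsto-\lambda$.

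\textbf{Step 3: from \eqref{expdek} to \eqref{expdek2}.} Write $(H_D+1)e^{\lambda\rho}(H_D-z)^{-1}e^{-\lambda\rho} = e^{\lambda\rho}(H_D(\lambda)+1)(H_D(\lambda)-z)^{-1}e^{-\lambda\rho}$—wait, more simply: $e^{\lambda\rho}(H_D+1) = (H_D(\lambda) + 1 + [\,e^{\lambda\rho},H_D\,]e^{-\lambda\rho})e^{\lambda\rho}$; instead I would use the cleaner route
\[
(H_D+1)e^{\lambda\rho}(H_D-z)^{-1}e^{-\lambda\rho} = (H_D+1)\,e^{\lambda\rho}(H_D-z)^{-1}e^{-\lambda\rho}
\]
and commute $e^{\lambda\rho}$ to the left of $H_D+1$: one gets $e^{\lambda\rho}(H_D(-\lambda)\text{-type correction}+1)(H_D-z)^{-1}e^{-\lambda\rho}$, and then splits $H_D(\cdot)+1 = (H_D(\cdot)-z) + (z+1)$. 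The term from $(H_D(\cdot)-z)\cdot(H_D(\cdot)-z)^{-1} = 1$ is bounded by $1$; the term $(z+1)(H_D(\cdot)-z)^{-1}$ is bounded by $|z+1|\cdot 2/\eta \le C r^2/\eta$; and the commutator correction terms, which involve one factor of $(-i\nabla-\A)(H_D(\lambda)-z)^{-1}$, are controlled by noting that $\|(-i\nabla-\A)(H_D(\lambda)-z)^{-1}\|^2 \le \|H_D(\lambda)^{1/2}\ldots\|$—estimated from the real-part identity of Step 1, giving a bound $O(r/\eta)$—times the weight, and $\lambda\le\delta_0$ makes the prefactor harmless. Collecting the numerical constants with a little care yields the bound $6\,r^2/\eta$. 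The main obstacle I anticipate is not the analysis but the bookkeeping in Step 2–3: one must choose $\delta_0$ and track the absolute constants so that they come out as the clean $2/\eta$ and $6r^2/\eta$ claimed, and one must be careful that the Combes--Thomas conjugation identity is justified as an operator identity (not merely formally) before reading off norm bounds; the standard fix is the analyticity argument in $\lambda$ sketched above, using that $H_D$ is defined via a closed sectorial form so that $H_D(\lambda)$ is a holomorphic family of type (A) or (B) in Kato's sense.
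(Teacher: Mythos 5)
The approach you take --- a direct numerical-range/quadratic-form bound on the conjugated operator $H_D(\lambda)$ --- is genuinely different from the paper's. The paper writes $e^{s h_\epsilon}(H_D-z)e^{-s h_\epsilon} = H_D - z + W_s$ with $W_s = 2is\nabla h_\epsilon\cdot(-i\nabla-\A) + s\Delta h_\epsilon - s^2|\nabla h_\epsilon|^2$, uses the single key estimate of Lemma \ref{expdecay1} (i.e.\ $\|(-i\nabla_j-A_j)(H_D-z)^{-1}\| \leq C r/\eta$) to get $\|W_s(H_D-z)^{-1}\| \leq C\delta \leq 1/2$, and inverts by a Neumann series; \eqref{expdek2} then drops out by applying $(H_D+1)$ to the identity \eqref{expdek4} and using $(H_D+1)(H_D-z)^{-1} = 1 + (z+1)(H_D-z)^{-1}$. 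The Neumann-series route is tighter: both estimates reduce to Lemma \ref{expdecay1}, whereas your form-based route re-derives an equivalent fact through the real-part identity and needs more bookkeeping.

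There are concrete gaps. In Step 2 you assert $|\langle\psi,(H_D(\lambda)-z)\psi\rangle| \geq (\eta/2)\|\psi\|^2$ by ``a suitable linear combination $\Re \pm c\,\Im$,'' but no fixed linear combination works uniformly in $z\in\mathcal D$: the cross term $-2\lambda\Re\langle(-i\nabla-\A)\psi,(\nabla\rho)\psi\rangle$ has uncontrolled sign and, without further input, can be comparable to $\eta\|\psi\|^2$. What does work is a two-case argument: if $\Re\langle\psi,(H_D(\lambda)-z)\psi\rangle \geq (\eta/2)\|\psi\|^2$, done; otherwise the real-part identity forces $\|(-i\nabla-\A)\psi\|^2 \leq (\lambda^2 + \max\{\Re z,0\} + \eta/2)\|\psi\|^2 \leq 2r^2\|\psi\|^2$, and since $\lambda = \delta\eta/r$ this yields $2\lambda\|(-i\nabla-\A)\psi\|\,\|\psi\| \leq 2\sqrt 2\,\delta\,\eta\|\psi\|^2 \leq (\eta/2)\|\psi\|^2$ for $\delta$ small, whence $|\Im\langle\psi,(H_D(\lambda)-z)\psi\rangle| \geq (\eta/2)\|\psi\|^2$. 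This split is the real content of Step~2 and is missing. (There is also a sign slip: expanding $(-i\nabla-\A-i\lambda\nabla\rho)^2$ gives $-\lambda^2|\nabla\rho|^2$, not $+\lambda^2|\nabla\rho|^2$; harmless here, but worth fixing.)

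Step 3 has a more serious defect. Commuting $H_D+1$ through $e^{\lambda\rho}$ leaves a correction of the form $e^{\lambda\rho}W'(H_D-z)^{-1}e^{-\lambda\rho}$, which is the weighted conjugate of a bounded operator and is \emph{not} automatically bounded; controlling it requires a bound on $(-i\nabla-\A)(H_D(\lambda)-z)^{-1}$, which is essentially the content of Proposition~\ref{comm} --- itself proved \emph{after} Proposition~\ref{prop2}. You gesture at deriving this from the real-part identity, and that can indeed be made to work, but it is extra machinery the paper avoids: applying $(H_D+1)$ on the left of \eqref{expdek4} keeps $(H_D-z)^{-1}$ directly adjacent to $(H_D+1)$ and bypasses any commutator. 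Finally, ``analyticity in $\lambda$'' is not by itself a valid justification of the conjugation identity when the weight $e^{\lambda\rho}$ is unbounded; the paper's regularized weight $h_\epsilon$ of \eqref{hcprima3} (bounded, with derivatives uniformly bounded in $\epsilon$ and $\x_0$) followed by a density limit is the standard and necessary fix, and you should not skip it.
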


Its proof relies on the following elementary

\begin{lemma}\label{expdecay1}
For any $z\in\C$ and any $j\in\{1,2\}$ we have 
\begin{align}\label{keyest}
\left\| (-i\nabla_j -A_j) (H_D-z)^{-1}\right\| \leq \sqrt{1/|\Im z|+\max\{\Re z,0\}/|\Im z|^2} \,.
\end{align}
\end{lemma}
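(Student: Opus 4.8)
The plan is to read off the bound from the elementary quadratic-form identity satisfied by the resolvent. We may assume $\Im z\neq 0$, since otherwise the right-hand side of \eqref{keyest} is infinite and there is nothing to prove; and then $(H_D-z)^{-1}$ is a bounded operator because $H_D\geq 0$. Fix $\psi\in L^2(\Lambda)$ and set $u:=(H_D-z)^{-1}\psi$. By construction $u$ lies in the form domain $Q(H_D)$, so that $(-i\nabla_j-A_j)u\in L^2(\Lambda)$ for $j=1,2$, and $\langle u,H_D u\rangle = \|(-i\nabla_1-A_1)u\|^2+\|(-i\nabla_2-A_2)u\|^2=:\|(-i\nabla-\A)u\|^2$ is the (real, nonnegative) value of the quadratic form. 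Pairing the identity $(H_D-z)u=\psi$ with $u$ gives
\[
\|(-i\nabla-\A)u\|^2 - z\,\|u\|^2 = \langle u,\psi\rangle \,.
\]

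First I would take the imaginary part of this identity. Since $\|(-i\nabla-\A)u\|^2$ and $\|u\|^2$ are real, this yields $|\Im z|\,\|u\|^2 = |\Im\langle u,\psi\rangle| \leq \|u\|\,\|\psi\|$, i.e. the standard resolvent estimate $\|u\|\leq \|\psi\|/|\Im z|$. Next I would take the real part, which gives $\|(-i\nabla-\A)u\|^2 = \Re z\,\|u\|^2 + \Re\langle u,\psi\rangle \leq \max\{\Re z,0\}\,\|u\|^2 + \|u\|\,\|\psi\|$. Substituting the resolvent bound $\|u\|\leq\|\psi\|/|\Im z|$ into both terms on the right gives
\[
\|(-i\nabla-\A)u\|^2 \leq \left( \frac{\max\{\Re z,0\}}{|\Im z|^2} + \frac{1}{|\Im z|}\right)\|\psi\|^2 \,.
\]
Finally, since $\|(-i\nabla_j-A_j)u\|^2\leq\|(-i\nabla-\A)u\|^2$ for each $j\in\{1,2\}$, taking square roots and using that $\psi\in L^2(\Lambda)$ was arbitrary gives exactly \eqref{keyest}.

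There is essentially no obstacle here: the only point that deserves a word is the legitimacy of the pairing identity, i.e. that $u=(H_D-z)^{-1}\psi$ belongs to $Q(H_D)$ and that $\langle u,H_D u\rangle$ equals the quadratic form value with each component $(-i\nabla_j-A_j)u$ in $L^2$. This is immediate from the definition of $H_D$ as the self-adjoint operator associated with the closure of the form $\psi\mapsto\int_\Lambda|(-i\nabla-\A)\psi|^2\,d\x$ on $C_0^\infty(\Lambda)$, together with the fact that $\mathcal D(H_D)\subset Q(H_D)$; and, crucially, none of these arguments uses any regularity of $\A$ or of $\partial\Lambda$, nor any property of $\Lambda$, so all constants are indeed independent of $\Lambda$ and $\A$.
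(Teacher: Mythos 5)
Your proof is correct and is essentially the same argument as the paper's: the paper states exactly the real-part identity $\sum_{j}\|(-i\nabla_j-A_j)(H_D-z)^{-1}\psi\|^2=\Re\langle(H_D-z)^{-1}\psi,\psi\rangle+\Re z\,\|(H_D-z)^{-1}\psi\|^2$ and declares the bound an immediate consequence. You simply spell out the (correct and standard) remaining steps: obtaining that identity from the form pairing, deriving $\|(H_D-z)^{-1}\|\le 1/|\Im z|$ from the imaginary part, and substituting.
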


Indeed, this is an immediate consequence of the following identity, valid for every $\psi\in L^2(\Lambda)$,
\begin{align*}
\sum_{j=1}^2\left\|(-i\nabla_j -A_j) (H_D-z)^{-1}\psi\right\|^2 
=\Re \langle  (H_D-z)^{-1}\psi,\psi \rangle   +\Re z\,\left\|(H_D-z)^{-1}\psi\right\|^2 \,.
\end{align*}

\begin{proof}[Proof of Proposition \ref{prop2}]
Since multiplication with the exponential weight is an unbounded operator and does not leave
invariant the domain of $H_D$, we must work with regularized weights 
of the type $e^{\alpha h_\epsilon(\x)}$ where:
\begin{equation}\label{hcprima3}
h_\epsilon(\x):=\frac{\sqrt{ |\x -\x_0|^2+1}}{\sqrt{ \epsilon |\x
    -\x_0|^2+1}},\quad \epsilon >0.
\end{equation} 
One should note that $h_\epsilon$ is bounded (though not uniformly in
$\epsilon$), while derivatives of any order of $h_\epsilon$ are not
only bounded, but uniformly bounded 
in $\epsilon$ and $\x_0$. In particular, $e^{\alpha
  h_\epsilon(\x)}$ leaves invariant 
the domain of $H_D$.

For any $s >0$ the Combes-Thomas rotation \cite{CT} reads 
$$
e^{s h_\epsilon }(H_D-z)e^{-s h_\epsilon}
=H_D-z+ 2is \nabla h_\epsilon\cdot [-i\nabla-A]+s\Delta h_\epsilon-s^2 |\nabla h_\epsilon|^2 \,.
$$
Since first and second derivatives of $h_\epsilon$ are uniformly bounded in $\epsilon$ and $\x_0$, the bound \eqref{keyest} yields for $s=\delta \eta /r $ with $0<\delta\leq 1$ the estimate
$$
\left\|\left\{ 2is \nabla h_\epsilon\cdot [-i\nabla-A]+s\Delta h_\epsilon-s^2 |\nabla h_\epsilon|^2\right\}(H_D-z)^{-1}\right\|
\leq C \delta \,.
$$
Here $C$ is a constant independent of $z$, $\delta$, $\epsilon$ and $\x_0$. 
Thus if we choose $\delta \leq 1/(2C)$ we may write
\begin{align}\label{expdek4}
&e^{s h_\epsilon }(H_D-z)^{-1}e^{-s h_\epsilon }=(H_D-z)^{-1}
\left \{1+ \left[ 2is \nabla h_\epsilon\cdot [-i\nabla-A]+s\Delta h_\epsilon-s^2 |\nabla h_\epsilon|^2  \right] (H_D-z)^{-1}\right \}^{-1}
\end{align}
and deduce $\left\| e^{s h_\epsilon }(H_D-z)^{-1}e^{-s h_\epsilon } \right\| \leq 2/\eta$.
For $\psi_1\in L_{\rm comp}^2(\Lambda)$ and $\psi_2\in L^2(\Lambda)$ with 
$\|\psi_1\|=\|\psi_2\|=1$ this implies
\begin{align*}
|\langle e^{{\frac{\delta \eta}{r }\langle \cdot -\x_0\rangle }}\psi_1, 
(H_D-z)^{-1} e^{-{\frac{\delta \eta}{r }\langle \cdot -\x_0\rangle
  }}\psi_2\rangle |=
\lim_{\epsilon\searrow 0}
|\langle \psi_1,e^{s h_\epsilon }(H_D-z)^{-1}e^{-s h_\epsilon }
\psi_2\rangle |\leq \frac{2}{\eta}.
\end{align*}
By a density argument we see that $(H_D-z)^{-1}
e^{-{\frac{\delta \eta}{r }}\langle \cdot -\x_0\rangle }\psi_2$ is in
the domain of $e^{{\frac{\delta \eta}{r }}\langle \cdot -\x_0\rangle
}$, and then \eqref{expdek} follows.

Regarding \eqref{expdek2}, one uses
$(H_D+1)(H_D-z)^{-1}=1+(z+1)(H_D-z)^{-1}$ on the right hand side of
\eqref{expdek4}, then after a short density 
argument one employs \eqref{expdek} and the proof is over.
\end{proof}

We also need the following small variation of Proposition \ref{prop2}.

\begin{proposition}\label{comm}
There are constants $\delta_0>0$ and $C>0$ such that for all $\x_0\in\Lambda$, all $j,k\in\{1,2\}$, all $\alpha,\beta\in\{0,1\}$ and all $\delta\in(0,\delta_0]$ one has
 \begin{align}\label{eq:comm}
\left \Vert e^{\pm\delta\langle \cdot -\x_0\rangle }
(-i\nabla_j-A_j)^\alpha (H_D+1)^{-1} (-i\nabla_k-A_k)^\beta 
e^{\mp\delta \langle \cdot -\x_0\rangle }\right
\Vert \leq C \,.
\end{align}
\end{proposition}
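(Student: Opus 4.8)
The plan is to imitate the Combes--Thomas argument from the proof of Proposition~\ref{prop2}, the only new feature being that one now has to carry a factor $-i\nabla_l-A_l$ through the conjugation on each side. As there, I would work with the regularized weights $e^{\pm\delta h_\epsilon}$ from \eqref{hcprima3}, prove bounds uniform in $\epsilon>0$ (and in $\Lambda$, $\A$, $\x_0$, $j$, $k$, $\alpha$, $\beta$), and then pass to the limit $\epsilon\searrow 0$ against compactly supported functions exactly as at the end of the proof of Proposition~\ref{prop2}. Fixing a sign and writing $w:=\pm\delta h_\epsilon$, $\Pi_l:=-i\nabla_l-A_l$, $\Pi:=(-i\nabla-\A)$ and $v_l:=\partial_l w$, one has $\|v_l\|_\infty\le\delta\kappa$ with $\kappa:=\sup_{\epsilon,\x_0}\|\nabla h_\epsilon\|_\infty<\infty$, since all first derivatives of $h_\epsilon$ are bounded uniformly in $\epsilon$ and $\x_0$. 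For fixed $\epsilon$, $e^{w}$ is a bounded, boundedly invertible multiplication operator preserving $\mathrm{dom}(H_D)$, so $e^{w}\Pi_le^{-w}=\Pi_l+iv_l=:\Pi_l^{(w)}$, and with $\tilde H:=e^{w}H_De^{-w}=\sum_{l}(\Pi_l^{(w)})^{2}$ one gets the operator identity
\begin{equation*}
e^{w}\,\Pi_j^\alpha(H_D+1)^{-1}\Pi_k^\beta\,e^{-w}
=\bigl(\Pi_j^{(w)}\bigr)^{\alpha}\,(\tilde H+1)^{-1}\,\bigl(\Pi_k^{(w)}\bigr)^{\beta}.
\end{equation*}
Expanding $\Pi_l^{(w)}=\Pi_l+iv_l$ and using that the $v_l$ are bounded, it then suffices to bound, uniformly in $\epsilon$, the quantity $\|\Pi_j^\alpha(\tilde H+1)^{-1}\Pi_k^\beta\|$ together with the analogous expressions in which one or both $\Pi$'s are replaced by multiplication by some $v_l$.

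The key step --- and the one I expect to be the main obstacle, since conjugating the a priori bounded operator $\Pi_j^\alpha(H_D+1)^{-1}\Pi_k^\beta$ by an unbounded weight is not automatically uniformly bounded --- is a coercivity estimate for $\tilde H$. For $\psi$ in the form domain of $H_D$ (which for fixed $\epsilon$ is also that of $\tilde H$), the self-adjointness of $e^{w}$ gives
\begin{align*}
\Re\langle\psi,(\tilde H+1)\psi\rangle
&=\sum_{l}\Re\langle(\Pi_l-iv_l)\psi,(\Pi_l+iv_l)\psi\rangle+\|\psi\|^2\\
&=\|\Pi\psi\|^2-\sum_{l}\|v_l\psi\|^2+\|\psi\|^2 ,
\end{align*}
so that, choosing $\delta_0$ with $\delta_0^2\kappa^2\le\tfrac12$, for all $\delta\in(0,\delta_0]$
\begin{equation*}
\Re\langle\psi,(\tilde H+1)\psi\rangle\ \ge\ \|\Pi\psi\|^2+\tfrac12\|\psi\|^2 .
\end{equation*}
(Invertibility of $\tilde H+1$ for fixed $\epsilon$ is in any case clear from $\tilde H+1=e^{w}(H_D+1)e^{-w}$.) Granting this, the momentum factors come for free: for $\phi$ in a dense set put $g:=(\tilde H+1)^{-1}\Pi_k^\beta\phi\in\mathrm{dom}(\tilde H)$; pairing $(\tilde H+1)g=\Pi_k^\beta\phi$ with $g$, using the coercivity estimate and the self-adjointness of $\Pi_k$, and noting $\|\Pi_k^\beta g\|\le\max(\|\Pi g\|,\|g\|)$, one gets
\begin{equation*}
\|\Pi g\|^2+\tfrac12\|g\|^2\ \le\ \Re\langle g,(\tilde H+1)g\rangle=\Re\langle\Pi_k^\beta g,\phi\rangle\ \le\ \max\bigl(\|\Pi g\|,\|g\|\bigr)\,\|\phi\| .
\end{equation*}
An elementary manipulation then yields $\|\Pi g\|\le 2\|\phi\|$ and $\|g\|\le 2\|\phi\|$, hence $\|\Pi_j^\alpha g\|\le 2\|\phi\|$, i.e. $\|\Pi_j^\alpha(\tilde H+1)^{-1}\Pi_k^\beta\|\le 2$. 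Running the same computation with $\Pi_k^\beta\phi$ replaced by $v_l\phi$, or with the final factor $\Pi_j^\alpha$ omitted, controls the remaining terms, with constants depending only on $\delta_0\kappa$. Combined with the reduction of the first paragraph this gives $\|(\Pi_j^{(w)})^{\alpha}(\tilde H+1)^{-1}(\Pi_k^{(w)})^{\beta}\|\le C$ with $C$ universal, and letting $\epsilon\searrow 0$ produces \eqref{eq:comm}.

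In short: the Combes--Thomas conjugation turns the problem into bounding $(\Pi_j^{(w)})^\alpha(\tilde H+1)^{-1}(\Pi_k^{(w)})^\beta$ with $\tilde H=e^{w}H_De^{-w}$; the coercivity inequality $\Re\langle\psi,(\tilde H+1)\psi\rangle\ge\|\Pi\psi\|^2+\tfrac12\|\psi\|^2$ --- the genuinely new point --- lets both momentum factors be absorbed against the resolvent; and everything else (invertibility of $\tilde H+1$ for fixed $\epsilon$, the harmless lower-order terms $v_l$, the limit $\epsilon\searrow 0$) proceeds exactly as in the proof of Proposition~\ref{prop2}.
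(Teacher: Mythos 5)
Your proof is correct and follows the same overall Combes--Thomas strategy as the paper: regularized weights $e^{\pm\delta h_\epsilon}$ from \eqref{hcprima3}, the conjugation $e^{w}\Pi_l e^{-w}=\Pi_l+i\partial_l w$, a uniform-in-$\epsilon$ bound, and a limit $\epsilon\searrow 0$. Where you depart from the paper is in the mechanism used to bound the conjugated resolvent with momentum factors attached. The paper reduces the problem to showing that $(H_D+1)^{1/2}e^{sh_\epsilon}(H_D+1)^{-1}e^{-sh_\epsilon}(H_D+1)^{1/2}$ is bounded, writes $e^{-sh_\epsilon}(H_D+1)e^{sh_\epsilon}=H_D+1+W$ with $\|W(H_D+1)^{-1/2}\|\leq C(|s|+s^2)$ (which requires control of both $\nabla h_\epsilon$ and $\Delta h_\epsilon$), and then inverts $1+(H_D+1)^{-1/2}W(H_D+1)^{-1/2}$ by a Neumann series. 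You instead prove the quadratic-form coercivity estimate $\Re\langle\psi,(\tilde H+1)\psi\rangle\geq\|\Pi\psi\|^2+\tfrac12\|\psi\|^2$ --- exact because the imaginary cross terms cancel in the real part, so only $\|\nabla h_\epsilon\|_\infty$ enters --- and then absorb $\Pi_j^\alpha$ and $\Pi_k^\beta$ by pairing with $g=(\tilde H+1)^{-1}\Pi_k^\beta\phi$. The two mechanisms are essentially equivalent (both express the relative form-boundedness of the Combes--Thomas perturbation by $H_D+1$ when $\delta$ is small), but yours is marginally more economical in that it dispenses with the second derivative of $h_\epsilon$, whereas the paper's version is closer in spirit to the $W$-manipulation already set up in Proposition~\ref{prop2}.
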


\begin{proof}
We use the same notation as in the previous proof. Since
\begin{align*}
e^{sh_\epsilon}(-i\nabla_j-A_j)^\alpha &= (-i\nabla_j-A_j+is\nabla_j h_\epsilon)^\alpha e^{sh_\epsilon} \\
\quad\text{and}\quad 
(-i\nabla_k-A_k)^\beta e^{-sh_\epsilon}&=e^{-sh_\epsilon}(-i\nabla_k-A_k +is\nabla_k h_\epsilon)^\beta \,,
\end{align*}
where $\nabla h_\epsilon$ is uniformly bounded, we only need to show that for all sufficiently small $s$, the operator $(H_D+1)^{1/2} e^{sh_\epsilon} (H_D+1)^{-1} e^{-sh_\epsilon}(H_D+1)^{1/2}$ is bounded uniformly in $\epsilon\in(0,1]$. As in the previous proof we can shown that $e^{-sh_\epsilon} (H_D+1) e^{sh_\epsilon} = H_D+1 + W$ with $\| W (H_D+1)^{-1/2} \| \leq C(|s|+s^2)$ uniformly in $\epsilon\in(0,1]$. Thus for $|s|\leq\min\{(4C)^{-1}, (4C)^{-1/2}\}$ we have
$$
e^{sh_\epsilon} (H_D+1)^{-1} e^{-sh_\epsilon} = (H_D+1)^{-1/2} (1 + (H_D+1)^{-1/2} W (H_D+1)^{-1/2} )^{-1} (H_D+1)^{-1/2}
$$
and $\| (1 + (H_D+1)^{-1/2} W (H_D+1)^{-1/2} )^{-1} \| \leq 2$.
\end{proof}

\vspace{0.5cm}

In the following we denote by $\Green(\cdot,\cdot,z)$ the integral kernel 
of the operator $(H_D-z)^{-1}$ for $z\in\mathbb{C}\setminus[0,\infty)$.

\begin{corollary}\label{localizHS}
There are constants $\delta>0$ and $C>0$ such that for all $z\in\mathcal D$ and for all $\x\in\Lambda$
\begin{equation}\label{hcadoua4}
\int_\Lambda d\x'\, e^{{\frac{2\delta \eta}{r }\langle \x-\x'\rangle}} |\Green(\x,\x',z)|^2
\leq C\frac{r^4}{\eta^2} \,.
\end{equation}
\end{corollary}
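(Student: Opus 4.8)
The plan is to establish, for every fixed $\x\in\Lambda$, the pointwise-in-$\x$ bound
\[
\| e^{\frac{\delta\eta}{r}\langle\x-\cdot\rangle}\,\Green(\x,\cdot,z)\|_{L^2(\Lambda)}\le C\,\frac{r^2}{\eta},
\]
with $C$ and $\delta_0$ independent of $\x$, $z$, $\Lambda$ and $\A$; squaring this is precisely \eqref{hcadoua4}. Since $z\in\mathcal D$ forces $\eta\le 1\le r$, one has $0<\delta\eta/r\le\delta$, so the weight $e^{\frac{\delta\eta}{r}\langle\x-\cdot\rangle}$ is dominated pointwise by the fixed weight $e^{\delta\langle\x-\cdot\rangle}$ whenever that is convenient.

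First I would expand, via the resolvent identity,
\[
(H_D-z)^{-1}=(H_D+1)^{-1}+(z+1)(H_D+1)^{-1}(H_D-z)^{-1},
\]
and abbreviate $g:=(H_D+1)^{-1}(\x,\cdot)$, the kernel of the first resolvent with its first slot frozen at $\x$. Using that $H_D$ is self-adjoint, so that $((H_D-z)^{-1})^{*}=(H_D-\bar z)^{-1}$, a short computation identifies the kernel of $(H_D+1)^{-1}(H_D-z)^{-1}$, with first slot frozen at $\x$, with the function $\overline{(H_D-\bar z)^{-1}\bar g}$. Therefore, as elements of $L^2(\Lambda)$,
\[
\Green(\x,\cdot,z)=g+(z+1)\,\overline{(H_D-\bar z)^{-1}\bar g};
\]
in particular $\Green(\x,\cdot,z)\in L^2(\Lambda)$, which also legitimizes the manipulation just carried out.

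Two estimates then finish the argument. By the diamagnetic inequality, $|g|$ is dominated, pointwise and uniformly in $\x$ and $\Lambda$, by the radial kernel of $(-\Delta+1)^{-1}$ on $\R^2$, which is locally integrable (it has only a logarithmic singularity on the diagonal) and decays exponentially at infinity; hence, for $\delta_0<1$ small enough, $\|e^{\delta\langle\x-\cdot\rangle}g\|_{L^2(\Lambda)}\le C_0$ uniformly, and the same bound holds for $\bar g$ since $|\bar g|=|g|$. For the second term I would invoke the Combes--Thomas bound \eqref{expdek} of Proposition \ref{prop2}, applied with reference point $\x\in\Lambda$ and with $\bar z\in\mathcal D$ in place of $z$ (note that $\bar z$ has the same $r$ and $\eta$), by inserting and removing a weight:
\[
\| e^{\frac{\delta\eta}{r}\langle\x-\cdot\rangle}(H_D-\bar z)^{-1}\bar g\|_{L^2(\Lambda)}
\le\| e^{\frac{\delta\eta}{r}\langle\x-\cdot\rangle}(H_D-\bar z)^{-1}e^{-\frac{\delta\eta}{r}\langle\x-\cdot\rangle}\|\;
\| e^{\delta\langle\x-\cdot\rangle}\bar g\|_{L^2(\Lambda)}\le\frac{2}{\eta}\,C_0,
\]
valid for $\delta\le\delta_0$. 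Combining the last two displays with $|z+1|\le 1+|\Re z|+|\Im z|\le 3r^2$ and with $\eta\le 1\le r^2$ gives $\|e^{\frac{\delta\eta}{r}\langle\x-\cdot\rangle}\Green(\x,\cdot,z)\|_{L^2(\Lambda)}\le C_0+6C_0\,r^2/\eta\le 7C_0\,r^2/\eta$, and squaring yields \eqref{hcadoua4} with $C=49C_0^2$.

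I do not expect a real obstacle here; the substance lies in the two bookkeeping points flagged above --- (a) identifying the second-slot kernel of $(H_D+1)^{-1}(H_D-z)^{-1}$ as $\overline{(H_D-\bar z)^{-1}\bar g}$, where self-adjointness enters and which simultaneously supplies the a priori $L^2$-membership needed to make sense of all the weighted norms, and (b) verifying that the diamagnetic domination of $g$ by the free resolvent kernel is uniform in $\x$ and $\Lambda$, as it must be in order to be consistent with the uniformity already built into Proposition \ref{prop2}.
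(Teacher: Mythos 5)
Your proof is correct and follows essentially the same route as the paper's: resolvent identity against $(H_D+1)^{-1}$, diamagnetic inequality to control the $z=-1$ kernel (the paper's \eqref{azecea14}), and the Combes--Thomas bound \eqref{expdek}. The only organizational difference is that the paper invokes the Hermitian symmetry $\Green(\x,\x',z)=\overline{\Green(\x',\x,\overline z)}$ at the outset so as to freeze the second slot and apply the resolvent identity in the order $(H_D-z)^{-1}(H_D+1)^{-1}$, whereas you freeze the first slot and absorb the same symmetry into the identification of the kernel of $(H_D+1)^{-1}(H_D-z)^{-1}$ as $\overline{(H_D-\bar z)^{-1}\bar g}$ --- these are the same argument reflected through the conjugation.
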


\begin{proof}
Since $\Green(\x,\x',z)=\overline{\Green(\x',\x,\overline z)}$ it is equivalent to prove that there are constants $\delta>0$ and $C>0$ such that for all $z\in\mathcal D$ and for all $\x'\in\Lambda$
\begin{equation}\label{hcadoua4alt}
\int_\Lambda d\x \,e^{{\frac{2\delta \eta}{r }\langle \x-\x'\rangle}} |\Green(\x,\x',z)|^2
\leq C\frac{r^4}{\eta^2} \,.
\end{equation}
Due to the diamagnetic inequality, $|\Green(\x,\x',-1)|$ is pointwise bounded by the kernel of $(-\Delta_D+1)^{-1}$, which in its turn is bounded by the kernel of the resolvent of the Laplace operator defined in the whole of
$\R^2$. From the properties of the latter kernel we deduce that
\begin{equation}\label{azecea14}
 \sup_{\x\in \Lambda}\int_{\Lambda}d\x \left \vert \Green(\x,\x',-1) \right \vert^2 e^{\langle \x-\x'\rangle} <\infty \,.
\end{equation}
Moreover, by the resolvent identity
\begin{equation}\label{eq:resid}
\Green(\x,\x',z) = \Green(\x,\x',-1) + (z+1) \int_{\R^2} d\y \, \Green(\x,\y,z) \Green(\y,\x',-1) \,.
\end{equation}
Hence
\begin{align*}
&\int_\Lambda d\x \, e^{{\frac{2\delta \eta}{r }\langle \x-\x'\rangle}} |\Green(\x,\x',z)|^2
\leq 2 \int_\Lambda d\x \, e^{{\frac{2\delta \eta}{r }\langle \x-\x'\rangle}} |\Green(\x,\x',-1)|^2 \\
& \qquad \qquad\qquad + 2 |z+1|^2 \int_\Lambda d\x \, e^{{\frac{2\delta \eta}{r }\langle \x-\x'\rangle}} 
\left| \int_{\Lambda} d\y \, \Green(\x,\y,z) \Green(\y,\x',-1) \right|^2 \,.
\end{align*}
By \eqref{azecea14} the first integral on the right side is finite if $\delta\leq 1/2$ (because then also $2\delta\eta/r\leq 1$). If, in addition, we choose $\delta$ not larger than the $\delta_0$ in Proposition \ref{prop2}, then by \eqref{expdek} the second integral on the right side is bounded by
$$
\left(\frac 2\eta\right)^2 \int_\Lambda d\x\, e^{{\frac{2\delta \eta}{r }\langle \x-\x'\rangle}} 
\left| \Green(\x,\x',-1) \right|^2 \,,
$$
which again by \eqref{azecea14} is finite. This proves \eqref{hcadoua4alt}.
\end{proof}

As an application of Corollary \ref{localizHS} we derive a useful trace class criterion.

\begin{proposition}\label{prop15}
There is a constant $C>0$ such that for any set $\Lambda_1 \subset \Lambda$ of diameter $\leq 1$ the operator
$\chi_1(H_D-i)^{-2}$, $\chi_1$ being the characteristic function of $\Lambda_1$, is trace class and satisfies
$$
\left\|  \chi_1 (H_D-i)^{-2}\right\| _{B_1(L^2(\Lambda))}\leq C\,.
$$ 
\end{proposition}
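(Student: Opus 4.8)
The plan is to reduce the claim to an absolutely convergent series of products of Hilbert--Schmidt operators, each of which is controlled by the weighted Green's function bound \eqref{hcadoua4}. The point is that $(H_D-i)^{-1}$ need not be Hilbert--Schmidt on an unbounded $\Lambda$, so one cannot simply split $(H_D-i)^{-2}$ into two resolvents; instead one inserts a partition of unity adapted to unit cells. Concretely, I would fix a partition of $\R^2$ into disjoint half-open unit squares $(Q_k)_{k\in\mathbb N}$, let $\chi_{(k)}$ be the characteristic function of $Q_k\cap\Lambda$ (so that $\sum_k\chi_{(k)}$ is the identity on $L^2(\Lambda)$), and write, using $\chi_{(k)}^2=\chi_{(k)}$,
$$
\chi_1(H_D-i)^{-2} = \sum_{k} \bigl[\chi_1(H_D-i)^{-1}\chi_{(k)}\bigr]\,\bigl[\chi_{(k)}(H_D-i)^{-1}\bigr] \,.
$$

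Next I would bound the two factors in Hilbert--Schmidt norm. Since $(H_D-i)^{-1}$ has integral kernel $\Green(\cdot,\cdot,i)$ (by the diamagnetic inequality and \eqref{eq:resid}) and since $z=i$ gives $r=\eta=1$ in \eqref{hcadoua4}, one obtains directly
$$
\|\chi_{(k)}(H_D-i)^{-1}\|_{B_2(L^2(\Lambda))}^2
= \int_{Q_k\cap\Lambda}\!\!d\x\int_\Lambda\! d\x'\,|\Green(\x,\x',i)|^2
\leq |Q_k|\ \sup_{\x\in\Lambda}\int_\Lambda d\x'\,|\Green(\x,\x',i)|^2 \leq C \,,
$$
while, writing $d_k:={\rm dist}(\Lambda_1,Q_k)$ and using $\langle\x-\x'\rangle\geq|\x-\x'|\geq d_k$ for $\x\in\Lambda_1$, $\x'\in Q_k\cap\Lambda$,
$$
\|\chi_1(H_D-i)^{-1}\chi_{(k)}\|_{B_2(L^2(\Lambda))}^2
\leq e^{-2\delta d_k}\int_{\Lambda_1}\!\!d\x\int_\Lambda\! d\x'\,e^{2\delta\langle\x-\x'\rangle}|\Green(\x,\x',i)|^2
\leq C\,|\Lambda_1|\,e^{-2\delta d_k}\leq C e^{-2\delta d_k} \,,
$$
where in the last step I used that $\Lambda_1$, having diameter $\leq 1$, has area bounded by an absolute constant. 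Hence each term of the series is trace class with trace norm at most $\|\chi_1(H_D-i)^{-1}\chi_{(k)}\|_{B_2}\,\|\chi_{(k)}(H_D-i)^{-1}\|_{B_2}\leq C e^{-\delta d_k}$.

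Finally I would sum: the number of squares $Q_k$ with $d_k\in[n,n+1)$ grows at most linearly in $n$, so $\sum_k e^{-\delta d_k}\leq C\sum_{n\geq0}(n+1)e^{-\delta n}<\infty$. Therefore the series converges absolutely in $B_1(L^2(\Lambda))$, and since its partial sums converge strongly to $\chi_1(H_D-i)^{-2}$ (because $\sum_{k\leq K}\chi_{(k)}\to\mathbf 1_\Lambda$ strongly), the sum equals $\chi_1(H_D-i)^{-2}$ and $\|\chi_1(H_D-i)^{-2}\|_{B_1(L^2(\Lambda))}\leq C$, with $C$ independent of $\Lambda$ and $\A$ since the constants in Corollary \ref{localizHS} are. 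The only genuinely delicate point is this last identification of the $B_1$-limit together with the summability bookkeeping; everything else is a direct application of \eqref{hcadoua4} and the elementary inequality $\|AB\|_{B_1}\leq\|A\|_{B_2}\|B\|_{B_2}$, the hard part of the uniform exponential control having already been absorbed into Corollary \ref{localizHS}.
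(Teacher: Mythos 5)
Your proof is correct, and it takes a genuinely different route from the paper's. The paper fixes a point $\x_0\in\Lambda_1$, inserts the identity $e^{\delta|\x_0-\cdot|}e^{-\delta|\x_0-\cdot|}$, and factors $\chi_1(H_D-i)^{-2}$ into a bounded multiplication operator times two exponentially weighted resolvents whose kernels are square-integrable directly by Corollary \ref{localizHS} (a Combes--Thomas sandwich, done once). You instead tile $\R^2$ by unit squares, insert a partition of unity between the two resolvents, show each term $\chi_1(H_D-i)^{-1}\chi_{(k)}\cdot\chi_{(k)}(H_D-i)^{-1}$ is trace class with $B_1$-norm $\lesssim e^{-\delta d_k}$, and sum the geometric-type series. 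Both approaches spend the same currency — the uniform weighted bound \eqref{hcadoua4} — but in different ways: the paper absorbs the exponential decay into one factorization and avoids any counting argument, while your localization makes the spatial bookkeeping explicit at the cost of the (elementary, but necessary) tile-counting and strong-limit identification at the end. Your approach is arguably more transparent and more standard in flavor; the paper's is shorter. One minor remark: it is worth saying explicitly that $z=i$ gives $r=\eta=1$ so that the $\delta$ in \eqref{hcadoua4} can be used with fixed exponent $2\delta\langle\x-\x'\rangle$ — you do invoke this, and it is correct.
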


\begin{proof}
We fix $\x_0\in \Lambda_1$ and write:
\begin{align}\label{hcadoua116}
\chi(H_D-i)^{-2} =\chi e^{\delta |\x_0-\cdot |}\left\{ e^{-\frac{\delta}{2} |\x_0-\cdot|}  e^{-\frac{\delta}{2} |\x_0-\cdot|}(H_D-i)^{-1}
e^{{\frac{\delta}{2}|\x_0-\cdot|}}\right \} \left \{e^{-{\frac{\delta}{2}|\x_0-\cdot|}}(H_D-i)^{-1}\right\}.
\end{align}
The operator $\chi e^{\delta |\x_0-\cdot |}$ is bounded by $e^{\delta}$ since $\Lambda_1$ has diameter one. The last two operators are both Hilbert-Schmidt because their integral kernels are square integrable by Corollary \ref{localizHS}, and their Hilbert-Schmidt norms can be bounded independently of $\x_0$. 
\end{proof} 

We now turn to the proper topic of this subsection, namely exponential decay estimates. By this we mean bounds in various norms on operators of the form $\chi_1 W (H_D-z)^{-1}\chi_2$, where $\chi_1$ and $\chi_2$ are characteristic functions of disjoint sets.

\begin{proposition}\label{prop5}
There is a constant $C>0$ such that for any disjoint sets $\Lambda_1,\Lambda_2\subset \Lambda$ of diameter $\leq 1$ and of distance $d:={\rm dist}(\Lambda_1,\Lambda_2)\geq 1$ and for any $z\in\mathcal D$ the operator $\chi_1(H_D-z)^{-1}\chi_2$, $\chi_j$ being the characteristic function of $\Lambda_j$, is Hilbert-Schmidt and satisfies
$$
\Vert \chi_1(H_D-z)^{-1}\chi_2\Vert _{B_2(L^2(\Lambda))} 
\leq C\frac{r^2}{\eta} e^{-{\frac{\delta \eta}{r }(d-1)}}.
$$ 
\end{proposition}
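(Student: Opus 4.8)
The plan is to reduce the statement to Corollary \ref{localizHS}. First I would record that the operator $\chi_1(H_D-z)^{-1}\chi_2$ has integral kernel $\chi_1(\x)\,\Green(\x,\x',z)\,\chi_2(\x')$, so that
\begin{equation*}
\left\| \chi_1(H_D-z)^{-1}\chi_2 \right\|_{B_2(L^2(\Lambda))}^2 = \int_{\Lambda_1}d\x \int_{\Lambda_2}d\x'\, |\Green(\x,\x',z)|^2 \,;
\end{equation*}
in particular, the finiteness of this double integral — which will be a byproduct of the estimate below — is exactly what shows the operator is Hilbert--Schmidt.

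Next I would use disjointness. Let $\delta>0$ be the constant from Corollary \ref{localizHS} (so in particular $\delta\le\delta_0$, with $\delta_0$ as in Proposition \ref{prop2}). For $\x\in\Lambda_1$ and $\x'\in\Lambda_2$ one has $|\x-\x'|\ge d$, hence $\langle\x-\x'\rangle\ge|\x-\x'|\ge d$. Inserting $1=e^{-\frac{2\delta\eta}{r}\langle\x-\x'\rangle}e^{\frac{2\delta\eta}{r}\langle\x-\x'\rangle}$, bounding the first factor by $e^{-\frac{2\delta\eta}{r}d}$, enlarging the $\x'$-integration from $\Lambda_2$ to $\Lambda$, and applying Corollary \ref{localizHS} for each fixed $\x$, I obtain
\begin{equation*}
\int_{\Lambda_1}d\x \int_{\Lambda_2}d\x'\, |\Green(\x,\x',z)|^2 \le e^{-\frac{2\delta\eta}{r}d}\, |\Lambda_1| \, C\frac{r^4}{\eta^2} \,.
\end{equation*}
Since $\Lambda_1$ has diameter $\le 1$ its Lebesgue measure is bounded by a universal constant, and using $e^{-\frac{2\delta\eta}{r}d}\le e^{-\frac{2\delta\eta}{r}(d-1)}$ and taking square roots gives the claimed bound $C\frac{r^2}{\eta}e^{-\frac{\delta\eta}{r}(d-1)}$, with a new constant $C$ and the same $\delta$.

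There is no real obstacle here: the substantive work — the Combes--Thomas rotation of Proposition \ref{prop2} and the resulting weighted $L^2$ bound of Corollary \ref{localizHS} — has already been done, and Proposition \ref{prop5} is essentially the observation that restricting the kernel to disjoint sets converts the built-in exponential weight into genuine exponential decay in the distance $d$. The only points requiring minor care are using the value of $\delta$ provided by Corollary \ref{localizHS} (rather than an arbitrary one) and keeping track of the powers of $r$ and $\eta$.
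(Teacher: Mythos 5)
Your argument is correct and follows essentially the same route as the paper: both proofs rest entirely on the weighted $L^2$ estimate of Corollary~\ref{localizHS} and convert the weight into decay in $d$ via disjointness of the supports. The paper factors the operator as $\chi_1 e^{\frac{\delta\eta}{r}|\x_0-\cdot|}\bigl\{e^{-\frac{\delta\eta}{r}|\x_0-\cdot|}(H_D-z)^{-1}e^{\frac{\delta\eta}{r}|\x_0-\cdot|}\chi_2\bigr\}e^{-\frac{\delta\eta}{r}|\x_0-\cdot|}\chi_2$ around a fixed $\x_0\in\Lambda_1$ whereas you work directly on the integral kernel with the weight $e^{\frac{2\delta\eta}{r}\langle\x-\x'\rangle}$; this is a presentational difference only, and if anything your version avoids the auxiliary base point and the triangle inequality.
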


\begin{proof}
Fix $\x_0\in \Lambda_1$. We can write:
\begin{align}\label{hcadoua16}
\chi_1(H_D-z)^{-1}\chi_2 =\chi_1 e^{{\frac{\delta \eta}{r
    }|\x_0-\cdot |}} \left\{ e^{-{\frac{\delta \eta}{r }|\x_0-\cdot|}}(H_D-z)^{-1}
e^{{\frac{\delta \eta}{r }|\x_0-\cdot|}}\chi_2\right \} e^{-{\frac{\delta \eta}{r }|\x_0-\cdot|}}\chi_2.
\end{align}
The operator $\chi_1 e^{{\frac{\delta \eta}{r
    }|\x_0-\cdot |}}$ is bounded by $e^{{\frac{\delta \eta}{r
    }}}$, while $e^{-{\frac{\delta \eta}{r }|\x_0-\cdot|}}\chi_2$ is
bounded by $e^{-{\frac{\delta \eta}{r }d}}$. 
The operator in the middle is Hilbert-Schmidt because its integral
kernel is square integrable, as can be easily
inferred from \eqref{hcadoua4}. 
\end{proof} 

Now let us consider the perturbed case. Assume that $W$ is an operator relatively bounded to $H_D$ with a relative bound less than one, which obeys the following condition: 
\begin{align}\label{hcadoua7}
c_\delta= \sup_{\x_0\in \Lambda}\sup_{|\alpha|\leq \delta} \left 
\Vert e^{{\alpha \langle \cdot -\x_0\rangle }}W(H_D-i)^{-1}e^{-\alpha \langle \cdot -\x_0\rangle }\right \Vert 
< \infty .
\end{align}
The example we have in mind is $W=-i\partial_j-a_j$; a quick
commutation shows that this operator has
the property
$$\sup_{\x_0\in \Lambda}\sup_{|\alpha|\leq \delta} \left
\Vert e^{{\alpha \langle \cdot -\x_0\rangle }}We^{-\alpha \langle
  \cdot -\x_0\rangle }(H_D-i)^{-1}\right \Vert <\infty, $$
but this is not quite as in \eqref{hcadoua7}; in order to show that
\eqref{hcadoua7} holds, we write
\begin{align}\label{hcapatra1}
& e^{{\alpha \langle \cdot -\x_0\rangle
  }}W(H_D-i)^{-1}e^{-\alpha \langle \cdot -\x_0\rangle }=\\
&\left\{e^{{\alpha \langle \cdot -\x_0\rangle }}We^{-\alpha \langle
  \cdot -\x_0\rangle }(H_D-i)^{-1}\right \} \left\{(H_D-i)
e^{{\alpha \langle \cdot -\x_0\rangle }}(H_D-i)^{-1}e^{-\alpha \langle
  \cdot -\x_0\rangle }\right\}\nonumber
\end{align}
and now we can invoke \eqref{expdek} and \eqref{expdek2} to see that $c_\delta<\infty$. 

Returning to general operators $W$ we state:

\begin{proposition}\label{prop4}
There are constants $C$ and $\delta>0$ such that for all $\Lambda_1,\Lambda_2\subset \Lambda$ as in Proposition~\ref{prop5} and all $z\in\mathcal D$ one has
\begin{align}\label{eee10}
& \Vert \chi_1 W(H_D+1)^{-1}\chi_2\Vert\leq c_\delta \,C e^{- \delta(d-1)}, \\
\label{eee1}
& \Vert \chi_1 W(H_D-z)^{-1}\chi_2\Vert\leq c_\delta \,C\frac{r^2}{\eta} e^{-{\frac{\delta \eta(d-1)}{r }}}, \\
\label{eee20}
&\Vert \chi_1 W(H_D-i)^{-1}(H_D-z)^{-1}\chi_2\Vert_{B_2}
\leq c_\delta \,C\frac{r^2}{\eta} e^{-{\frac{\delta \eta (d-2)}{2r }}}, \\
\label{eee2}
&\Vert \chi_1 W(H_D+1)^{-1}(H_D-z)^{-1}\chi_2\Vert_{B_2}
\leq c_\delta \,C\frac{r^2}{\eta} e^{-{\frac{\delta \eta (d-2)}{2r }}}.
\end{align}
Here $c_\delta$ is the constant from \eqref{hcadoua7}.
\end{proposition}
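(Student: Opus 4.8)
The plan is to prove all four estimates in the same spirit as Proposition~\ref{prop5}: fix a point $\x_0\in\Lambda_1$, conjugate the resolvents with an exponential weight centred at $\x_0$, split the operator into three factors, and estimate each factor using \eqref{hcadoua7} for the $W(H_D-i)^{-1}$ piece (this is where the hypothesis enters, and note that $W$ being $H_D$-bounded with relative bound $<1$ makes $W(H_D-i)^{-1}$ bounded, so all resolvent-identity manipulations below are legitimate) and Propositions~\ref{prop2}, \ref{comm} and Corollary~\ref{localizHS} for the resolvent pieces. Write $g(\x):=\langle\x-\x_0\rangle$; since $\mathrm{diam}\,\Lambda_1\le1$ we have $g\le\sqrt2$ on $\Lambda_1$, and since $\mathrm{dist}(\Lambda_1,\Lambda_2)=d$ we have $g\ge d$ on $\Lambda_2$. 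Fix $\delta>0$ small enough that Propositions~\ref{prop2} and~\ref{comm} and Corollary~\ref{localizHS} apply and that $c_\delta<\infty$ in \eqref{hcadoua7}; we use the weight parameter $\alpha:=\delta\eta/r\le\delta$ in the bounds involving $z$, and $\alpha:=\delta$ in \eqref{eee10}. As in the proofs of Propositions~\ref{prop2} and~\ref{prop5}, the unboundedness of $e^{\pm\alpha g}$ and the fact that it does not preserve $\mathrm{Dom}(H_D)$ are handled by the regularized weights $h_\epsilon$ from \eqref{hcprima3} together with a density argument using that $\chi_1,\chi_2$ have compact support; I suppress this below.

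For \eqref{eee1} I will use $W(H_D-z)^{-1}=W(H_D-i)^{-1}\bigl(1+(z-i)(H_D-z)^{-1}\bigr)$ and write
\begin{equation*}
\chi_1 W(H_D-z)^{-1}\chi_2=\bigl(\chi_1e^{\alpha g}\bigr)\bigl(e^{-\alpha g}W(H_D-i)^{-1}e^{\alpha g}\bigr)\bigl(1+(z-i)e^{-\alpha g}(H_D-z)^{-1}e^{\alpha g}\bigr)\bigl(e^{-\alpha g}\chi_2\bigr).
\end{equation*}
Here $\|\chi_1e^{\alpha g}\|\le e^{\sqrt2\alpha}$ and $\|e^{-\alpha g}\chi_2\|\le e^{-\alpha d}$; the second factor is $\le c_\delta$ by \eqref{hcadoua7}; and the third is $\le 1+|z-i|(2/\eta)\le 5r^2/\eta$ by \eqref{expdek} and $|z-i|\le2r^2$ for $z\in\mathcal D$. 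Multiplying these and absorbing the constant $e^{(\sqrt2-1)\delta}$ into $C$ gives \eqref{eee1}. Estimate \eqref{eee10} follows identically with $z$ replaced by $-1$ (which is not in $\mathcal D$): now $\alpha=\delta$, one uses $W(H_D+1)^{-1}=W(H_D-i)^{-1}\bigl(1-(1+i)(H_D+1)^{-1}\bigr)$, and the analogue of the third factor, $1-(1+i)e^{-\delta g}(H_D+1)^{-1}e^{\delta g}$, is bounded by the case $\alpha=\beta=0$ of Proposition~\ref{comm}.

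For the Hilbert--Schmidt bounds \eqref{eee20} and \eqref{eee2} I keep $\chi_2$ glued to the resolvent, writing (with $\alpha=\delta\eta/r$)
\begin{equation*}
\chi_1 W(H_D-i)^{-1}(H_D-z)^{-1}\chi_2=\bigl(\chi_1e^{\alpha g}\bigr)\bigl(e^{-\alpha g}W(H_D-i)^{-1}e^{\alpha g}\bigr)\bigl(e^{-\alpha g}(H_D-z)^{-1}\chi_2\bigr).
\end{equation*}
The first factor has norm $\le e^{\sqrt2\alpha}$ and the second is $\le c_\delta$; the point is that the last factor is Hilbert--Schmidt. Indeed, for $\x'\in\Lambda_2$ one has $g(\x)\ge g(\x')-\langle\x-\x'\rangle\ge d-\langle\x-\x'\rangle$, so its Hilbert--Schmidt norm squared is at most
\begin{equation*}
\int_{\Lambda_2}d\x'\int_\Lambda d\x\;e^{-2\alpha g(\x)}|\Green(\x,\x',z)|^2\le e^{-2\alpha d}\int_{\Lambda_2}d\x'\int_\Lambda d\x\;e^{2\alpha\langle\x-\x'\rangle}|\Green(\x,\x',z)|^2\le C\,|\Lambda_2|\,\frac{r^4}{\eta^2}\,e^{-2\alpha d}
\end{equation*}
by Corollary~\ref{localizHS} (with $2\alpha=2\delta\eta/r$). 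This yields \eqref{eee20}, the stated exponent $(d-2)/2$ following with room to spare. For \eqref{eee2} one replaces $W(H_D-i)^{-1}$ by $W(H_D+1)^{-1}=W(H_D-i)^{-1}\bigl(1-(1+i)(H_D+1)^{-1}\bigr)$ and invokes Proposition~\ref{comm} once more to see that $e^{-\alpha g}W(H_D+1)^{-1}e^{\alpha g}$ is bounded by a constant times $c_\delta$; the rest is unchanged.

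I do not expect a serious obstacle beyond bookkeeping powers of $r$ and $\eta$. The one point that must be respected is that $(H_D-z)^{-1}$ is itself \emph{not} Hilbert--Schmidt on $\Lambda$, so in \eqref{eee20}--\eqref{eee2} the cutoff $\chi_2$ cannot be peeled off the resolvent; one is forced to use the localized weighted integral bound of Corollary~\ref{localizHS} — a ``per-row'' estimate on the Green's kernel — rather than any global Hilbert--Schmidt bound. Once this is recognized and the unbounded weights are regularized as in Propositions~\ref{prop2} and~\ref{prop5}, the rest is the routine norm arithmetic displayed above.
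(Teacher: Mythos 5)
Your proof is correct and follows essentially the same route as the paper's: fix $\x_0\in\Lambda_1$, write $W(H_D-z)^{-1}=W(H_D-i)^{-1}T_1(z)$ with $T_1(z)=1+(z-i)(H_D-z)^{-1}$, and conjugate by Combes--Thomas weights to split off the factor $e^{-\alpha g}W(H_D-i)^{-1}e^{\alpha g}$ controlled by \eqref{hcadoua7}, exactly as the paper does via \eqref{hcadoua16} and \eqref{finalaa2}. The only minor variation is in the Hilbert--Schmidt estimates \eqref{eee20}--\eqref{eee2}, where the paper splits the weight as $e^{-\alpha/2}\cdot e^{-\alpha/2}$ and spends one half on a Hilbert--Schmidt bound (as in Proposition~\ref{prop15}) and the other half on the decay coming from $e^{-\alpha/2}\chi_2$ (hence the $(d-2)/2$ in the exponent), whereas you extract both from the single operator $e^{-\alpha g}(H_D-z)^{-1}\chi_2$ using the triangle inequality $g(\x)\ge d-\langle\x-\x'\rangle$ together with Corollary~\ref{localizHS}, yielding a marginally sharper exponent that still implies the stated one.
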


\begin{proof}
Denote by $T_1(z):=(H_D-i)(H_D-z)^{-1}=1+(z-i)(H_D-z)^{-1}$ and write
$$
\chi_1W(H_D-z)^{-1}\chi_2=\chi_1 W(H_D-i)^{-1}T_1(z)\chi_2.
$$
For the proof of \eqref{eee1} we insert exponentials as in \eqref{hcadoua16} and use \eqref{hcadoua7} and \eqref{expdek}. For the proof of \eqref{eee20} we use the same idea of splitting the exponential decay as in \eqref{hcadoua116}.  Namely, if we choose $\x_0$ in the support of $\chi_1$ we can write: 
\begin{align}\label{finalaa2}
\chi_1 W(H_D-i)^{-1}(H_D-z)^{-1}\chi_2&=\chi e^{\frac{\delta\eta }{r} |\x_0-\cdot |}\left\{ e^{-\frac{\delta\eta }{r} |\x_0-\cdot |} 
W(H_D-i)^{-1} e^{\frac{\delta\eta }{r} |\x_0-\cdot |} \right \} \\
&\cdot \left \{ e^{-\frac{\delta\eta}{2r} |\x_0-\cdot|} e^{-\frac{\delta\eta}{2r} |\x_0-\cdot|}(H_D-z)^{-1}
e^{{\frac{\delta\eta}{2r}|\x_0-\cdot|}}\right \} e^{-{\frac{\delta\eta }{2r}|\x_0-\cdot|}}\chi_2.\nonumber 
\end{align}
The second parenthesis contains a Hilbert-Schmidt operator, while the
other factors are bounded. The decay comes from the last factor.

Estimates \eqref{eee10} and \eqref{eee2} can be proved as the previous ones, taking into account that $z=-1$ can be used in our Combes-Thomas estimates (Proposition \ref{prop2}), since it belongs to the resolvent set of $H_D$; see also the proof of Proposition \ref{comm}.
\end{proof}

\subsection{Pointwise bounds on the Green's function}

The goal of this subsection is to prove a pointwise bound on the resolvent kernel $\Green(\cdot,\cdot,z)$ of $\left( H_D - z \right)^{-1}$, where the operator $H_D$ is defined as in the previous subsection.

\begin{proposition}\label{diamag}
 There are constants $C>0$ and $\delta>0$ such that for all $z\in\mathcal D$ and for all $\x,\x'\in\R^2$ one has
\begin{equation}
 \label{eq:diamag}
| \Green(\x,\x',z) | \leq C \frac{r^4}\eta \left(1+|\ln|\x-\x'||\right) e^{-\frac{\delta\eta}r |\x-\x'|} \,.
\end{equation}
\end{proposition}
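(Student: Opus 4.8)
The plan is to upgrade the weighted $L^2$ estimates on the resolvent kernel that hold for all $z\in\mathcal D$ (Corollary \ref{localizHS}) to a genuine pointwise bound, by feeding in the sharp diamagnetic control available at the single point $z=-1$. The mechanism is one application of the second resolvent identity. Writing
\[
\Green(\x,\x',z) = \Green(\x,\x',-1) + (z+1)\int_\Lambda \Green(\x,\y,z)\,\Green(\y,\x',-1)\,d\y
\]
at the level of integral kernels (it suffices to treat $\x,\x'\in\Lambda$, extending $\Green$ by zero otherwise), I would then estimate the two terms on the right separately. The absolute convergence of the integral, hence the meaningfulness of the pointwise identity, will come out of the estimates themselves and needs no separate discussion.

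For the first term I would invoke the diamagnetic inequality: $|\Green(\x,\x',-1)|$ is dominated pointwise by the free kernel $\frac1{2\pi}K_0(|\x-\x'|)$, and since $K_0(t)\le C(1+|\ln t|)e^{-t/2}$, this already has the desired form once one notes that $\eta\le 1\le r$ on $\mathcal D$ (so $\tfrac{\delta\eta}{r}\le\delta\le\tfrac12$ for $\delta\le\tfrac12$) and $\tfrac{r^4}\eta\ge 1$. For the second term I would split the integrand as $\bigl(e^{\frac{\delta\eta}{r}\langle\x-\y\rangle}\Green(\x,\y,z)\bigr)\bigl(e^{-\frac{\delta\eta}{r}\langle\x-\y\rangle}\Green(\y,\x',-1)\bigr)$ and apply Cauchy--Schwarz in $\y$. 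By Corollary \ref{localizHS} (using $|\x-\y|\le\langle\x-\y\rangle$) the first factor has $L^2(d\y)$-norm at most $Cr^2/\eta$. For the second factor I would use the diamagnetic bound again, choosing $\delta$ small enough that $\tfrac{2\delta\eta}{r}$ stays below the exponential decay rate of $K_0$; then one peels off a factor $e^{-\frac{2\delta\eta}{r}|\y-\x'|}$, combines it with $e^{-\frac{2\delta\eta}{r}|\x-\y|}$ via the triangle inequality to produce $e^{-\frac{2\delta\eta}{r}|\x-\x'|}$, and is left with a convergent integral $\int_{\R^2}(1+|\ln|\w||)^2 e^{-c|\w|}\,d\w$, finite because $(\ln|\w|)^2$ is locally integrable in two dimensions. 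This gives the second factor $L^2$-norm $\le Ce^{-\frac{\delta\eta}{r}|\x-\x'|}$, so the integral is $\le C\tfrac{r^2}\eta e^{-\frac{\delta\eta}{r}|\x-\x'|}$; multiplying by $|z+1|\le C\langle\Re z\rangle=Cr^2$ yields the stated bound.

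I do not expect a serious obstacle: the content is entirely the resolvent-identity trick together with two estimates that are already available. The only points demanding care are bookkeeping: keeping $\delta$ uniformly small (below the $\delta_0$ of Proposition \ref{prop2} and below the decay rate of the free kernel) so that the weights from Corollary \ref{localizHS} and the exponential decay of $K_0$ cooperate under the triangle inequality, verifying the local square-integrability of the logarithmic singularity in $\R^2$, and tracking the powers of $r$ and $\eta$ to land exactly on $r^4/\eta$. If a cleaner diagonal behavior were wanted one could iterate the resolvent identity once more, but for \eqref{eq:diamag} a single iteration is enough.
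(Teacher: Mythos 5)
Your proposal is correct and follows essentially the same route as the paper: the second resolvent identity against $z=-1$, the diamagnetic pointwise bound on $\Green(\cdot,\cdot,-1)$, and one application of Cauchy--Schwarz using the weighted $L^2$ bound of Corollary~\ref{localizHS}. The only cosmetic difference is where you place the exponential weight before Cauchy--Schwarz (on $\langle\x-\y\rangle$ alone rather than first pulling $e^{\frac{\delta\eta}{r}\langle\x-\x'\rangle}$ out via the triangle inequality as the paper does), but the estimates and the resulting powers of $r/\eta$ are identical.
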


We emphasize that the constants $C$ and $\delta$ can be chosen independently of $\A$.

\begin{proof}
We use the resolvent identity \eqref{eq:resid}. By the diamagnetic inequality, as in the proof of Corollary \ref{localizHS},
$$
\left| \Green(\x,\x',-1) \right| \leq C \left(1 +|\ln |\x-\x'||\right) e^{- \langle\x-\x'\rangle/2} \,.
$$
Moreover, since $\langle \x-\x'\rangle \leq \langle \x-\y\rangle+\langle \y-\x'\rangle$,
\begin{align*}
e^{{\frac{\delta \eta}{r } \langle \x-\x'\rangle}} \left| \int_{\R^2} d\y \, \Green(\x,\y,z)\, \Green(\y,\x',-1) \right|
\leq & \left( \int_{\R^2} d\y \, e^{{\frac{2\delta \eta}{r } \langle \x-\y\rangle}} |\Green(\x,\y,z) |^2 \right)^{1/2} \\
& \times \left( \int_{\R^2} d\y \, e^{{\frac{2\delta \eta}{r } \langle \y-\x'\rangle}} |\Green(\y,\x',-1) |^2 \right)^{1/2} \,.
\end{align*}
By \eqref{azecea14} and \eqref{hcadoua4}, both factors are finite and their product is bounded by a constant times $r^2/\eta$. This proves \eqref{eq:diamag}, even with $|\x-\x'|$ in the exponential replaced by $\langle\x-\x'\rangle$.
\end{proof}


\subsection{Pointwise bounds on the derivative of the Green's function}

In this subsection we specialize to the case of a constant magnetic field and an operator defined on all of $\R^2$. We shall prove

\begin{proposition}\label{kernelder}
 There are $\delta, C>0$ such that for any $\A\in C^1(\R^2)$ with $\curl\A$ constant and for any $z\in\mathcal D$, the kernel $\Green_0(\cdot,\cdot,z)$ of $((-i\nabla-\A)^2-z)^{-1}$ satisfies
$$
\left| (-i\nabla_\x-\A(\x)) \Green_0(\x,\x',z) \right| \leq C \frac{r^6}{\eta} |\x-\x'|^{-1}  e^{-\frac{\delta\eta}{r}|\x-\x'|} \,.
$$
\end{proposition}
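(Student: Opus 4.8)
The plan is to reduce the bound on $(-i\nabla_\x - \A(\x))\Green_0(\x,\x',z)$ to the pointwise kernel bound of Proposition~\ref{diamag} together with the resolvent-type identity that relates the gradient of the resolvent kernel to the resolvent kernel itself. First I would exploit the explicit structure available in the constant-field case: by gauge invariance we may assume $\A(\x) = \frac{b}{2}(-x_2,x_1)$ with $b = \curl\A$, and then the kernel $\Green_0(\x,\x',z)$ is known explicitly via Mehler's formula as an integral over a proper-time parameter $s\in(0,\infty)$ of a Gaussian in $\x-\x'$ times a phase, multiplied by $e^{sz}$ (for $\Re z<0$; in general one continues analytically / uses the resolvent identity with the reference point $z=-1$). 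Differentiating this explicit expression in $\x$ and subtracting the $\A(\x)$ term produces, schematically, a factor $|\x-\x'|/s$ times the same Gaussian, which after the $s$-integration yields exactly the extra $|\x-\x'|^{-1}$ gain (compared with $\Green_0$ itself having at worst a $1+|\ln|\x-\x'||$ singularity).

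More carefully, the cleanest route is algebraic rather than via the explicit kernel. Write $((-i\nabla-\A)^2 - z)^{-1} = ((-i\nabla-\A)^2+1)^{-1} + (z+1)((-i\nabla-\A)^2+1)^{-1}((-i\nabla-\A)^2-z)^{-1}$, and apply $(-i\nabla_\x - \A(\x))$ to the kernel. The operator $(-i\nabla-\A)((-i\nabla-\A)^2+1)^{-1}$ has a kernel with a $|\x-\x'|^{-1}$ singularity and Gaussian (hence exponential) decay — this is the constant-field analog of the bound on $\mathbf\Green_1$ used implicitly in Section~\ref{wholespace}, and in the constant-field case it follows from Mehler's formula by direct computation. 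Then, as in the proof of Proposition~\ref{diamag}, I would insert the exponential weight $e^{(\delta\eta/r)\langle\x-\x'\rangle}$, split it between the two kernels using $\langle\x-\x'\rangle\le\langle\x-\y\rangle+\langle\y-\x'\rangle$, apply Schwarz in $\y$, and use the weighted $L^2$-bound \eqref{hcadoua4} of Corollary~\ref{localizHS} for the factor carrying $z$ and a weighted-$L^2$ bound on $(-i\nabla-\A)((-i\nabla-\A)^2+1)^{-1}$ for the other factor. The powers of $r/\eta$ are then bookkeeping: one factor of $r^2/\eta$ from \eqref{hcadoua4}, an extra factor of $|z+1|\lesssim r^2$ from the resolvent identity, and the constant-field kernel bounds contribute the remaining powers, giving the stated $r^6/\eta$ (which, as the authors note elsewhere, is certainly not optimal).

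The main obstacle — really the only non-routine point — is establishing the $|\x-\x'|^{-1}e^{-\delta|\x-\x'|}$ pointwise bound (with a finite weighted $L^2$ version) on the kernel of $(-i\nabla-\A)((-i\nabla-\A)^2+1)^{-1}$ in the constant-field case. This does not follow from the diamagnetic inequality alone, since that controls only the magnitude of the resolvent kernel, not its gradient. The way out is precisely why this subsection is restricted to constant fields: Mehler's formula gives an exact Gaussian representation, so $(-i\nabla_\x-\A(\x))$ acting on it is computed explicitly, the $\y$-integral in the weighted-$L^2$ estimate is a Gaussian integral, and the short-distance behavior $\int_0^\infty s^{-1/2} e^{-c|\x-\x'|^2/s}\,ds \sim |\x-\x'|^{-1}$ (modulo logarithmic corrections for the lower derivatives) comes out directly. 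Once this kernel bound is in hand, assembling the final estimate is entirely parallel to the proof of Proposition~\ref{diamag}, and I would present it as such, pointing to Mehler's formula \eqref{eq:2} for the explicit computation and to Corollary~\ref{localizHS} and the resolvent identity \eqref{eq:resid} for the rest.
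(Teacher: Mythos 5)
Your overall skeleton matches the paper: gauge to $\A(\x)=\frac b2(-x_2,x_1)$, use Mehler's formula to get the pointwise bound
$$
\left|(-i\nabla_\x-\A(\x))\,\Green_0(\x,\x',-1)\right| \le C\,|\x-\x'|^{-1}e^{-|\x-\x'|/2}
$$
at the reference point $z=-1$ (the paper does this by comparing the magnetic gradient heat kernel with the free one and integrating in proper time), and then pass to general $z\in\mathcal D$ via the resolvent identity. That part is correct and is exactly how the paper proceeds.

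The gap is in your final step. You propose to estimate the second term of the resolvent identity,
$$
(z+1)\int_{\R^2} d\y\,\bigl[(-i\nabla_\x-\A(\x))\Green_0(\x,\y,-1)\bigr]\,\Green_0(\y,\x',z),
$$
by inserting the exponential weight, applying Cauchy--Schwarz in $\y$, and invoking Corollary~\ref{localizHS} for the $z$-factor and a ``weighted $L^2$ bound on $(-i\nabla-\A)((-i\nabla-\A)^2+1)^{-1}$'' for the other factor. But this second weighted $L^2$ quantity is not finite: the kernel $(-i\nabla_\x-\A(\x))\Green_0(\x,\y,-1)$ has the singularity $|\x-\y|^{-1}$, and in two dimensions $\int_{|\x-\y|\le 1}|\x-\y|^{-2}\,d\y$ diverges logarithmically. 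The Schwarz trick you are importing from the proof of Proposition~\ref{diamag} relies on the fact that there \emph{both} kernels have only a logarithmic singularity, which is square-integrable in 2D; it does not survive replacing one of them by a kernel with a $|\x-\y|^{-1}$ singularity. To close the argument you need a different pairing: for example a direct convolution estimate (the $|\x-\y|^{-1}e^{-|\x-\y|/2}$ kernel is in $L^1\cap L^{2-\epsilon}$, the $z$-dependent factor from Proposition~\ref{diamag} is in $L^\infty$, and the exponential weights combine via $|\x-\y|+|\y-\x'|\ge|\x-\x'|$ to produce the final $e^{-\delta\eta|\x-\x'|/r}$), or a H\"older pairing with exponents $(p,q)\ne(2,2)$; the paper simply refers to this as ``simple estimates''. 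Note also that the result need not itself be $\sim|\x-\x'|^{-1}$; a bounded result with exponential decay is sufficient, since the claimed right-hand side $|\x-\x'|^{-1}e^{-\delta\eta|\x-\x'|/r}$ dominates any bounded function with slightly faster exponential decay after adjusting $\delta$ and absorbing a power of $r/\eta$.

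One additional small slip: your heuristic scaling $\int_0^\infty s^{-1/2}e^{-c|\x-\x'|^2/s}\,ds\sim|\x-\x'|^{-1}$ is off; the 2D gradient heat kernel carries a factor $|\x-\x'|/t^2$ (not $t^{-1/2}$), and it is $\int_0^\infty t^{-2}|\x-\x'|\,e^{-|\x-\x'|^2/4t}\,dt\sim|\x-\x'|^{-1}$ that produces the stated singularity.
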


We emphasize that the constants $C$ and $\delta$ can be chosen
independently of the value of the (constant) magnetic field.

\begin{proof}
By gauge covariance of the kernel we may assume that $\A(\x) = \frac{b}{2} (-x_2,x_1)$ for some $b\in\R$. Since the case $b=0$ is well-known and since the case $b<0$ can be reduced to the case $b>0$ via complex conjugation, we may also assume that $b>0$.

 According to Mehler's formula, the heat kernel of $H=(-i\nabla-\A)^2$ 
is given by
\begin{align}
  \label{eq:2}
 \exp(-tH)(\x,\x') = \frac{b}{4\pi \sinh (bt)} \exp(-\frac{ib \x\wedge\x'}2) \exp(-\frac{b|\x-\x'|^2}{4\tanh(bt)}) \,. 
\end{align}
Hence
$$
(-i\nabla_\x-\A(\x)) \exp(-tH)(\x,\x') = \frac{b}{2} \exp(-tH)(\x,\x')
\left( \frac{i (\x-\x')}{\tanh(bt)} - (\x-\x')^\bot \right) \,,
$$
where $(\x-\x')^\bot = (-(x_2-x_2'),x_1-x_1')$, and
$$
\left|(-i\nabla_\x-\A(\x)) \exp(-tH)(\x,\x')\right| = \frac{b |\x-\x'|}{2} \sqrt{\tanh^{-2}(bt) +1}
\left|\exp(-tH)(\x,\x')\right| \,.
$$
Using that $a\geq \tanh a$ for all $a\geq 0$ and that
$$
c:= \sup_{a\geq 0} \frac{a^2}{\sinh a} \sqrt{\tanh^{-2}a +1} <\infty \,,
$$
we find that
\begin{equation*}
 \label{eq:heatderiv}
\left|(-i\nabla_\x-\A(\x)) \exp(-tH)(\x,\x')\right| \leq \frac{c |\x-\x'|}{8\pi t^2} \exp(-\frac{|\x-\x'|^2}{4t}) \,.
\end{equation*}
We will compare this bound with
$$
\nabla_\x \exp(t\Delta)(\x,\x') = -\frac{\x-\x'}{8\pi t^2} \exp(-\frac{|\x-\x'|^2}{4t}) \,.
$$
{}From the last two relations we deduce that for any $z$ with $\Re z<0$
\begin{align*}
 \left|(-i\nabla_\x-\A(\x)) \Green_0(\x,\x',z)\right| 
= & \left|\int_0^\infty dt\, e^{tz} (-i\nabla_\x-\A(\x)) \exp(-tH)(\x,\x') \right| \\
\leq & c |\x-\x'| \int_0^\infty dt\, e^{t \Re z} \frac{1}{8\pi t^2} \exp(-\frac{|\x-\x'|^2}{4t}) \\
= & c \left|\nabla_\x (-\Delta-\Re z)^{-1}(\x,\x') \right| \,.
\end{align*}
Using the explicit form of $\nabla (-\Delta-\Re z)^{-1}$ as convolution with a Bessel function and the properties of that function we deduce that for some $C>0$
$$
\left|(-i\nabla_\x-\A(\x)) \Green_0(\x,\x',-1)\right| \leq C |\x-\x'|^{-1} e^{-|\x-\x'|/2} \,.
$$
(This is not optimal, but sufficient for our purposes.)

Now for arbitrary $z\in\mathcal D$ we write, using the resolvent identity,
\begin{align*}
(-i\nabla_\x-\A(\x)) \Green_0(\x,\x',z) = & (-i\nabla_\x-\A(\x)) \Green_0(\x,\x',-1) \\
& +(z+1) \int_{\R^2} d\y \, (-i\nabla_\x-\A(\x)) \Green_0(\x,\y,-1) \Green_0(\y,\x',z) \,.
\end{align*}
This, together with the estimate on $\Green_0(\y,\x',z)$ from Proposition \ref{diamag} and some simple estimates, completes the proof.
\end{proof}


\section{Existence and  continuity of certain integral kernels}\label{apendixD}

Let $H_D$ be as in the previous section, but defined on the whole
space $\R^2$.

\begin{proposition}\label{propexkern}~\\
For any  $f$ in $\mathbb S(\mathbb R)$, the integral kernel 
 $f(H_D)(\x,\x')$  of  $f(H_D)$ belongs to  $C^0(\mathbb
R^2\times \mathbb R^2)$.  
\end{proposition}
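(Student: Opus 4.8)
A first, natural attempt is to apply the Helffer--Sj\"ostrand formula \eqref{azecea6} with an almost analytic extension $f_{a,N}$ satisfying \eqref{apatra2} for $N$ large and to move the $z$-integration inside, which produces the candidate kernel $\frac1\pi\int_{\mathcal D}\frac{\partial f_{a,N}}{\partial\overline z}\,\Green(\x,\x',z)\,dx\,dy$. By Proposition~\ref{diamag} this integral converges absolutely and locally uniformly on $\{\x\neq\x'\}$, but the bound there carries a factor $1+|\ln|\x-\x'||$, so it cannot control the kernel across the diagonal. The plan is therefore to split off two resolvents and to work with Hilbert space valued maps instead of pointwise kernel bounds. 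Since $H_D\ge0$ and $f\in\mathscr S(\R)$, the function $g(\lambda):=(1+\lambda)^2f(\lambda)$ is again Schwartz, so $g(H_D)$ is bounded and, by the functional calculus,
\begin{equation}\label{eq:plan-fact}
f(H_D)=(H_D+1)^{-1}\,g(H_D)\,(H_D+1)^{-1}\,.
\end{equation}
Write $\Green_{-1}(\x,\y)$ for the kernel of $(H_D+1)^{-1}$. By the diamagnetic inequality $|\Green_{-1}(\x,\y)|$ is dominated by the kernel of $(-\Delta+1)^{-1}$ on $\R^2$, which is $O(|\ln|\x-\y||)$ near the diagonal and exponentially decaying; in particular $\Green_{-1}(\cdot,\x)\in L^2(\R^2)$ with $\|\Green_{-1}(\cdot,\x)\|_{L^2}$ bounded uniformly in $\x$, and by \eqref{azecea14} the tails $\int_{|\y-\x|>R}|\Green_{-1}(\y,\x)|^2\,d\y$ are uniformly small in $\x$. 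Representing $(H_D+1)^{-1}\phi=\int\phi(\x)\,\Green_{-1}(\cdot,\x)\,d\x$ as a Bochner integral in $L^2$ for $\phi\in C_0^\infty(\R^2)$ (and likewise for $\psi$), and using boundedness of $g(H_D)$, one obtains from \eqref{eq:plan-fact}, for all $\phi,\psi\in C_0^\infty(\R^2)$,
\begin{equation}\label{eq:plan-ker}
\langle\phi,f(H_D)\psi\rangle=\int\!\!\int\overline{\phi(\x)}\,\psi(\x')\,K_f(\x,\x')\,d\x\,d\x'\,,\qquad
K_f(\x,\x'):=\bigl\langle\,\Green_{-1}(\cdot,\x)\,,\,g(H_D)\,\Green_{-1}(\cdot,\x')\,\bigr\rangle_{L^2}\,.
\end{equation}

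Since $g(H_D)$ is bounded and the inner product is jointly continuous, $K_f$ is continuous on $\R^2\times\R^2$ — and, by density of $C_0^\infty$, is then the integral kernel of $f(H_D)$ — as soon as the map $\x\mapsto\Green_{-1}(\cdot,\x)\in L^2(\R^2)$ is continuous. To prove this, use self-adjointness of $(H_D+1)^{-1}$, i.e.\ $\overline{\Green_{-1}(\x,\y)}=\Green_{-1}(\y,\x)$, together with Fubini (legitimate, since all rows and columns occurring lie in $L^2$), to get
\begin{equation}\label{eq:plan-l2}
\|\Green_{-1}(\cdot,\x)-\Green_{-1}(\cdot,\x_0)\|_{L^2}^2=G^{(2)}(\x,\x)+G^{(2)}(\x_0,\x_0)-2\,\Re\,G^{(2)}(\x,\x_0)\,,
\end{equation}
where $G^{(2)}$ denotes the kernel of $(H_D+1)^{-2}$. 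Hence the desired $L^2$-continuity follows once $G^{(2)}$ is known to be a continuous function on $\R^2\times\R^2$.

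For the latter I would use subordination to the heat semigroup: $(H_D+1)^{-2}=\int_0^\infty t\,e^{-t}\,e^{-tH_D}\,dt$, whence $G^{(2)}(\x,\x')=\int_0^\infty t\,e^{-t}\,p_t(\x,\x')\,dt$ with $p_t$ the heat kernel of $H_D$. By the diamagnetic inequality $|p_t(\x,\x')|\le(4\pi t)^{-1}$, so the integrand is bounded by $e^{-t}/(4\pi)$, independently of $(\x,\x')$ and integrable in $t$; since $p_t$ is jointly continuous in $(\x,\x')$ for each fixed $t>0$, dominated convergence shows that $G^{(2)}$ is continuous (and bounded). Inserting this into \eqref{eq:plan-l2} gives the $L^2$-continuity of $\x\mapsto\Green_{-1}(\cdot,\x)$, hence the continuity of $K_f$ via \eqref{eq:plan-ker}.

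The main obstacle is exactly the input used last: getting genuine continuity — not just the pointwise bound of Proposition~\ref{diamag} — of the kernel of $(H_D+1)^{-2}$, which is what forces the detour through the heat semigroup. Joint continuity of $p_t(\cdot,\cdot)$ for $t>0$ is a standard regularity property of the magnetic heat semigroup (immediate from parabolic regularity when $\A$ is regular enough, as in our applications, and available in general from the diamagnetic inequality); everything else in the argument is bookkeeping with that inequality, \eqref{azecea14} and Fubini.
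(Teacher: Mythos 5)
Your proof is correct in substance and takes a genuinely different route from the paper's. Both arguments begin with the same factorization $f(H_D)=(H_D+1)^{-1}\tilde f(H_D)(H_D+1)^{-1}$ with $\tilde f(\lambda)=(1+\lambda)^2f(\lambda)$ (your $g$). From there the paper sandwiches $\tilde f(H_D)$ with exponential weights $e^{\mp\epsilon\langle\cdot\rangle}$ so that the middle factor becomes Hilbert--Schmidt, approximates its kernel by finite sums of separable kernels, and reduces the continuity to Lemma~\ref{lemahoelder}, which rests on $\chi(H_D+1)^{-1}:L^2\to H^2$ and the Sobolev embedding $H^2(\R^2)\hookrightarrow C^0$, with the exponentials controlled by the Combes--Thomas estimate \eqref{expdek2}. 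You instead read the kernel as $K_f(\x,\x')=\langle\Green_{-1}(\cdot,\x),\tilde f(H_D)\Green_{-1}(\cdot,\x')\rangle$, reduce everything to $L^2$-continuity of $\x\mapsto\Green_{-1}(\cdot,\x)$, express the relevant norms via the kernel of $(H_D+1)^{-2}$, and prove continuity of that kernel by subordination to the heat semigroup and dominated convergence, using the diamagnetic bound $|p_t(\x,\x')|\le(4\pi t)^{-1}$. The two approaches thus trade one regularity input for another: the paper's relies on elliptic $H^2$-regularity plus Sobolev embedding, yours on joint continuity of the magnetic heat kernel $p_t(\cdot,\cdot)$ for fixed $t>0$. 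Both are classical, but you are slightly too casual in asserting that the latter is ``available in general from the diamagnetic inequality'': the diamagnetic inequality gives the pointwise Gaussian bound (and hence your domination for DCT) but does \emph{not} by itself give continuity of $p_t$ in $(\x,\x')$; that needs parabolic regularity (fine for the smooth $\A$ used in the paper) or, for rough $\A$, a separate argument. One further small point to tighten: since $\Green_{-1}(\cdot,\x)$ is a priori only defined for a.e.\ $\x$, to make identity \eqref{eq:plan-l2} meaningful for \emph{every} $\x$ you should fix a canonical version, e.g.\ define $\Green_{-1}(\cdot,\x):=\int_0^\infty e^{-t}p_t(\cdot,\x)\,dt$ as an $L^2$-valued Bochner integral (which is well-defined and continuous in $\x$ directly from the heat-kernel bounds). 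With these two clarifications your argument is complete and is, if anything, a bit more self-contained than the paper's.
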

\begin{proof}
We start by establishing the following lemma.

\begin{lemma}\label{lemahoelder} ~\\
Let $U$ be  a compact set
in $\R^2$ and  $\chi$ be  a smooth and compactly supported function  which
  is equal to $1$ on $U$. Then, for sufficiently small $\epsilon >0$, we
  have that the operator $\chi(\cdot)(H_D+1)^{-1}e^{\epsilon\langle
    \cdot\rangle}$ maps $L^2(\R^2)$ into $H^2(\R^2)$. Moreover, for
  every $\psi\in L^2(\R^2)$, the function 
$\chi(\cdot) (H_D+1)^{-1}e^{\epsilon\langle
    \cdot\rangle}\psi$ is continuous on $U$. 
\end{lemma}
\begin{proof} Let us write: 
\begin{align}\label{exkern1}
\chi(\cdot)(H_D+1)^{-1}e^{\epsilon\langle
    \cdot\rangle}=\left \{ e^{\epsilon\langle \cdot\rangle}\chi(\cdot)\;
  (H_D+1)^{-1}\right\}\left \{ (H_D+1)e^{-\epsilon\langle
    \cdot\rangle}(H_D+1)^{-1}e^{\epsilon\langle \cdot\rangle}\right\}.
\end{align}
A standard argument shows that 
$\chi(\cdot)(H_D+1)^{-1}$ maps $L^2(\R^2)$ into $H^2(\R^2)$. The first factor 
$e^{\epsilon\langle \cdot\rangle}$ on the left remains bounded because $\chi$ has compact
support. Now if $\epsilon$ is small enough, the estimate
\eqref{expdek2} implies that $(H_D+1)e^{-\epsilon\langle
    \cdot\rangle}(H_D+1)^{-1}e^{\epsilon\langle \cdot\rangle}$ is
  bounded on $L^2(\R^2)$. This proves the first statement of the
  lemma. The second statement is a consequence of Sobolev embeddings. \end{proof}

We now can start the  proof of the proposition. If we introduce 
$\tilde{f}(t):=(t+1)^2f(t)$, then we write 
$$\chi(\cdot) f(H_D)\chi(\cdot)=\left\{\chi(\cdot)(H_D+1)^{-1}e^{\epsilon\langle \cdot\rangle}\right\}
e^{-\epsilon\langle \cdot\rangle}\tilde{f}(H_D)e^{-\epsilon\langle
  \cdot\rangle}\left\{e^{\epsilon\langle
    \cdot\rangle}(H_D+1)^{-1}\chi(\cdot)\right\}.$$
The operator $W:=e^{-\epsilon\langle \cdot\rangle}\tilde{f}(H_D)e^{-\epsilon\langle
  \cdot\rangle}$ is Hilbert-Schmidt, because we can write it as
$$W=\left \{e^{-\epsilon\langle \cdot\rangle}(H_D+1)^{-1}\right\} 
(H_D+1)\tilde{f}(H_D)e^{-\epsilon\langle
  \cdot\rangle}$$
and use the fact that $e^{-\epsilon\langle \cdot\rangle}(H_D+1)^{-1}$
is Hilbert-Schmidt (use \eqref{hcadoua4}). 

Hence $W$ is an integral operator, with an integral kernel
in $ L^2(\R^4)$.\\
Therefore we are left with the investigation of the continuity 
 on $U\times U$ of the integral kernel defined by:
$$f(H_D)(\x,\x'):=\int \int \Green(\y,\x;-1) e^{\epsilon \langle \y\rangle }
W(\y,\y')e^{\epsilon \langle
  \y' \rangle }\Green(\y',\x';-1)d\y\; d\y'.$$
Using \eqref{hcadoua4}, that $W$ is in $L^2$  and the Cauchy-Schwarz 
inequality, we obtain 
\begin{align}
  \label{eq:1}
  |f(H_D)(\x,\x')|\leq C \|W\|_{B_2},
\end{align}
uniformly in $\x,\x'\in U$,
where the constant $C$ only depends on the diameter of $U$. 
Since $W$ is Hilbert-Schmidt, we can approximate 
$W(\y,\y')$ (in Hilbert-Schmidt norm) with a finite sum of the type $\sum k_j(\y)h_j(\y')$ where
the $k$'s and $h$'s are $L^2$-functions. 
Applying the inequality \eqref{eq:1} to the difference of the two
resulting operators, we see that we can 
 approximate the function
$f(H_D)(\cdot,\cdot)$ (in the uniform norm on $U\times U$) with 
functions of the type 
$$\sum_{j=1}^n \{(H_D+1)^{-1}e^{\epsilon \langle \cdot \rangle } k_j\}(\x)
\overline{\{(H_D+1)^{-1}e^{\epsilon \langle \cdot \rangle }\overline{h}_j\}(\x')},$$
which from Lemma \ref{lemahoelder} are  continuous on compacts. 
Hence $f(H_D)(\cdot,\cdot)$ is  continuous on its variables on
$U\times U$. Since $U$ was arbitrary, the proof is over. 
\end{proof}

\noindent {\bf Acknowledgments.}  H.C. acknowledges support from the Danish 
F.N.U. grant {\it  Mathematical Physics}. 
S.F. was supported by the Lundbeck Foundation, the Danish Natural
Science Research Council and by the European 
Research Council under the European Community's Seventh Framework 
Program (FP7/2007--2013)/ERC grant agreement 202859.

\end{document}